\pdfoutput = 1
\documentclass[english]{article}
\usepackage{geometry}
\geometry{verbose,tmargin=1in,bmargin=1in,lmargin=1in,rmargin=1in}
\usepackage[small]{caption}
\usepackage{graphicx}
\usepackage{amsmath}
\usepackage{amsthm}
\usepackage{amssymb}
\usepackage{booktabs}
\usepackage{algorithm}
\usepackage{algorithmic}
\usepackage{multirow,array}
\usepackage{verbatim} 


\newtheorem{definition}{Definition}
\newtheorem{theorem}{Theorem}
\newtheorem{assumption}{Assumption}
\newtheorem{example}{Example}
\newtheorem{remark}{Remark}
\newtheorem{proposition}{Proposition}
\newtheorem{fact}{Fact}
\newtheorem{lemma}{Lemma}



\title{Model and Reinforcement Learning for Markov Games with Risk Preferences }

\begin{document}

\author{{{Wenjie Huang}},\textsuperscript{1,2}
	{{Pham Viet Hai}},\textsuperscript{3}
	{{William B. Haskell}} \textsuperscript{4}
	\\
	\\
	\textsuperscript{1}{{\small{Shenzhen Research Institute of Big Data (SRIBD)}}}\\
	\textsuperscript{2}{{\small{Institute for Data and Decision Analysis, The Chinese University of Hong Kong, Shenzhen}}}\\
	\textsuperscript{3}{{\small{Department of Computer Science, School of Computing, National University of Singapore (NUS)}}}\\
	\textsuperscript{4}{{\small{Supply Chain and Operations Management Area, Krannert School of Management, Purdue University}}} \\
	{\small{wenjiehuang@cuhk.edu.cn,
	dcspvh@nus.edu.sg, 
	whaskell@purdue.edu}}}

\maketitle

\begin{abstract}
	We motivate and propose a new model for non-cooperative Markov
	game which considers the interactions of risk-aware players. This model characterizes the time-consistent dynamic “risk” from \emph{both} stochastic state transitions (inherent to the game) and
	randomized mixed strategies (due to all other players). An appropriate risk-aware equilibrium concept is proposed and the existence of such equilibria is demonstrated in stationary strategies by an application of Kakutani’s fixed point theorem. We further propose a
	simulation-based $Q$-learning type algorithm for risk-aware equilibrium computation. This
	algorithm works with a special form of minimax risk measures which can naturally be written as saddle-point
	stochastic optimization problems, and covers many widely investigated risk measures.
	Finally, the almost sure convergence of this simulation-based algorithm to an equilibrium is demonstrated
	under some mild conditions. Our numerical experiments on a two player queuing game validate the properties of our model and algorithm, and demonstrate their worth and applicability in real life competitive
	decision-making.
	\\
	\\
	\emph{Keywords}: Markov games; time-consistent risk preferences; fixed
	point theorem; $Q$-learning
\end{abstract}

\section{Introduction}

Markov games (a.k.a stochastic games) generalize Markov decision processes (MDPs) to the multi-player
setting. In the classical case, each player seeks to minimize his
expected costs. In a corresponding equilibrium, no player can decrease
his expected costs by changing his strategy. We often want to compute
equilibria to predict the outcome of the game and understand the
behavior of the players.

In this paper, we directly account for the \textit{risk preferences}
of the players in a Markov game. Informally, risk aversion is at least weakly preferring a gamble with smaller variance when payoffs are the same. Risk-averse players give more attention to low probability but high cost events compared to risk-neutral players. Models for the risk preferences
of a single agent are well established \cite{ADEH,Ruszczynski2006}
for the static problems and \cite{Ruszczynski2010,Shen2013}
for the dynamic case. We extend these ideas to general
sum Markov games and extend the framework of Markov risk measures \cite{Ruszczynski2010,Shen2013} to the multi-agent setting.  Our model specifically addresses the risk from the stochastic state transitions as well as the risk from the randomized strategies of the other players. The traditional multilinear formulation approach \cite{Aghassi2006, Kardes2011}
for computing equilibria in robust games fails in our settings, because our model has an intrinsic bilinear term due to the product of probabilities (the state transitions and mixed strategies) which leads to computational intractability.
Thus, it is necessary to develop an alternative algorithm to compute equilibria. 

\paragraph{Risk Preferences}

Expected utility theory \cite{Engelmann2007,Thomas2016,JVNOM} is
a highly developed framework for modeling risk preferences. Yet, some experiments \cite{Levin2006} show that real human behavior may violate
the independence axiom of expected utility theory. Risk measures (as
developed in \cite{ADEH,Ruszczynski2006}) do not require the independence
axiom and have favorable properties for optimization. 

In the dynamic setting, \cite{Ruszczynski2010,Shen2013} develop the class of Markov (a.k.a. dynamic/nested/iterated) risk measures and establish
their connection to time-consistency. This class of risk measures
is notable for its recursive formulation, which leads to dynamic programming
equations. Practical computational schemes for solving large-scale risk-aware MDPs have been proposed, for instance, $Q$-learning type algorithms \cite{Huang2017,Huang2018a,Jiang2017} and simulation-based fitted value iteration \cite{Yu2018}.  

\paragraph{Risk-sensitive/Robust Games}
Risk-sensitive games have already been considered in \cite{Basu2017,bauerle2017zero,ghosh2016zero,Jose2018,Klompstra2000}.
Risk-sensitivity refers to the specific certainty equivalent $\left(1/\theta\right)\ln\left(\mathbb{E}\left[\exp\left(\theta\,X\right)\right]\right)$
where $\theta>0$ is the risk sensitivity parameter. \cite{Basu2017, ghosh2016zero} focus on zero-sum risk-sensitive games under continuous time setting. 

\textit{Robust} games study ambiguity about costs and/or state
transition probabilities of the game. \cite{Aghassi2006} develop the robust equilibrium concept where each player optimizes
against the worst-case expected cost over the range of model ambiguity.
This paradigm is extended to Markov games in \cite{Kardes2011}, and
the existence of robust Markov perfect equilibria is demonstrated. \cite{Aghassi2006,Kardes2011} formulate robust Markov perfect equilibria as multilinear systems. 

Games with risk preferences are not artificial; rather, they emerge organically
from many real problems. Traffic equilibrium problems with risk-averse
agents are analyzed in \cite{bell2002risk} with non-cooperative game
theory. The preferences of risk-aware adversaries are modeled in Stackelberg
security games in \cite{qian2015robust}, and a computational scheme
for robust defender strategies is presented. 

\paragraph*{Contributions of This Work} We make three main contributions in this paper: 
\begin{enumerate}
	\item We develop a model for risk-aware Markov games where agents
	have time-consistent risk preferences. This model specifically addresses
	\emph{both} sources of risk in a Markov game: (i) the risk from the
	stochastic state transitions and (ii) the risk from the randomized
	strategies of the other players. 
	\item We propose a notion of `risk-aware' Markov perfect equilibria
	for this game. We show that there exist risk-aware equilibria in stationary
	strategies. 
	\item We create a practical simulation-based $Q$-learning type algorithm
	for computing risk-aware Markov perfect equilibria, and we show that
	it converges to an equilibrium almost surely. This algorithm is model-free and so does not require any knowledge of the true model, and thus can search for equilibria purely by observations.
\end{enumerate}

\section{Risk-aware Markov Games}
In this section, we develop risk-aware Markov games. Our game consists of the following ingredients: finite set of players $\mathcal{I}$; finite set of states $\mathcal{S}$; finite set of actions $\mathcal{A}^{i}$ for each player $i\in\mathcal{I}$; strategy profiles $\mathcal{A}:=\times_{i\in\mathcal{I}}\mathcal{A}^{i}$; state-action pairs $\mathcal{K}:=\mathcal{S}\times\mathcal{A}$; transition kernel $P(\cdot|s,\,a)\in\mathcal{P}(\mathcal{S})$ (here $\mathcal{P}(\mathcal{S})$ denotes the distribution over $\mathcal{S}$) for all $(s,\,a)\in\mathcal{K}$, and cost functions $c^{i}: \mathcal{S}\times\mathcal{A}\rightarrow\mathbb{R}$ for all players $i\in\mathcal{I}$.

Each round $t\geq0$ of the game has four steps: (i) first, all
players observe the current state $s_{t}\in\mathcal{S}$; (ii) second,
each player $i\in\mathcal{I}$ chooses $a_{t}^{i}\in\mathcal{A}^{i}$
(all moves are simultaneous and independent, and the corresponding
strategy profile is $a_{t}=\left(a_{t}^{i}\right)_{i\in\mathcal{I}}$);
(iii) third, each player $i\in\mathcal{I}$ realizes cost $c^{i}\left(s_{t},\,a_{t}\right)$;
and (iv) lastly, the state transitions to $s_{t+1}$ according to $P\left(\cdot\,\vert\,s_{t},\,a_{t}\right)$.

We next characterize the players' strategies. 
In this work, we focus on `stationary strategies'. Stationary strategies prescribe
a player the same probabilities over his actions each time the player
visits a certain state, no matter what route he follows to reach that
state. Stationary strategies are more prevalent than normal strategies (which rely on the entire history), due to their mathematical tractability \cite{Vrieze2003,Fink1964,Kardes2011}. Furthermore, the memoryless property of stationary
strategies conforms to real human behavior \cite{Vrieze2003}.

We introduce some additional
notations to characterize stationary strategies $x$. Let $\mathcal{P}(\mathcal{A}^{i})$ denote the distribution over $\mathcal{A}^{i}$. For each
player $i\in\mathcal{I}$ and state $s\in\mathcal{S}$, $x_{s}^{i}\in\mathcal{P}\left(\mathcal{A}^{i}\right)$
is the mixed strategy over actions where $x_{s}^{i}\left(a^{i}\right)$ denotes the probability of choosing $a^{i}$ at state $s$. We define the strategy $x^{i}:=(x_{s}^{i})_{s\in\mathcal{S}}\in\mathcal{X}^{i}:=\times_{s\in\mathcal{S}}\mathcal{P}\left(\mathcal{A}^{i}\right)$
of player $i$, the multi-strategy $x:=\left(x^{i}\right)_{i\in\mathcal{I}}\in\mathcal{X}:=\times_{i\in\mathcal{I}}\mathcal{X}^{i}$
of all players, the complementary strategy $x^{-i}:=(x^{j})_{j\ne i}\in\mathcal{X}^{-i}:=\times_{j\ne i}\mathcal{X}^{j}$,
and the multi-strategy $x_{s}=\left(x_{s}^{i}\right)_{i\in\mathcal{I}}\in\mathcal{X}_{s}:=\times_{i\in\mathcal{I}}\mathcal{P}\left(\mathcal{A}^{i}\right)$
for all players in state $s\in\mathcal{S}$. We sometimes write a
multi-strategy as $x=(u^{i},\,x^{-i})$ to emphasize player $i$'s
strategy $u^{i}$. 

There are two sources of stochasticity in the cost sequence: the stochastic state transitions characterized by the transition kernel $P(\cdot|s,\,a)$, and the randomized mixed strategies of players characterized by $x^{-i}$. In this work, we consider the risk from \emph{both} sources of stochasticity. We begin by constructing the framework for evaluating the risk of sequences of random variables. A \emph{dynamic risk measure} is a sequence of conditional risk
measures each mapping a future stream of random costs into a risk
assessment at the current stage, following the definition of risk
maps from \cite{Shen2013}, and satisfying the stationary and time-consistency
property of \cite[Definition 3]{Ruszczynski2010} and \cite[Definition 1]{Shapiro2016}. We assume each conditional risk measure satisfies three axioms: normalization, convexity, and positive homogeneity , which were originally introduced for static risk measures in the pioneering paper \cite{ADEH}. Here ``convexity'' characterizes the risk-averse behavior of players. From \cite[Definition 1]{Shapiro2016}, a risk-aware optimal policy is \emph{time-consistent} if, the risk of the sub-sequence of random outcome from any future stage is optimized by the resolved policy. In the Appendix, we give explicit definitions of the above three axioms of risk measures, stationary and time-consistency risk preferences, and derivation of recursive evaluation of dynamic risk.

From \cite[Theorem 4]{Ruszczynski2010} and \cite[Proposition 4]{Shapiro2016}, time-consistency allows
for a recursive (iterative) evaluation of risk. The infinite-horizon discounted risk for player $i$ under multi-strategy $x$ will be:
\begin{align}
J_{s_{0}}^{i}(x^{i},\,x^{-i}):=&\rho^{i}(c^{i}(s_{0},\,a_{0})+\gamma\,\rho^{i}(c^{i}(s_{1},\,a_{1}) \nonumber
\\
&+\gamma\,\rho^{i}\left(c^{i}(s_{2},\,a_{2})+\cdot\cdot\cdot\right))),\label{Recursive_risk}
\end{align}
where $\rho^{i}$ is a one-step conditional risk measure that maps random cost from the next stage to current stage, with respect to the joint distribution of randomized mixed strategies and transition kernel. In Eq. (\ref{Recursive_risk}), each $c^{i}(s_{t},\,a_{t}),\,t\geq 1$ is governed by the joint distribution of randomized mixed strategies and transition kernel \[\times_{i\in\mathcal{I}} x_{s_{t}}^{i}(a_{t}^{i})P(s_{t}|s_{t-1},\,a_{t-1}),\] which is defined for fixed $(s_{t-1},\,a_{t-1})$ and for all $s_{t}$ and $a_{t}^{i}$. The initial cost $c^{i}(s_{0},\,a_{0})$ is only governed by the random mixed strategies distribution $\times_{i\in\mathcal{I}} x^{i}_{s_{0}}(a_{0}^{i})$. 

The corresponding best response function for player $i$ is: 
\begin{equation}
\min_{x^{i}\in\mathcal{X}^{i}}\,J_{s_{0}}^{i}(x^{i},\,x^{-i}). \label{MDP}
\end{equation}
Suppose we replace all $\rho^{i}$ with expectation $\mathbb{E}$ in Eq. (\ref{Recursive_risk}) which leads to $\mathbb{E}_{s}^{x}\left[\sum_{t=0}^{\infty}\gamma^{t}c^{i}\left(s_{t},\,a_{t}\right)\right]$, where $\mathbb{E}_{s}^{x}$ denotes expectation with respect to multi-strategies $x$, then Problem (\ref{MDP}) will become risk-neutral. Thus our formulation recovers the risk-neutral game as a special case. 

Denote the ingredients of game $\left\{ J_{s}^{i}(x^{i},\,x^{-i})\right\} _{s\in\mathcal{S},\,i\in\mathcal{I}}$ as $\left\{\mathcal{I},\,\mathcal{S},\,\mathcal{A},\,P,\,c,\,\rho\right\}$. In line with the classical definition of
Markov perfect equilibrium in \cite{Fink1964}, we now define risk-aware
Markov perfect equilibrium.

\begin{definition}
	\label{Definition 2.2} (Risk-aware Markov perfect equilibrium) A
	multi-strategy $x\in\mathcal{X}$ is a risk-aware Markov perfect equilibrium
	for $\left\{\mathcal{I},\,\mathcal{S},\,\mathcal{A},\,P,\,c,\,\rho\right\}$
	if
	\begin{equation}
	J_{s}^{i}(x^{i},\,x^{-i})\leq J_{s}^{i}(u^{i},\,x^{-i}),\,\forall s\in\mathcal{S},\,u^{i}\in\mathcal{X}^{i},\,i\in\mathcal{I}.\label{Markov_equilibrium}
	\end{equation}
\end{definition}

In Definition \ref{Definition 2.2}, each player $i\in\mathcal{I}$
implements a (risk-aware) stationary best response given the stationary
complementary strategy $x^{-i}$. It also states that $x$ is an equilibrium if and only if no player can reduce his discounted risk by unilaterally changing his strategy. 

\paragraph*{Existence of Stationary Equilibria} 

We prove the existence of stationary equilibira in this section. Let $v^{i}$ denote player $i$'s \textit{value function}, which is an estimate of the discounted risk starting from the
next state $S'$. For each player $i$, the \textit{value} of the stationary strategy
$x\in\mathcal{X}$ in state $s\in\mathcal{S}$ is defined to be $v^{i}\left(s\right):=J_{s}^{i}(x)$,
and $v^{i}:=\left(v^{i}\left(s\right)\right)_{s\in\mathcal{S}}$ is
the entire value function for player $i$. The space of value functions for all players is $\mathcal{V}:=\times_{i\in\mathcal{I}}\mathbb{R}^{|\mathcal{S}|}$, equipped with the supremum norm $\|v\|_{\infty}:=\max_{s\in\mathcal{S},\,i\in\mathcal{I}}|v^{i}\left(s\right)|$. Eq. (\ref{Recursive_risk}) states that
each player must evaluate the stage-wise risk of random variables on $\mathcal{A}\times\mathcal{S}$, formulated as 
\begin{equation}
c^{i}\left(s,\,A \right)+\gamma\,v^{i}\left(S'\right), \label{random_variable}
\end{equation}
where $A$ is the random strategy profile chosen from
$\mathcal{A}$ according to $x_s$, and $S'$ is the random next state visited (which first depends on $x$ through
the random choice of strategy profile $a$, and then depends on the transition
kernel $P\left(\cdot\,\vert\,s,\,a\right)$ after $a\in\mathcal{A}$
is realized). 

Recall that in state $s\in\mathcal{S}$, the probability that $a=(a^{i})_{i\in\mathcal{I}}\in\mathcal{A}$ is chosen and then the
system transitions to state $k\in\mathcal{S}$ is $\left(\times_{i\in\mathcal{I}}x_{s}^{i}\left(a^{i}\right)\right)P(k\,\vert\,s,\,a)$. The probability distribution of the strategy profile $a\in\mathcal{A}$ and next state visited $k\in\mathcal{S}$ is given by the matrix
\begin{align}
&P_{s}\left(u_{s}^{i},\,x_{s}^{-i}\right) \nonumber
\\
:=&\left[u_{s}^{i}\left(a^{i}\right)\left(\times_{j\ne i}x_{s}^{j}\left(a^{j}\right)\right)P(k\,\vert\,s,\,a)\right]_{\left(a,\,k\right)\in\mathcal{A}\times\mathcal{S}},\label{Ps}
\end{align}
where we explicitly denote the dependence on the multi-strategy $x_{s}=\left(u_{s}^{i},\,x_{s}^{-i}\right)$
in state $s$.  For simplicity, we often write $P_{s}$ instead of  $P_{s}\left(u_{s}^{i},\,x_{s}^{-i}\right)$ when it is not necessary to indicate the dependence on $(u, x)$. 

Let  $C_{s}^{i}\left(v^{i}\right):=\left(c^{i}\left(s,\,A\right)+\gamma\,v^{i}\left(S'\right)\right)$ be the random cost-to-go for player $i$ at state $s$. Based on the Fenchel-Moreau representation of risk \cite{follmer2002convex,Ruszczynski2006,Guigues2016}, the convex risk of random cost-to-go denoted by $\psi_{s}^{i}(u_{s}^{i},\,x_{s}^{-i},\,v^{i})$ can be computed as the worst-case expected cost-to-go
\begin{align*}
\psi_{s}^{i}(u_{s}^{i},\,x_{s}^{-i},\,v^{i}):=& \rho^{i}\left(c^{i}(s\,,A)+\gamma\,v^{i}\left(S'\right)\right) 
\\
=& \sup_{\mu\in\mathcal{M}_{s}^{i}(P_{s})}\left\{ \langle\mu,\,C_{s}^{i}\left(v^{i}\right)\rangle-b_{s}^{i}(\mu)\right\},
\end{align*} 
where $\left\{ \mathcal{M}_{s}^{i}(P_{s})\right\} _{s\in\mathcal{S},\,i\in\mathcal{I}}\subset\mathcal{P}(\mathcal{A}\times\mathcal{S})$
is the risk envelope of $\rho^{i}$ that depends on the distribution $P_{s}$, and $\left\{b_{s}^{i}\right\} _{s\in\mathcal{S},\,i\in\mathcal{I}}\text{ : }\mathcal{P}\left(\mathcal{A}\times\mathcal{S}\right)\rightarrow\mathbb{R}$ are convex functions satisfying $\inf_{\mu\in\mathcal{P}(\mathcal{A}\times\mathcal{S})}b_{s}^{i}(\mu)=0$ for all $i\in\mathcal{I}$ and $s\in\mathcal{S}$. To connect to risk-neutral games, we can just choose all $\mathcal{M}_{s}^{i}(P_{s})$ to be singletons $\{P_{s}\left(u_{s}^{i},\,x_{s}^{-i}\right)\}$ and $b_{s}^{i}(\mu)=0$ for all $\mu\in \mathcal{M}_{s}^{i}(P_{s})$, $i\in\mathcal{I}$, and $s\in\mathcal{S}$.

We next introduce further assumptions on $\rho^{i}$,
$\left\{ \mathcal{M}_{s}^{i}(P_{s})\right\} _{s\in\mathcal{S},\,i\in\mathcal{I}}$, and $\left\{b_{s}^{i}\right\} _{s\in\mathcal{S},\,i\in\mathcal{I}}$,
that will lead to the existence of stationary equilibria. 
\begin{assumption}
	\label{Assu:Fenchel} (i) All $\rho^{i}$ are law invariant, $\rho^{i}(X)=\rho^{i}(Y)$
	for all $X=_{D}Y$, where $=_{D}$ denotes equality in distribution.
	
	(ii) $\left\{ \mathcal{M}_{s}^{i}(P_{s})\right\} _{s\in\mathcal{S},\,i\in\mathcal{I}}\subset\mathcal{P}\left(\mathcal{A}\times\mathcal{S}\right)$
	is a collection of set-valued mappings where $\mathcal{M}_{s}^{i}(P_{s})$
	are closed and polyhedral convex for all $P_{s}$. Explicitly, there exists $M\geq1$ linear constraints and $[M]:=\{1,2,...,M\}$. Then $\mathcal{M}_{s}^{i}(P_{s})$ is defined as:  
	\begin{align}
			\left\{\mu\in\mathbb{R}^{|\mathcal{A}||\mathcal{S}|}:\begin{array}{cc}
			A_{s,\,m}^{i}\,\mu+f_{m}(P_{s})\geq h_{s,\,m}^{i},  m\in [M],\\
			e^{T}\mu=1,\\
			\mu\geq0,
			\end{array}\right\} \label{Linear formulation}
			\end{align}
	where $A_{s,\,m}^{i}$ are matrices, $f_{m}$ are linear functions in $P_{s}$ and $h_{s,\,m}^{i}$ are constants.
	
	(iii) All $\left\{b_{s}^{i}\right\} _{s\in\mathcal{S},\,i\in\mathcal{I}}$ are convex and Lipschitz continuous.
\end{assumption}

Formulation (\ref{Linear formulation}) explains how
$\mathcal{M}_{s}^{i}(P_{s})$ depends on $P_{s}$. In addition, if $f_{m}$
depends linearly on $P_{s}$, then $f_{m}$ also depends linearly
on $u_{s}^{i}$ and $x_{s}^{-i}$ by definition of $P_{s}$ in Eq. (\ref{Ps}).
In computational terms, this assumption is close
to \cite{Kardes2011} which assumes polyhedral uncertainty sets for
the transition probabilities in its robust Markov game model. This
assumption also corresponds to the one in \cite{Ferris2018} about
representation of agent risk preferences. 
\begin{example}
	Conditional value-at-risk (CVaR) is a widely investigated coherent risk measure that computes the conditional expectation of random losses exceeding a threshold with probability $\alpha$. 
	
	CVaR can be constructed from system (\ref{Linear formulation}) when we choose $M = 1$, $A_{s, m}^{i} = -e$, $f_{m}(P_{s}) = P_{s}/(1-\alpha^{i})$, and $h_{s, m}^{i} = 0$ with $m = 1$. 
\end{example}

The best response function $v_{\ast}$ corresponding to a risk-aware Markov perfect equilibrium, for all $s\in\mathcal{S},\,i\in\mathcal{I}$, satisfies
\begin{align}
v_{\ast}^{i}\left(s\right)=\, &  \min_{u_{s}^{i}\in\mathcal{P}\left(\mathcal{A}^{i}\right)}\,J_{s}^{i}(u^{i},\,x^{-i}) \label{eq:Markov_equilibrium} \nonumber
\\
= \, & \min_{u_{s}^{i}\in\mathcal{P}\left(\mathcal{A}^{i}\right)}\,\psi_{s}^{i}(u_{s}^{i},\,x_{s}^{-i},\,v_{\ast}^{i}),
\\
x_{s}^{i}\in\, & \arg\min_{u^{i}\in\mathcal{X}^{i}}\,J_{s}^{i}(u^{i},\,x^{-i}), \label{eq:Markov_equilibrium-1}
\end{align}
and $v^{i}_{\ast}$ may not be unique. In the mapping $C_{s}^{i}\left(v^{i}\right)$
on $\mathcal{A}\times\mathcal{S}$, the players control
the \textit{distribution} on $\mathcal{P}\left(\mathcal{A}\times\mathcal{S}\right)$
through their mixed strategies. Eqs. (\ref{eq:Markov_equilibrium}) - (\ref{eq:Markov_equilibrium-1})
together simply restate Eq. (\ref{Markov_equilibrium}). However,
Eqs. (\ref{eq:Markov_equilibrium}) - (\ref{eq:Markov_equilibrium-1})
give a computational recipe that can be encoded into an operator on
multi-strategies. We define this operator $\Phi$ on $\mathcal{X}$:
\begin{align}
& \Phi(x) :=\, \Big\{ \tilde{q}\in\mathcal{X}:\,\tilde{q}_{s}^{i}\in\arg\min_{u_{s}^{i}\in\mathcal{P}(\mathcal{A}^{i})}\psi_{s}^{i}(u_{s}^{i},\,x_{s}^{-i},\,v_{\ast}^{i}),\nonumber \\
& v_{\ast}^{i}\left(s\right)=\min_{u
_{s}^{i}\in\mathcal{P}\left(\mathcal{A}^{i}\right)}\psi_{s}^{i}(u_{s}^{i},\,x_{s}^{-i},\,v_{\ast}^{i}),\,\forall s\in\mathcal{S},\,i\in\mathcal{I}\Big\}.\label{Phi}
\end{align}

This operator returns the set of strategies for every player that are best
responses to all other players' strategies. 

The following Theorem 1 briefly describes the existence of stationary strategies with detailed proof in the Appendix.
\begin{theorem}
\label{Theorem 3.11} Suppose Assumption \ref{Assu:Fenchel} holds, then the game $\left\{ \mathcal{I},\,\mathcal{S},\,\mathcal{A},\,P,\,c,\,\rho\right\} $
has an equilibrium in stationary strategies.
\end{theorem}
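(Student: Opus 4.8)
The plan is to recast the equilibrium condition as a fixed point of the best-response correspondence $\Phi$ defined in Eq.~(\ref{Phi}). By Eqs.~(\ref{eq:Markov_equilibrium})--(\ref{eq:Markov_equilibrium-1}), a multi-strategy $x$ satisfies Definition~\ref{Definition 2.2} if and only if $x\in\Phi(x)$, so it suffices to verify the hypotheses of Kakutani's fixed point theorem for $\Phi$ on $\mathcal{X}$. The domain is immediate: $\mathcal{X}=\times_{i\in\mathcal{I}}\times_{s\in\mathcal{S}}\mathcal{P}(\mathcal{A}^{i})$ is a finite product of probability simplices over finite action sets, hence a nonempty, compact, convex subset of a finite-dimensional Euclidean space.

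First I would make the inner value function well defined. For a fixed complementary strategy $x^{-i}$, player $i$ faces a single-agent risk-aware MDP with Bellman operator $(T^{i}_{x^{-i}}v^{i})(s)=\min_{u_{s}^{i}\in\mathcal{P}(\mathcal{A}^{i})}\psi_{s}^{i}(u_{s}^{i},\,x_{s}^{-i},\,v^{i})$. Since the Fenchel--Moreau representation writes $\psi_{s}^{i}$ as a supremum over probability measures $\mu$ with $e^{T}\mu=1$ and $\mu\geq0$, the operator inherits monotonicity and translation invariance; together with $\gamma<1$ this makes $T^{i}_{x^{-i}}$ a $\gamma$-contraction on $(\mathcal{V},\|\cdot\|_{\infty})$, so Banach's theorem gives a unique fixed point $v_{\ast}^{i}=v_{\ast}^{i}(x^{-i})$, and a uniform-contraction argument yields continuity of $x^{-i}\mapsto v_{\ast}^{i}$.

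Next I would record the regularity of the one-step objective. Under Assumption~\ref{Assu:Fenchel}, the risk envelope $\mathcal{M}_{s}^{i}(P_{s})$ is a bounded polyhedron whose right-hand sides depend linearly (through $f_{m}$), hence continuously, on $P_{s}$ and therefore on $(u_{s}^{i},\,x_{s}^{-i})$ via Eq.~(\ref{Ps}); with $b_{s}^{i}$ Lipschitz, Berge's maximum theorem gives joint continuity of $\psi_{s}^{i}$. Combined with continuity of $v_{\ast}^{i}$, the map $(u_{s}^{i},\,x_{s}^{-i})\mapsto\psi_{s}^{i}(u_{s}^{i},\,x_{s}^{-i},\,v_{\ast}^{i}(x^{-i}))$ is continuous, which immediately makes $\Phi(x)$ nonempty (minimizing a continuous function over the compact simplex) and makes $\Phi$ upper hemicontinuous with closed graph (another application of Berge to the argmin, patched across the finite index set $\mathcal{S}\times\mathcal{I}$).

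The main obstacle is convex-valuedness of $\Phi$, i.e.\ showing that $\arg\min_{u_{s}^{i}}\psi_{s}^{i}(u_{s}^{i},\,x_{s}^{-i},\,v_{\ast}^{i})$ is convex for each fixed $x^{-i}$. This is exactly where the risk-aware setting departs from the risk-neutral one: classically the objective is linear in a player's own strategy, but here the player's own mixing enters the controlled distribution $P_{s}(u_{s}^{i},\,x_{s}^{-i})$, and a coherent risk functional evaluated \emph{as a function of that distribution} is concave rather than convex, so $\psi_{s}^{i}(\cdot,\,x_{s}^{-i},\,v_{\ast}^{i})$ need not be convex or even quasiconvex in $u_{s}^{i}$, and a naive argmin of a concave objective over a simplex can fail to be convex. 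To close this gap I would exploit the precise polyhedral-linear structure of Assumption~\ref{Assu:Fenchel}(ii), writing the best-response problem as the min--sup saddle problem $\min_{u_{s}^{i}}\sup_{\mu\in\mathcal{M}_{s}^{i}(P_{s})}\{\langle\mu,\,C_{s}^{i}(v_{\ast}^{i})\rangle-b_{s}^{i}(\mu)\}$ with constraints $A_{s,\,m}^{i}\mu+f_{m}(P_{s})\geq h_{s,\,m}^{i}$ that are jointly linear in $(u_{s}^{i},\,\mu)$, and trying to deduce convexity of the minimizer set from this bilinear coupling. I expect this convexity/quasiconvexity step, rather than the continuity or compactness bookkeeping, to be the genuine crux of the proof; once it is secured, Kakutani's theorem applies to $\Phi$ and any fixed point is a risk-aware Markov perfect equilibrium in stationary strategies.
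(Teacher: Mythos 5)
Your proposal reproduces the architecture of the paper's actual proof: equilibria are characterized as fixed points of the correspondence $\Phi$ in Eq.~(\ref{Phi}); for fixed $x^{-i}$ the one-step operator is shown to be a $\gamma$-contraction on $(\mathcal{V},\,\|\cdot\|_{\infty})$, so Banach's theorem gives a unique value function; continuity of $\psi_{s}^{i}$ in all arguments then yields nonemptiness (Weierstrass) and upper semicontinuity of $\Phi$; and Kakutani's theorem finishes the argument. You also correctly diagnose \emph{why} the classical convexity argument breaks: the player's own mixing enters the controlled distribution $P_{s}(u_{s}^{i},\,x_{s}^{-i})$, and $\psi_{s}^{i}(\cdot,\,x_{s}^{-i},\,v^{i})$ is a partial maximization of a jointly concave function over a jointly polyhedral set, hence concave in $u_{s}^{i}$, so its argmin over the simplex need not be convex. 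But having named this as ``the genuine crux,'' you stop at ``trying to deduce convexity \ldots{} from this bilinear coupling'' and ``once it is secured.'' That is a genuine gap, not bookkeeping: the bilinear structure alone only re-delivers the concavity you already diagnosed, and the theorem stands or falls on exactly this step.

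The missing idea is the paper's Lemma~\ref{Lemma B 12} combined with the equal-value argument of Lemma~\ref{lem:Phi_convex}. Take two best responses $z_{s}^{i},\,g_{s}^{i}$ at state $s$, both attaining the common minimum $v^{i}(s)$. Since $f_{m}$ is linear in $P_{s}$, hence (by Eq.~(\ref{Ps})) in $u_{s}^{i}$ for fixed $x_{s}^{-i}$, one has $f_{m}((1-\lambda)z_{s}^{i}+\lambda g_{s}^{i},\,x_{s}^{-i},\,P_{s})\leq\max\{f_{m}(z_{s}^{i},\,x_{s}^{-i},\,P_{s}),\,f_{m}(g_{s}^{i},\,x_{s}^{-i},\,P_{s})\}$, which yields the containment $\mathcal{M}_{s}^{i}\left[P_{s}((1-\lambda)z_{s}^{i}+\lambda g_{s}^{i},\,x_{s}^{-i})\right]\subseteq\mathcal{M}_{s}^{i}\left[P_{s}(z_{s}^{i},\,x_{s}^{-i})\right]\cup\mathcal{M}_{s}^{i}\left[P_{s}(g_{s}^{i},\,x_{s}^{-i})\right]$. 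The supremum of $\langle\mu,\,C_{s}^{i}(v^{i})\rangle-b_{s}^{i}(\mu)$ over the combination's envelope is therefore at most the supremum over the union, which equals the two (equal) optimal values $v^{i}(s)$; since $v^{i}(s)$ is also a lower bound (it is the minimum over $u_{s}^{i}$), the whole segment lies in the argmin. Note this establishes convexity of the minimizer set \emph{only along segments between equal-valued minimizers} — which is all Kakutani needs — rather than any global convexity or quasiconvexity of $\psi_{s}^{i}$ in $u_{s}^{i}$, which is false in general. A secondary, cosmetic difference: where you invoke Berge's maximum theorem (which requires continuity of the correspondence $P_{s}\mapsto\mathcal{M}_{s}^{i}(P_{s})$, a fact you assert but do not verify), the paper proves continuity of $\psi_{s}^{i}$ quantitatively, via Lipschitz stability of the extreme points of the polyhedral envelopes (\cite{Batson1987}, together with \cite{Walkup1969}) and a Hausdorff-distance bound, which is what legitimizes the passage to varying feasible sets.
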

\begin{proof}
(Proof sketch) Our proof of existence of risk-aware Markov perfect equilibrium
draws from \cite{Fink1964,Kardes2011}. The main idea 
is to show that $\Phi$ is a nonempty, closed, and convex subset of $\mathcal{X}$, and that $\Phi$ is upper semicontinuous. Then, we apply Kakutani's fixed point theorem to show that this correspondence $\Phi$
has a fixed point which coincides with a risk-aware Markov perfect equilibrium. 
\end{proof}
\section{A $Q$-Learning Algorithm}
We propose a simulation-based and asynchronous algorithm for computing
equilibria of the risk-aware game $\left\{\mathcal{I},\,\mathcal{S},\,\mathcal{A},\,P,\,c,\,\rho\right\}$, called
Risk-aware Nash $Q$-learning (RaNashQL). This algorithm does not require a model for the cost functions $\left\{ c^{i}\right\} _{i\in\mathcal{I}}$
or the transition kernel $P$, nor does not it require prior knowledge on $\mathcal{S}$. The algorithm has an outer-inner loop structure, where the risk estimation is performed in the inner loop and the equilibrium estimation is performed in the outer loop. 

In each iteration of RaQL, a collection of $Q$-values
for each player for all strategy profiles, is generated. The one-shot game formed by the collection of $Q$-values is called a \emph{stage game}. We will later formulate stage game explicitly. The outer-inner loop structure follows \cite{Huang2017,Jiang2017,Huang2018a}
where multiple ``stochastic approximation instances'' for both risk
estimation and $Q$-value updates are ``pasted'' together. We show that the Nash equilibria mapping for stage
games is non-expansive, and both the risk estimation error and equilibrium estimation error are bounded by the gap between the estimated $Q$-value and the $Q$-value under the equilibrium. These two conditions allow us to prove the convergence of the algorithm using the theory of stochastic approximation, as shown in \cite{Even-Dar2004}. 

For this section, we assume that our risk measures $\left\{ \rho^{i}\right\}$
have a special form as stochastic saddle-point problems to facilitate computation. Define a probability space $(\Omega,\mathcal{F}, P)$ and the space of essentially bounded random variables $\mathcal{L} = L_{\infty}(\Omega,\mathcal{F}, P)$. 

\begin{assumption}
\label{assu:Q-learning} (Stochastic saddle-point problem) For all
$i\in\mathcal{I}$,
\begin{equation}
\rho^{i}(X)=\min_{y\in\mathcal{Y}^{i}}\max_{z\in\mathcal{Z}^{i}}\mathbb{E}_{P}\left[G^{i}(X,\,y,\,z)\right],\,\forall X\in\mathcal{L}, \label{Saddle}
\end{equation}
where: (i) $\mathcal{Y}^{i}\subset\mathbb{R}^{d_{1}}$ and $\mathcal{Z}^{i}\subset\mathbb{R}^{d_{2}}$
are compact and convex with diameters $D_{\mathcal{Y}}$ and $D_{\mathcal{Z}}$,
respectively. (ii) $G^{i}$ is Lipschitz continuous on $\mathcal{L}\times\mathcal{Y}^{i}\times\mathcal{Z}^{i}$
with constant $K_{G}>1$. (iii) $G$ is convex in $y\in\mathcal{Y}^{i}$ and concave in $z\in\mathcal{Z}^{i}$. (iv) The subgradients of $G$ on $y$ and $z$ are Borel measurable
and uniformly bounded for all $X\in\mathcal{L}$.
\end{assumption}

In \cite[Theorem 3.2]{Huang2018a}, conditions on $G^{i}$
are given to ensure that the corresponding minimax structure (\ref{Saddle})
is a convex risk measure. Some examples of the functions $G^{i}$ are shown in the Appendix such that the corresponding risk-aware Markov perfect equilibria exist. For instance, CVaR can be written as: 
\begin{equation} 
	\textrm{CVaR}{}_{\alpha^{i}}(X):=\min_{\eta\in\mathbb{R}}\left\{ \eta+\frac{1}{1-\alpha^{i}}\mathbb{E}\left[\max\left\{ X-\eta,\,0\right\} \right]\right\}, \label{CVaR}
	\end{equation}
where $\alpha^{i}\in[0,\,1)$ is the risk tolerance for player $i$. 

\paragraph*{Risk-aware Nash $Q$-learning Algorithm}
RaNashQL is updated based on future equilibrium costs (which
depend on all players). In contrast, single-agent $Q$-learning updates
are only based on the player's own costs. Thus, to predict equilibrium
losses, every player must maintain and update a model for all other
player's costs and their risk assessments, which follows the settings in \cite{Hu2003}. 

For all $\left(s,\,a\right)\in\mathcal{S}\times\mathcal{A},\,i\in\mathcal{I}$,
\begin{align}
Q_{\ast}^{i}(s,\,a):= & \min_{y\in\mathcal{Y}^{i}}\max_{z\in\mathcal{Z}^{i}}\mathbb{E}_{P\left(\cdot\,\vert\,s,\,a\right)}\big\{ \nonumber
\\
& G^{i}\left(c^{i}(s\,,A)+\gamma\,v_{\ast}^{i}(S),\,y,\,z\right)\big\}, \label{Q-risk}
\end{align}
denotes the $Q$-values corresponding to a stationary equilibrium and its best response function
$v_{\ast}$. In the case of multiple equilibria, different Nash strategy
profiles may have different equilibrium $Q$-values, so the pair $(v^{i}_{\ast},\,Q^{i}_{\ast})$ may not be unique.

In a multi-agent
$Q$-learning algorithm, the agents play a sequence
of stage games where the payoffs are the current $Q$-values. In each state $s\in\mathcal{S}$, the corresponding
stage game is the collection $(Q^{i}(s))_{i\in\mathcal{I}}$, where
$Q^{i}(s):=\{Q^{i}(s,\,a)\text{ : }a\in\mathcal{A}\}$ is the array
of $Q$-values for player $i$ for all strategy profiles. Let $x_{s}$
be a Nash equilibrium of the stage game $(Q^{i}(s))_{i\in\mathcal{I}}$,
then the corresponding \emph{Nash $Q$-value} for all $i\in\mathcal{I}$ is denoted:
\[
Nash^{i}(Q^{j}(s))_{j\in\mathcal{I}}:=\sum_{a\in\mathcal{A}}\left(\times_{j\in\mathcal{I}} x_{s}^{j}\left(a^{j}\right)\right)Q^{i}\left(s,\,a\right),
\]
which gives each player's corresponding expected cost in state $s\in \mathcal{S}$ (with
respect to the $Q$-values) under $x_{s}$.

RaNashQL builds upon the algorithm in \cite{Hu2003} for the risk-aware
case. Figure 1 illustrates how players interact with others and update their equilibrium estimation through RaQL. 
\begin{figure}[h]
\begin{center}
	\includegraphics[scale=0.2]{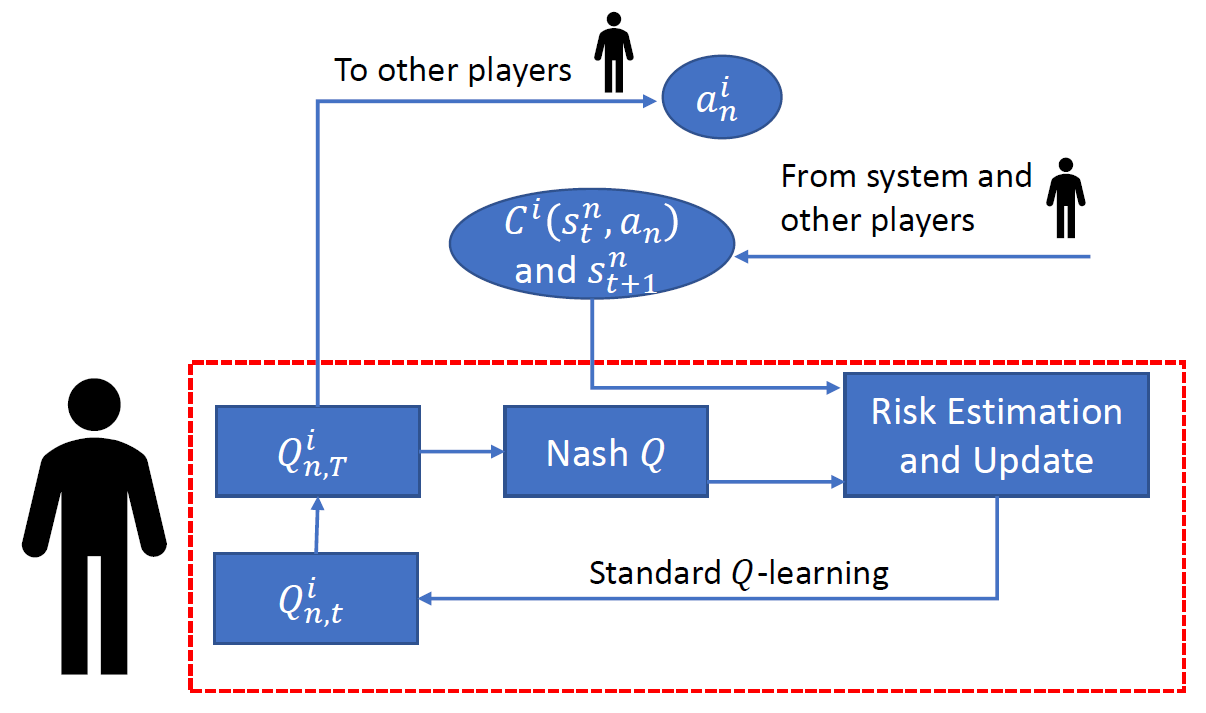}
\end{center}
\caption{Illustration of RaQL}
\end{figure}
Each player chooses an action based on a Nash equilibrium of their current $Q$-values, observed cost, other players' actions, and then the new state in each iteration. The $Q$-values follow a stochastic approximation-type update as in standard $Q$-learning. 

\begin{algorithm}
\caption{Risk-aware Nash $Q$-learning}
		
		\textbf{(Step 0) Initialize}: Let $n=1$, and $t=1$, get the initial
		state $s_{1}$. Let the learning agent be indexed by $i$. For all
		$s\in\mathcal{S}$ and $a^{i}\in\mathcal{A}^{i},\,i\in\mathcal{I}$,
		let $Q_{n,t}^{i}(s,\,a)=0$. 
		
		\textbf{For} $n=1,\,...,\,N$ \textbf{do}
		
		\textbf{(Step 1)} Choose $a_{n}^{i}$ based on the exploration policy
		$\pi$. Observe the actions and costs for all players, then
		observe a new state; 
		
		\textbf{$\qquad$For} $t=1,\,...,\,T$ \textbf{do }
		
		\textbf{$\qquad$(Step 2)} Compute the Nash $Q$-value; Compute the
		risk-aware cost-to-go for all players; 
		
		\textbf{$\qquad$(Step 3)} Update each $Q_{n,t}^{i},\,i\in\mathcal{I}$
		using stochastic approximation; 
		
		\textbf{$\qquad$(Step 4)} Stochastic approximation of risk measure
		by SASP; 
		
		\textbf{$\qquad$end for}
		
		\textbf{end for}
		
		\textbf{Return} Approximated $Q$-value $Q_{N,T}^{i},\,i\in\mathcal{I}$.
\end{algorithm}
The steps of RaNashQL are summarized in Algorithm 1, which contains $N$ and $T$ number of iterations for outer and inner loops, respectively. In Step 4, we use the stochastic approximation for saddle-point problems (SASP)
algorithm, \cite[Algorithm 2.1]{Nemirovski2005}. Classical stochastic
approximation may result in extremely slow convergence for degenerate
objectives (i.e. when the objective has a singular Hessian). However,
the SASP algorithm with a properly chosen parameter preserves a ``reasonable'' (close to $O(n^{-1/2})$) convergence
rate, even when the objective is non-smooth and/or degenerate. Thus,
SASP is a robust choice for solving problem (\ref{Saddle}). The extended formulations from Steps (0)-(4) in Algorithm 1 are given in the Appendix.

\paragraph*{Almost Sure Convergence} 
Let $\{Q_{n,T}\}_{i\in\mathcal{I}}$ be the $Q$-value estimations at iteration $n$ and $T$ (the end of each inner loop after the risk estimation has been done) from Algorithm 1. We would like to demonstrate the almost sure convergence of $Q_{n,T}^{i}$ to the risk-aware equilibrium
$Q$-values $Q_{\ast}^{i}$ for all players. \cite{Hu2003} introduce two conditions on the Nash equilibria of all the stage games that lead to almost sure convergence, a \emph{global optimal} point when every player receives his lowest cost at this point, and a \emph{saddle} point when each agent would receive a lower cost when at least one of the other players deviates. We found a special type of Nash equilibria that we call an $\mathcal{I}^{\prime}$-mixed point, which builds on \cite{Hu2003}, and plays a major role in our convergence analysis. 

\begin{definition} Let $(C^{i})_{i\in\mathcal{I}}$ denote the expected cost of all players as a function of the multi-strategy $x\in\mathcal{X}$. 
\label{Definition 1.9} A multi-strategy $x\in\mathcal{X}$
is a $\mathcal{I}^{\prime}$-mixed point of $\left(C^{i}\right)_{i\in\mathcal{I}}$
if: (i) it is a Nash equilibrium and (ii) there exists an index of players
$\mathcal{I}^{\prime}\subseteq\mathcal{I}$ such that: $C^{i}\left(x\right)\leq C^{i}\left(x'\right),\,\forall x'\in\mathcal{X},\,i\in\mathcal{I}^{\prime}$, and	$C^{i}\left(x^{i},\,x^{-i}\right)\leq C^{i}\left(x^{i},\,u^{-i}\right),\,\forall u^{-i}\in\mathcal{X}^{-i},\,i\in\mathcal{I}\backslash\mathcal{I}'$.
\end{definition}
Our
definition of `$\mathcal{I}^{\prime}$-mixed point' combines both notions of global optimal point and saddle point. From Definition \ref{Definition 1.9}, a subset
of players $\mathcal{I}^{\prime}\subseteq\mathcal{I}$ minimizes their
expected costs at $x$. The rest of the players $\mathcal{I}\backslash\mathcal{I}^{\prime}$
each would receive a lower expected cost when at least one of the
other players deviates. An example of an $\mathcal{I}^{\prime}$-mixed point in a one shot game follows.

\begin{example}
Player 1 has choices \emph{Up} and \emph{Down}, and Player 2 has
choices \emph{Left} and \emph{Right. }Player 1's loss is the first
entry in each cell, and Player 2's are the second. The first game
has a unique Nash equilibrium (Up, Left), which is a global optimal
point. The second game also has a unique Nash equilibrium (Down, Right),
which is a saddle-point. The third game has two Nash equilibrium:
a global optimum (Up, Left), and a mixed point (Down, Right).
In equilibrium (Down, Right), Player 1 receives a lower cost
if Player 2 deviates, while Player 2 receives a higher cost if Player
1 deviates. 

\begin{table}
	\begin{centering}
		\begin{tabular}{c|c|c|}
			\textbf{Game 1} & \emph{Left } & \emph{Right}\tabularnewline
			\hline 
			\emph{Up} & $0,\,1$ & $10,\,7$\tabularnewline
			\hline 
			\emph{Down} & $7,\,10$ & $11,\,8$\tabularnewline
			\hline 
		\end{tabular}\textbf{ }%
		\begin{tabular}{c|c|c|}
			\textbf{Game 2} & \emph{Left } & \emph{Right}\tabularnewline
			\hline 
			\emph{Up} & $5,\,5$ & $10,\,4$\tabularnewline
			\hline 
			\emph{Down} & $4,\,10$ & $8,\,8$\tabularnewline
			\hline 
		\end{tabular}\textbf{ }%
		\begin{tabular}{c|c|c|}
			\textbf{Game 3} & \emph{Left } & \emph{Right}\tabularnewline
			\hline 
			\emph{Up} & $0,\,1$ & $10,\,9$\tabularnewline
			\hline 
			\emph{Down} & $7,\,10$ & $8,\,8$\tabularnewline
			\hline 
		\end{tabular}
		\par\end{centering}
	\caption{Examples of $\mathcal{I}^{\prime}$-mixed point}
	
\end{table}
\end{example}

We now introduce the following additional assumptions for our analysis of RaNashQL.
\begin{assumption}
\label{Assumption 5.11} One of the following holds for all stage
games $(Q_{n,T}^{i}(s))_{i\in\mathcal{I}}$ for all $n$ and $s\in\mathcal{S}$
in Algorithm 1.

(i) Every $(Q_{n,T}^{i}(s))_{i\in\mathcal{I}}$
for all $n$ and $s\in\mathcal{S}$ has a global optimal point.

(ii) Every $(Q_{n,T}^{i}(s))_{i\in\mathcal{I}}$
for all $n$ and $s\in\mathcal{S}$ has a saddle point.

(iii) For any two stage games $Q,\,\tilde{Q}\in(Q_{n,T}^{i}(s))_{i\in\mathcal{I}}$
for all $n$ and $s\in\mathcal{S}$, we suppose $Q_{1}$ has a $\mathcal{I}_{1}$-mixed
point $x$ and $Q_{2}$ has a $\mathcal{I}_{2}$-mixed point $\tilde{x}$.
Then: For $i\in\mathcal{I}_{1}\cup(\mathcal{I}\backslash\mathcal{I}_{2})$,
then $Q^{i}\left(x\right)\geq\tilde{Q}^{i}\left(\tilde{x}\right)$; For $i\in\mathcal{I}_{2}\cup(\mathcal{I}\backslash\mathcal{I}_{1})$,
then $Q^{i}\left(x\right)\leq\tilde{Q}^{i}\left(\tilde{x}\right)$. 
\end{assumption}

Compared with \cite[Assumption 3]{Hu2003}, Assumption \ref{Assumption 5.11}(iii) enables wider application
of RaNashQL. In particular, even the indices $\mathcal{I}_{1}$ and $\mathcal{I}_{2}$
of all the stage games may differ across iterations. Next we list further standard assumptions on exploration in RaNashQL and its asynchronous updates. 
\begin{assumption}{\label{Assumption 4}}
(i) The exploration policy $\pi$ is $\varepsilon-$greedy, meaning with probability
$\varepsilon\in\left(0,\,1\right)$, action $a^{i}$ is chosen
uniformly from $\mathcal{A}^{i}$, and with probability $1-\varepsilon$, action $a^{i}$ is drawn from $\mathcal{A}^{i}$ according to $x_{s}^{i}$ which is the equilibrium of the stage game $\{Q^{i}(s)\}_{i\in\mathcal{I}}$; (ii) a single state-action pair is updated when it is observed in
each iteration.
\end{assumption}
By the Extended Borel-Cantelli Lemma \cite{Breiman1992}, the algorithm satisfying Assumption \ref{Assumption 4}(i)
will visit every state-action pair infinitely often with probability
one. 
\begin{theorem}
\label{Theorem 5.12} Suppose Assumptions \ref{Assumption 5.11} and \ref{Assumption 4} hold. For any
$T\geq1$, Algorithm
1 generates sequences $\left\{ Q_{n,T}^{i}\right\} _{n\geq1}$ such
that $Q_{n,\,T}^{i}\rightarrow Q_{\ast}^{i}$ almost surely as $n\rightarrow\infty$
for all $i\in\mathcal{I}$.
\end{theorem}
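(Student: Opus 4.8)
The plan is to recast the outer-loop $Q$-value update of Step 3 in Algorithm 1 as an asynchronous stochastic approximation recursion and then invoke the convergence criterion for such schemes from \cite{Even-Dar2004}. For each player $i$ and pair $(s,a)$ the update reads $Q_{n+1,T}^{i}(s,a) = (1-\alpha_{n}(s,a))\,Q_{n,T}^{i}(s,a) + \alpha_{n}(s,a)\,\widehat{H}^{i}_{n,T}(s,a)$, where $\widehat{H}^{i}_{n,T}(s,a)$ is the inner-loop (SASP) estimate of the target $(\mathcal{H}^{i}Q_{n,T})(s,a) := \rho^{i}\!\left(c^{i}(s,a)+\gamma\,\mathrm{Nash}^{i}(Q_{n,T}(S'))\right)$ with $S'\sim P(\cdot\mid s,a)$ and $\rho^{i}$ in the saddle form (\ref{Saddle}). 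By Eq. (\ref{Q-risk}) together with the fixed-point relations (\ref{eq:Markov_equilibrium})--(\ref{eq:Markov_equilibrium-1}) and the identity $v_{\ast}^{i}=\mathrm{Nash}^{i}(Q_{\ast})$ at equilibrium, $Q_{\ast}^{i}$ is a fixed point of $\mathcal{H}^{i}$. Setting $\Delta_{n}^{i}:=Q_{n,T}^{i}-Q_{\ast}^{i}$ and subtracting $Q_{\ast}^{i}$ turns the update into $\Delta_{n+1}^{i}(s,a)=(1-\alpha_{n}(s,a))\Delta_{n}^{i}(s,a)+\alpha_{n}(s,a)F_{n}^{i}(s,a)$ with $F_{n}^{i}:=\widehat{H}^{i}_{n,T}-Q_{\ast}^{i}$.

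First I would dispatch the routine stochastic-approximation bookkeeping. Assumption \ref{Assumption 4}(i) together with the Extended Borel--Cantelli Lemma \cite{Breiman1992} guarantees that every $(s,a)$ is updated infinitely often, and the learning rates can be taken to satisfy $\sum_{n}\alpha_{n}(s,a)=\infty$ and $\sum_{n}\alpha_{n}(s,a)^{2}<\infty$ componentwise. I would then split the conditional target as $\mathbb{E}[F_{n}^{i}\mid\mathcal{F}_{n}]=(\mathcal{H}^{i}Q_{n,T}-\mathcal{H}^{i}Q_{\ast})+(\mathbb{E}[\widehat{H}^{i}_{n,T}\mid\mathcal{F}_{n}]-\mathcal{H}^{i}Q_{n,T})$, so that the \emph{equilibrium estimation error} appears as the first term and the \emph{risk estimation error} produced by running SASP for $T$ inner steps as the second.

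The core of the argument is to bound both terms by a multiple of the gap $\|\Delta_{n}\|_{\infty}$. For the first term, the normalization and positive homogeneity axioms of $\rho^{i}$ supply a monotonicity/translation Lipschitz bound with constant one, reducing the claim to a non-expansiveness lemma for the Nash map, $\bigl|\mathrm{Nash}^{i}(Q(s'))-\mathrm{Nash}^{i}(\tilde{Q}(s'))\bigr|\le\max_{j}\|Q^{j}(s')-\tilde{Q}^{j}(s')\|_{\infty}$, which then yields $\|\mathcal{H}^{i}Q_{n,T}-\mathcal{H}^{i}Q_{\ast}\|_{\infty}\le\gamma\|\Delta_{n}\|_{\infty}$. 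This non-expansiveness lemma is exactly where Assumption \ref{Assumption 5.11} enters: the $\mathcal{I}'$-mixed-point comparison in part (iii), together with its specializations to global optimal and saddle points in (i)--(ii) and Definition \ref{Definition 1.9}, furnishes the two-sided ordering of Nash values between two distinct stage games needed to control the jump in equilibrium value despite non-uniqueness of equilibria. For the second term, the convex--concave, Lipschitz, bounded-subgradient structure of $G^{i}$ in Assumption \ref{assu:Q-learning} lets the SASP guarantee of \cite[Algorithm 2.1]{Nemirovski2005} bound the inner-loop bias again by a multiple of the same gap plus a conditionally mean-zero fluctuation; this self-correcting property is the second of the ``two conditions'' alluded to before the theorem and is precisely what makes the conclusion valid for \emph{any} fixed $T\ge1$.

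Assembling these, the conditional mean $\mathbb{E}[F_{n}^{i}\mid\mathcal{F}_{n}]$ satisfies the contraction-plus-vanishing-noise hypothesis of \cite{Even-Dar2004}, with the dominant contraction factor $\gamma<1$ coming from the equilibrium-estimation term; a bounded-variance estimate on $F_{n}^{i}$ follows from boundedness of the costs, of $v_{\ast}^{i}$, and of the subgradients (Assumption \ref{assu:Q-learning}(iv)). I would then conclude $\Delta_{n}^{i}\to0$ almost surely along each coordinate $(s,a,i)$, i.e. $Q_{n,T}^{i}\to Q_{\ast}^{i}$ a.s. for all $i\in\mathcal{I}$. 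I expect the Nash non-expansiveness lemma under the generalized $\mathcal{I}'$-mixed-point condition -- rather than the stochastic-approximation machinery -- to be the main obstacle, both because Nash values can be discontinuous in the payoffs and because the index sets $\mathcal{I}_{1},\mathcal{I}_{2}$ are allowed to differ across iterations in Assumption \ref{Assumption 5.11}(iii), so the comparison argument of \cite{Hu2003} must be extended rather than applied verbatim.
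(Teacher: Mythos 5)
Your outline reproduces the paper's proof almost step for step. The paper's argument is organized into exactly your three ingredients: (1) non-expansiveness of the stage-game Nash map under Assumption \ref{Assumption 5.11}, proved case-by-case for global optimal points, saddle points, and $\mathcal{I}'$-mixed points (Lemma \ref{Lemma 6.1}, supported by Lemma \ref{Lemma 5.11} showing all mixed points of a stage game share a common value -- so your prediction that this is where \cite{Hu2003} must be extended rather than applied verbatim is accurate); (2) bounding the inner-loop saddle-point (SASP) estimation error by the $Q$-gap (Lemmas \ref{Lemma 5.5-1} and \ref{Saddle convergence}); and (3) recasting Step 3 of Algorithm 1 as a well-behaved iterative stochastic algorithm in the sense of \cite[Definition 7]{Even-Dar2004}, with the same decomposition into a risk-estimation error process $\epsilon_{n,t}^{i}$ and an equilibrium-estimation error process $\xi_{n,t}^{i}$ that you propose.

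There is, however, one genuine gap in your assembly step. You assert that the contraction hypothesis holds ``with the dominant contraction factor $\gamma<1$ coming from the equilibrium-estimation term.'' In the paper the risk-estimation bias is \emph{not} subdominant: it contributes $\|\epsilon_{n,t}^{i}\|_{2}^{2}\leq\frac{\gamma^{2}C\,K_{G}^{2}}{\kappa(1-C\,K_{\psi}^{(1)}\kappa)}\|Q_{n-1,T}^{i}-Q_{\ast}^{i}\|_{2}^{2}$, and the resulting pseudo-contraction modulus is $\gamma^{\prime}=\gamma\,\sqrt{\frac{C\,K_{G}^{2}}{\kappa(1-C\,K_{\psi}^{(1)}\kappa)}+K_{G}^{2}(1+K_{G}K_{S})}$ (Lemma \ref{Lemma 2.7}). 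Since $K_{G}>1$, the condition $\gamma<1$ alone does not yield $\gamma^{\prime}<1$; the paper must additionally impose $\gamma<\frac{1}{K_{G}}\min\left\{ \sqrt{\frac{\kappa(1-C\,K_{\psi}^{(1)}\kappa)}{C+(1+K_{G}K_{S})\kappa(1-C\,K_{\psi}^{(1)}\kappa)}},\,1\right\}$, a smallness restriction on the discount factor that your argument never surfaces and without which Condition 3 of the well-behaved criterion fails. Relatedly, the Hausdorff bound on the subdifferentials (Lemma \ref{Lemma 5.5}) carries a non-Lipschitz square-root term $K_{\psi}^{(2)}\sqrt{\|Q_{n-1,T}^{i}-Q_{\ast}^{i}\|_{2}}$, which is why the inner-loop bias bound does not follow directly from the SASP guarantee of \cite{Nemirovski2005} as you suggest, but requires the careful choice of the constant $\kappa$ in Lemma \ref{Lemma 5.5-1} to absorb it. So your skeleton is the paper's, but the one step you wave through -- that the combined bias is automatically a $\gamma$-contraction -- is precisely where the proof needs extra hypotheses and work.
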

\begin{proof} (Proof sketch)
(i) Show that all $\mathcal{I}^{\prime}$-mixed points of a stage game have equal value, and the property also holds for global optimal points and saddle points. Consequently, from \cite{Hu2003}, the mapping from $Q$-values to Nash equilibrium (of the stage games) is non-expansive. 

(ii) Show that the Hausdorff distance
between the subdifferentials of the estimated risk on $\mathcal{Y}^{i}$ and $\mathcal{Z}^{i}$  (corresponding to Eq. (\ref{Saddle})), is bounded by a function of $\|Q_{n-1,T}^{i}-Q_{\ast}^{i}\|_{2}$. 

(iii) Show that the duality gaps of all the
saddle point estimation problems are bounded by a function of $\|Q_{n-1,T}^{i}-Q_{\ast}^{i}\|_{2}$. 

(iv) If the conditions in (i)-(iii) hold, then $Q_{n,\,T}^{i}$ from RaNashQL are a well-behaved stochastic approximation sequence \cite[Definition 7]{Even-Dar2004} that converges to $Q_{\ast}^{i}$ with probability one. 
\end{proof}
The full proof Theorem \ref{Theorem 5.12} is presented in the Appendix.

\cite[Theorem 4.7]{Huang2018a} shows that the single-agent
version of RaNashQL has complexity
\begin{equation}
\Omega\left(\left(S\,A\ln(S\,A/\delta\epsilon)/\epsilon^{2}\right)^{1/\beta}+(\ln(\sqrt{S\,A}/\epsilon))^{1/(1-\beta)}\right),\label{Benchmark}
\end{equation}
with probability $1-\delta$, where $S$ and $A$ denote the cardinality of state and actions spaces and $\beta\in(0,\,1]$ is the learning rate. In the multi-agent case, our conjecture is to replace
$A$ with $|\mathcal{A}|$ in the term (\ref{Benchmark}) to get a
rough estimate of the time complexity of RaNashQL. However, the explicit complexity bound is difficult to derive and remains for future research. In RaNashQL, there are multiple $Q$-values being updated in each iteration for each state, and their relationships are complex (they are linked by the solutions of a stage game, since each stage game may yield multiple Nash equilibria). 

In the Appendix, we also discuss (i) methods for computing Nash equilibria of stage games involving two or more players;  (ii) a rule for choosing a unique Nash equilibrium of stage games from multiple choices; (iii) the storage space requirement of RaNashQL. 

\section{A Queuing Control Application}
We apply our techniques to the single server exponential queuing system
from \cite{Kardes2011}.  In this packet switched network, it is service provider's (denoted as ``SP'' latter in the tables) benefit to increase the amount
of packets processed in the system. However, such an increase may
result in an increase in packets\textquoteright{} waiting times in
the buffer (called latency), and routers (denoted as ``R'' latter in the tables) are used to reduce packets\textquoteright{}
waiting times. Thus, the game arises
because the service provider and router choose their service rates to achieve competing objectives. 

The state space $\mathcal{S}$ represents the maximum number ($30$ in these experiments) of packets allowed in the system. We assume that the time until the admission of a new packet and the next service completion are both exponentially distributed. Therefore, the number of
packets in the system can be modeled as a birth and death process with fixed state transition probabilities. In the Appendix, we provide the explicit formulation of cost functions, state transition probabilities, as well as other parameter settings. We suppose that each player has the same two available actions (service rates) in every state. CVaR is the risk measure for both players in all the experiments. The player’s risk preferences are obtained by setting $\alpha^{i}$ for $i = 1, 2$, and we allow $\alpha^{1} \neq \alpha^{2}$. 

\paragraph*{Experiment I (RaNashQL vs. Nash $Q$-learning)} 

We compare RaNashQL with Nash $Q$-learning in \cite{Hu2003} in terms of their convergence rates. Given any precision $\epsilon>0$, we record the iteration count $n$
until the convergence criterion $\|Q_{n,\,T}^{i}-Q_{\ast}^{i}\|_{2}\leq\epsilon$
is satisfied. Figure 2 (top) reveals that RaNashQL
is more computationally expensive than Nash $Q$-learning. Table 2 shows the discounted cost under equilibrium by simulation (1000 samples). The first table reveals that incorporating risk will help the service provider reduce its mean cost, while increase the mean cost of the router. The second table shows that incorporating risk will help to reduce the overall cost to the entire system with only a slightly higher variance. 

The first part of Table 3 shows that the mean cost of service provider ($-44.31$)
is lower than that under the risk-neutral Markov perfect equilibrium
($-22.22$), and the mean cost of router ($59.64$) is lower than
that under the risk-aware Markov perfect equilibrium ($37.48$). This
result shows that incorporating risk preference can help decision
makers reach a new equilibrium that further reduces his mean cost compared to cases where both players are either risk-neutral or risk-aware. Similar phenomena can also be shown in the second part of Table 3. In the final part of Table 3, we construct a new two-player one-shot game where the risk preferences (risk-neutral and risk-aware) are the actions and the expected value from simulation will be outcome of the game. We find that a equilibrium is attained for this game when the router is risk-neutral and the service provider is risk-aware. This one-shot game demonstrates that the router should be risk-neutral when service provider is risk-aware, in order to reduce his expected cost. 

In the Appendix, we further explain the reason for the increase in variance in risk-aware games in Table 2 which is counter-intuitive. 

\begin{table}[]
\begin{center}
{\small{\begin{tabular}{cccccc}
		\hline 
		Player & Method & Mean  & Variance & $5$\%-CVaR & $10$\%-CVaR\tabularnewline
		\hline 
		{SP} & Neutral & $-22.22$ & $1.4736e-06$ & $-22.22$ & $-22.22$\tabularnewline
		\cline{2-6} 
		& CVaR & $-77.78$ & $407.84$ & $-69.34$ & $-68.26$\tabularnewline
		\hline 
		{R} & Neutral & $37.48$ & $7.32$ & $37.94$ & $38.18$\tabularnewline
		\cline{2-6} 
		& CVaR & $83.68$ & $491.20$ & $86.03$ & $87.54$\tabularnewline
		\hline 
\end{tabular}}}
\par\end{center}

\vspace{0.2cm}

\begin{center}
{\small{\begin{tabular}{cccc}
		\hline 
		Method & Mean & $5$\%-CVaR & 10\%-CVaR\tabularnewline
		\hline 
		\hline 
		Neutral & 15.26 & 15.72 & 15.96\tabularnewline
		\hline 
		CVaR & 5.9 & 16.69 & 19.28\tabularnewline
		\hline 
\end{tabular}}}
\par\end{center}
\caption{Simulation (Constructing CVaR with $\alpha^{1}=\alpha^{2}=0.1$)}
\end{table}
\begin{table}[]
\begin{center}
{\small{\begin{tabular}{cccccc}
		\hline 
		Player & Method & Mean  & Variance & $5$\%-CVaR & $10$\%-CVaR\tabularnewline
		\hline 
		\hline 
		{SP} & CVaR & $-44.31$ & $266.06$ & $-43.38$ & $-42.70$\tabularnewline
		\hline 
		{R} & Neutral & $59.64$ & $316.71$ & $61.18$ & $62.77$\tabularnewline
		\hline 
\end{tabular}}}
\end{center}

\vspace{0.2cm}

\begin{center}
{\small{\begin{tabular}{cccccc}
		\hline 
		Player & Method & Mean  & Variance & $5$\%-CVaR & $10$\%-CVaR\tabularnewline
		\hline 
		\hline 
		{SP} & CVaR & $-54.76$ & $26.05$ & $-54.71$ & $-54.67$\tabularnewline
		\hline 
		{R} & Neutral & $70.56$ & $31.03$ & $71.56$ & $71.81$\tabularnewline
		\hline 
\end{tabular}}}

\vspace{0.3cm}

{\small{\begin{tabular}{cc|c|c|}
		& \multicolumn{1}{c}{} & \multicolumn{2}{c}{Router}\\
		& \multicolumn{1}{c}{} & \multicolumn{1}{c}{Risk-neutral}  & \multicolumn{1}{c}{Risk-aware} \\\cline{3-4}
		\multirow{2}*{Service Provider}  & Risk-neutral & $(-22.22, 37.48)$ & $(-54.76, 70.56)$ \\\cline{3-4}
		& Risk-aware & $(-44.44, 59.64)$ & $(-77.78, 83.68)$ \\\cline{3-4}
\end{tabular}}}
\end{center}
\caption{Simulation ( Constructing CVaR with $\alpha^{1}=0.95$, $\alpha^{2}=0.1$ for the first table, and $\alpha^{1}=0.1$, $\alpha^{2}=0.95$ for the second)}	
\end{table}

\paragraph*{Experiment II (RaNashQL vs. Multilinear System)} In this
experiment, we consider a special case where the risk only comes from state transitions
(this setting is basically a risk-aware interpretation of \cite{Kardes2011}). In this case, we can compute the risk-aware Markov equilibrium ``exactly'' using a multilinear
system and interior point algorithm as detailed in the Appendix. We evaluate performance in terms of the relative error
\[
\frac{\sqrt{\sum_{s\in\mathcal{S}}\left(Nash^{i}(Q_{n,\,T}^{j}(s))_{j\in\mathcal{I}}-v_{\ast}^{i}(s)\right)^{2}}}{\sqrt{\sum_{s\in\mathcal{S}}v_{\ast}^{i}(s)^{2}}},\,n\leq N,
\]
where $v_{\ast}^{i}$ is the value function corresponding to the equilibrium solved by multilinear system. The Appendix confirms that the service provider's strategy produced by RaNashQL converges almost surely to the one produced by multilinear system. From the Appendix, interior point algorithm finds a local optimum with 10471.975 seconds, and RaNashQL has relative error lower than 25\% with 5122.657 seconds. Thus, our approach possesses superior computational performance compared to an interior point
algorithm for solving multilinear systems.
\paragraph*{Experiment III (Computational Complexity Conjecture)} In this experiment, we explore the conjecture on the computational complexity of RaNashQL.  Given a fixed $\epsilon$, we could compute the complexity conjecture through formulation (\ref{Benchmark}). Figure 2 (bottom) shows that the relative errors of service provider
and router under computed complexity conjecture are bounded by $\epsilon$. Thus we derive a potential heuristic for the computational complexity of
solving a general sum game given the size of the game. In other words, each practitioner can estimate the upper bound of total complexity in computing the $\epsilon-$ equilibrium through this conjecture. 
\begin{figure}[h]
\begin{center}
\includegraphics[scale=0.2]{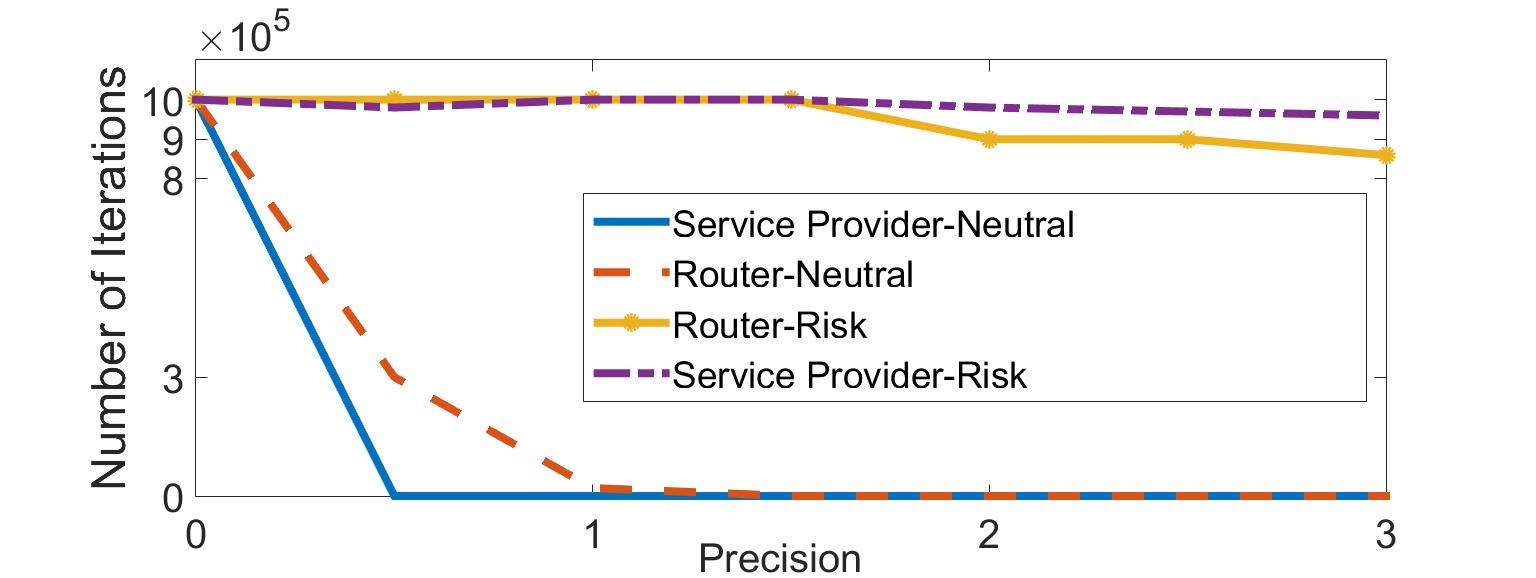}

\vspace{0.3cm}

\includegraphics[scale=0.2]{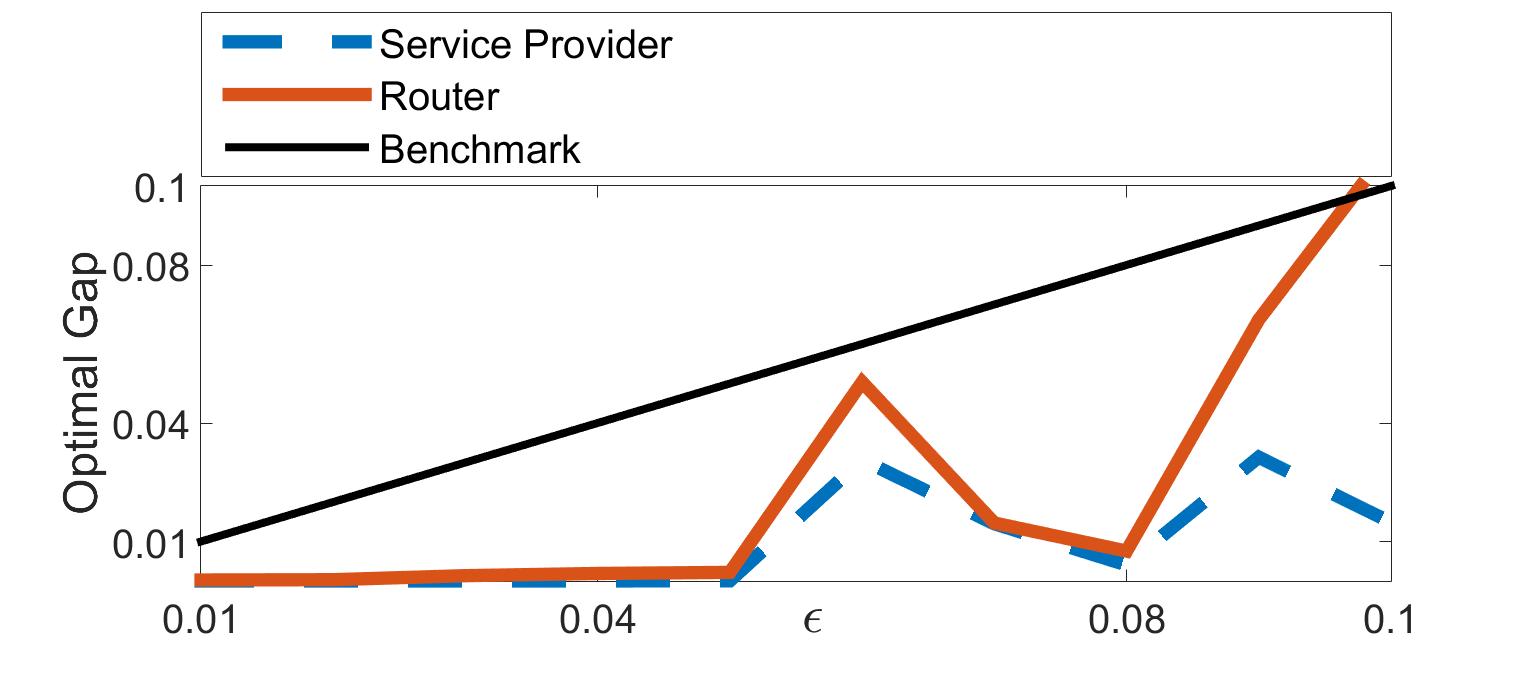}
\end{center}
\caption{Computational Complexity}
\end{figure}
\section{Conclusion}
In this paper, we propose a model and simulation-based algorithm for non-cooperative Markov games with time-consistent risk-aware players. This work has made the following contributions: (i) The model characterizes the “risk” from both the stochastic state transitions and the randomized strategies of the other players. (ii) We define risk-aware Markov perfect equilibrium and prove its existence in stationary strategies. (iii) We show that our algorithm converges to risk-aware Markov perfect equilibrium almost surely. (iv) From a queuing control numerical example, we find that risk-aware Markov games will reach new equilibria other than risk-neutral ones (this is the equilibrium shifting phenomenon). Moreover, the variance is increased for risk-aware Markov games, which is contrary to the variance reduction property of risk-aware optimization for single agents. The sum of expected cost over all players is reduced in risk-aware Markov game, compared to risk-neutral ones. In future research, we seek to improve
the scalability of our framework for large-scale Markov
games. 
\section*{Acknowledgements}
This work is supported by SRIBD International Postdoctoral Fellowship and the NUS Young Investigator Award ``Practical Considerations for Large-Scale
Competitive Decision Making''.

\bibliographystyle{plain}
\bibliography{ArXiv}

\newpage

\appendix
\section{Appendix}

\subsection{Dynamic Risk Measures}

In this section, we describe the risk measures in our
risk-aware Markov games. In our model, each player $i$ faces a sequence
of costs $X_{t}=c^{i}(s_{t},\,a_{t})$ for all $t\geq0$. There are
two sources of stochasticity in this cost sequence: (i) stochastic
state transitions characterized by the transition kernel $P(\cdot\,\vert\,s,\,a)$;
and (ii) the randomized mixed strategies of other players characterized
by $x^{-i}$. The key question is: how should player $i$ account
for both sources of stochasticity and evaluate the risk of the tail
subsequence $X_{t},\,X_{t+1},\ldots$ from the perspective of time
$t$?

We begin by formalizing some details about the risk of finite cost sequences
$X_{t,\,T}:=(X_{t},\,X_{t+1},\ldots,\,X_{T})$ before we consider
the risk of the infinite cost sequence $X_{0},\,X_{1},\ldots$ actually
faced by the players. For a reference distribution $P$ on $\left(\Omega,\,\mathcal{F}\right)$,
and we define $\text{\ensuremath{L_{t}:=\mathcal{L}_{\infty}(\Omega,\mathcal{\,F}_{t},\,P)}}$
and $L_{t,\,T}:=L_{t}\times L_{t+1}\times\cdots\times L_{T}$ for
all $0\leq t\leq T<\infty$.
\begin{definition}
	\label{Definition 2.1-2}(i) A mapping $\rho_{t,\,T}:\,L_{t,\,T}\rightarrow L_{t}$,
	is called a \emph{conditional risk measure} if: $\rho_{t,\,T}(Z_{t,\,T})\leq\rho_{t,\,T}(X_{t,\,T})$
	for all $Z_{t,\,T},\,X_{t,\,T}\in L_{t,\,T}$ such that $Z_{t,\,T}\leq X_{t,\,T}$. 
	
	(ii) A \emph{dynamic risk measure} is a sequence of conditional risk
	measures $\{\rho_{t,\,T}\}_{t=0}^{T}$.
\end{definition}

Given a dynamic risk measure $\{\rho_{t,\,T}\}_{t=0}^{T}$, we may
define a larger family of risk measures $\rho_{t,\,\tau}$ for $0\leq t\leq\tau\leq T$
via the convention $\rho_{t,\,\tau}(X_{t},\ldots,\,X_{\tau})=\rho_{t,\,\tau}(X_{t},\ldots,\,X_{\tau},\,0,\ldots,\,0)$.

We now make our key assumptions about player risk preferences.
\begin{assumption}
	\label{Assumption 2.5} The dynamic risk measure $\{\rho_{t,\,T}\}_{t=0}^{T}$
	satisfies the following conditions:
	
	(i) (Normalization) $\rho_{t,\,T}(0,\,0,\,...,\,0)=0.$
	
	(ii) (Conditional translation invariance) For any $X_{t,\,T}\in L_{t,\,T}$,
	\[
	\rho_{t,\,T}(X_{t},\,X_{t+1},\,...,\,X_{T})=X_{t}+\rho_{t,\,T}(0,\,X_{t+1},\,...,\,X_{T}).
	\]
	
	(iii) (Convexity) For any $X_{t,\,T},\,Y_{t,\,T}\in L_{t,\,T}$ and
	$0\leq\lambda\leq1$, $\rho_{t,\,T}(\lambda\,X_{t,\,T}+(1-\lambda)Y_{t,\,T})\leq\lambda\,\rho_{t,\,T}(X_{t,\,T})+(1-\lambda)\rho_{t,\,T}(Y_{t,\,T})$.
	
	(iv) (Positive homogeneity) For any $X_{t,\,T}\in L_{t,\,T}$ and
	$\alpha\geq0$, $\rho_{t,\,T}(\alpha\,X_{t,\,T})=\alpha\,\rho_{t,\,T}(X_{t,\,T}).$
	
	(v) (Time-consistency) For any $X_{t,\,T},\,Y_{t,\,T}\in L_{t,\,T}$
	and $0\leq\tau\leq\theta\leq T$, the conditions $X_{k}=Y_{k}$ for
	$k=\tau,\,...,\,\theta-1$ and $\rho_{\theta,\,T}(X_{\theta},\,....,\,X_{T})\leq\rho_{\theta,\,T}(Y_{\theta},\,...,\,Y_{T})$
	imply $\rho_{\tau,\,T}(X_{\tau},\,....,\,X_{T})\leq\rho_{\tau,\,T}(Y_{\tau},\,...,\,Y_{T})$.
\end{assumption}

Many of these properties (monotonicity, convexity, positive homogeneity,
and translation invariance) were originally introduced for static
risk measures in the pioneering paper \cite{ADEH}. They have since
been heavily justified in other works including \cite{ruszczynski2006optimization,Bertsimas01112009,NPS09}. 

The next theorem gives a recursive formulation for dynamic risk measures
satisfying Assumption \ref{Assumption 2.5}. This representation is
the foundation of \cite{Ruszczynski2010} and subsequent works on
time-consistent risk measures. For this result, we define a mapping
$\rho_{t}\text{ : }L_{t+1}\rightarrow L_{t}$, where $t\geq0$, to
be a one-step (conditional) risk measure if $\rho_{t}(X_{t+1})=\rho_{t,\,t+1}(0,\,X_{t+1})$.
\begin{theorem}
	\label{Theorem 2.3}\cite[Theorem 1]{Ruszczynski2010} Suppose Assumption
	\ref{Assumption 2.5} holds, then
	\begin{align}
		\rho_{t,\,T}(X_{t},\,X_{t+1},...,\,X_{T},\ldots) = X_{t}+\rho_{t}(X_{t+1}+\rho_{t+1}(X_{t+2} +\cdot\cdot\cdot+\rho_{T}(X_{T})+\cdot\cdot\cdot)),\label{Recursive}
	\end{align}
	for all $0\leq t\leq T$, where $\rho_{t},\ldots,\,\rho_{T}$ are
	one-step conditional risk measures.
\end{theorem}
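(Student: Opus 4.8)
The plan is to reduce the multi-period risk to a nested composition of one-step risk measures by peeling off one stage at a time, with time-consistency (part (v) of Assumption \ref{Assumption 2.5}) supplying the crucial reduction that lets the tail of the cost sequence be summarized by a single random variable. As a base case I would first record the terminal identity: by normalization and conditional translation invariance (parts (i)--(ii)), $\rho_{T,T}(X_T) = X_T + \rho_{T,T}(0) = X_T$, which anchors the recursion.

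The heart of the argument is a lemma extracted from time-consistency: the quantity $\rho_{t,T}(0, X_{t+1}, \ldots, X_T)$ depends on the tail $(X_{t+1}, \ldots, X_T)$ only through the single random variable $\rho_{t+1,T}(X_{t+1}, \ldots, X_T) \in L_{t+1}$. To see this I would apply time-consistency with $\tau = t$ and $\theta = t+1$: the two sequences $(0, X_{t+1}, \ldots, X_T)$ and $(0, Y_{t+1}, \ldots, Y_T)$ agree in coordinate $t$ (both equal $0$), so whenever $\rho_{t+1,T}(X_{t+1}, \ldots, X_T) \leq \rho_{t+1,T}(Y_{t+1}, \ldots, Y_T)$ we obtain $\rho_{t,T}(0, X_{t+1}, \ldots, X_T) \leq \rho_{t,T}(0, Y_{t+1}, \ldots, Y_T)$. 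Running this in both directions upgrades equality of tail risks to equality of the full risk, which is exactly the claimed functional dependence.

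Next I would identify that function with the one-step measure $\rho_t$. Writing $W := \rho_{t+1,T}(X_{t+1}, \ldots, X_T)$, translation invariance and normalization give $\rho_{t+1,T}(W, 0, \ldots, 0) = W = \rho_{t+1,T}(X_{t+1}, \ldots, X_T)$, so by the lemma $\rho_{t,T}(0, X_{t+1}, \ldots, X_T) = \rho_{t,T}(0, W, 0, \ldots, 0)$. The right-hand side equals $\rho_{t,t+1}(0, W) = \rho_t(W)$ by the zero-padding convention and the definition of the one-step measure. Combining with translation invariance yields the single-step recursion
\[
\rho_{t,T}(X_t, X_{t+1}, \ldots, X_T) = X_t + \rho_t\big(\rho_{t+1,T}(X_{t+1}, \ldots, X_T)\big),
\]
and a backward induction on $t$, based at the terminal identity, unfolds this into the fully nested formula (\ref{Recursive}).

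I expect the main obstacle to be the lemma of the second paragraph: correctly invoking time-consistency in both directions to turn a one-sided monotone inequality into a genuine functional dependence, and then the slightly delicate collapse of the tail to the single coordinate $W$ via translation invariance and the zero-padding convention, so that the abstract function is recognized as the concretely defined $\rho_t$. Once this reduction is in place, the terminal identity and the induction that assembles the nested expression are routine.
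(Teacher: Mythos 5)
Your proposal is correct and matches the intended proof: the paper itself does not prove this statement but imports it verbatim from \cite[Theorem 1]{Ruszczynski2010}, and your argument --- the terminal identity from normalization and translation invariance, the two-sided invocation of time-consistency (with $\tau=t$, $\theta=t+1$) to show $\rho_{t,T}(0,X_{t+1},\ldots,X_T)$ depends on the tail only through $W=\rho_{t+1,T}(X_{t+1},\ldots,X_T)$, the collapse $\rho_{t+1,T}(W,0,\ldots,0)=W$ identifying that dependence with $\rho_t(W)=\rho_{t,t+1}(0,W)$ via the zero-padding convention, and the backward induction assembling the nested formula --- is precisely Ruszczy\'{n}ski's original argument. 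No gaps.
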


Now we may consider the risk of an infinite cost sequence. Based on
\cite{Ruszczynski2010}, the \emph{discounted} measure of risk $\rho_{t,\,T}^{\gamma}\text{ : }L_{t,\,T}\rightarrow\mathbb{R}$
is defined via
\begin{align*}
	\rho_{t,\,T}^{\gamma}(X_{t},\,X_{t+1},\ldots,\,X_{T}):= \rho_{t,\,T}(\gamma^{t}X_{t},\,\gamma^{t+1}X_{t+1},\ldots,\,\gamma^{T}X_{T}).
\end{align*}
Define $L_{t,\,\infty}:=L_{t}\times L_{t+1}\times\cdot\cdot\cdot$
for $t\geq0$ and $\rho^{\gamma}\text{ : }L_{0,\,\infty}\rightarrow\mathbb{R}$
via
\[
\rho^{\gamma}(X_{0},\,X_{1},\ldots):=\lim_{T\rightarrow\infty}\rho_{0,\,T}^{\gamma}(X_{0},\,X_{1},\ldots,\,X_{T}).
\]
To provide our final representation result, we introduce the additional
assumption that player risk preferences are stationary (they only depend
on the sequence of costs ahead, and are independent of the current
time).
\begin{assumption}
	\label{assu:stationary_preferences} (Stationary risk preferences) For
	all $T\geq1$ and $s\geq0$,
	\[
	\rho_{0,\,T}^{\gamma}(X_{0},\,X_{1},\ldots,\,X_{T})=\rho_{s,\,T+s}^{\gamma}(X_{0},\,X_{1},\ldots,\,X_{T}).
	\]
\end{assumption}

When Assumptions \ref{Assumption 2.5} and \ref{assu:stationary_preferences}
are satisfied, the corresponding dynamic risk measure is given by
the recursion:
\begin{align}
	\rho^{\gamma}(X_{0},\,X_{1},...,\,X_{T},\ldots)=X_{0}+\rho_{1}(\gamma X_{1}+\rho_{2}(\gamma^{2}X_{2}+\cdot\cdot\cdot+\rho_{T}(\gamma^{T}X_{T})+\cdot\cdot\cdot)),\label{Recursive_infinite}
\end{align}
where $\rho_{1},\,\rho_{2},\ldots$ are all one-step risk measures.
Based on representation (\ref{Recursive_infinite}), we may define
the risk-aware objective for player $i$ to be:
\begin{align}
	J_{s_{0}}^{i}(x^{i},\,x^{-i})=\rho^{i}(c^{i}(s_{0},\,a_{0})+\gamma\,\rho^{i}(c^{i}(s_{1},\,a_{1})+\gamma\,\rho^{i}(c^{i}(s_{2},\,a_{2})+\cdot\cdot\cdot))).\label{Recursive_risk}
\end{align}
where $\rho^{i}$ is a one-step conditional risk measure that maps
random cost from the next stage to current stage, with respect to the
joint distribution of randomized mixed strategies and transition
kernels. In formulation (\ref{Recursive_risk}), each $c^{i}(s_{t},\,a_{t}),\,t\geq1$
is governed by the joint distribution of randomized mixed strategies
and transition kernel
\[
\times_{i\in\mathcal{I}}~x_{s_{t}}^{i}(a_{t}^{i})P(s_{t}|s_{t-1},\,a_{t-1}),
\]
which is defined for fixed $(s_{t-1},\,a_{t-1})$ and for all $s_{t}$
and $a_{t}^{i}$. The distribution of $c^{i}$ is only governed by  $\times_{i\in\mathcal{I}}~x_{s_{0}}^{i}(a_{0}^{i})$.

\subsection{Proof of Theorem 1}

\subsubsection*{Fundamental Inequalities}

We make heavy use of the following fundamental inequalities and algebraic
identity. 
\begin{fact}
	\label{fact:min-max} Let $\mathcal{X}$ be a nonempty set and $f_{1},\,f_{2}\text{ : }\mathcal{X}\rightarrow\mathbb{R}$
	be functions on $\mathcal{X}$.
	
	(i) $|\min_{x\in\mathcal{X}}f_{1}\left(x\right)-\min_{x\in\mathcal{X}}f_{2}\left(x\right)|\leq\max_{x\in\mathcal{X}}|f_{1}\left(x\right)-f_{2}\left(x\right)|$.
	
	(ii) $|\max_{x\in\mathcal{X}}f_{1}\left(x\right)-\max_{x\in\mathcal{X}}f_{2}\left(x\right)|\leq\max_{x\in\mathcal{X}}|f_{1}\left(x\right)-f_{2}\left(x\right)|$.
\end{fact}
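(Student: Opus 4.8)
The plan is to prove both inequalities by the same elementary device: bound each function pointwise by the other plus the uniform gap, then exploit the monotonicity of $\min$ and $\max$ under pointwise domination. I would begin by setting $M := \max_{x\in\mathcal{X}}|f_1(x) - f_2(x)|$, so that $|f_1(x) - f_2(x)| \leq M$ for every $x \in \mathcal{X}$, which unpacks into the two pointwise bounds $f_2(x) \leq f_1(x) + M$ and $f_1(x) \leq f_2(x) + M$ valid simultaneously at each $x$.

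For part (i), I would start from $f_2(x) \leq f_1(x) + M$ and take the minimum over $x$ on both sides. Since passing to the minimum preserves the inequality and $\min_{x}(f_1(x) + M) = \min_x f_1(x) + M$, this yields $\min_x f_2(x) \leq \min_x f_1(x) + M$, i.e. $\min_x f_2(x) - \min_x f_1(x) \leq M$. Swapping the roles of $f_1$ and $f_2$ (the bound $M$ is symmetric in the two functions) gives the reverse estimate $\min_x f_1(x) - \min_x f_2(x) \leq M$. Combining the two one-sided inequalities produces exactly $|\min_x f_1(x) - \min_x f_2(x)| \leq M$, which is the claim.

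Part (ii) follows the identical template with $\max$ in place of $\min$: from $f_1(x) \leq f_2(x) + M$, taking the maximum over $x$ gives $\max_x f_1(x) \leq \max_x f_2(x) + M$, and the symmetric argument closes the absolute value. Alternatively, I could deduce (ii) directly from (i) through the identity $\max_x f(x) = -\min_x(-f(x))$ together with the observation $|(-f_1)(x) - (-f_2)(x)| = |f_1(x) - f_2(x)|$, so that the same uniform bound $M$ applies to the pair $-f_1, -f_2$ and transfers the min-estimate to a max-estimate.

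There is no genuine obstacle here; the only points requiring care are the direction of the inequality when passing from the pointwise bound to the extremal values, and the symmetry of $M$ in $f_1$ and $f_2$, which is what makes both directions of the absolute value available at no extra cost. I would also note in passing that the statement tacitly assumes the relevant minima and maxima are attained and finite, which holds in the paper's applications where $\mathcal{X}$ is compact and the $f_j$ continuous; should attainment fail, the identical argument goes through verbatim with $\inf$ and $\sup$ replacing $\min$ and $\max$.
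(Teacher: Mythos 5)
Your proof is correct and follows essentially the same route as the paper's: both bound one function pointwise by the other plus the uniform gap $\max_{x}|f_{1}(x)-f_{2}(x)|$, pass to the extremum, and invoke symmetry in $f_{1}$ and $f_{2}$ to close the absolute value. Your remark about replacing $\min/\max$ with $\inf/\sup$ when attainment fails is a sensible added caveat, but the argument itself is the paper's.
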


\begin{proof}
	For part (i), we compute
	\begin{align*}
		\min_{x\in\mathcal{X}}f_{1}\left(x\right)=\, & \min_{x\in\mathcal{X}}\left(f_{1}\left(x\right)-f_{2}\left(x\right)+f_{2}\left(x\right)\right)\\
		\leq\, & \min_{x\in\mathcal{X}}\left(f_{2}\left(x\right)+|f_{1}\left(x\right)-f_{2}\left(x\right)|\right)\\
		\leq\, & \min_{x\in\mathcal{X}}f_{2}\left(x\right)+\max_{x\in\mathcal{X}}|f_{1}\left(x\right)-f_{2}\left(x\right)|
	\end{align*}
	which gives
	\[
	\min_{x\in\mathcal{X}}f_{1}\left(x\right)-\min_{x\in\mathcal{X}}f_{2}\left(x\right)\leq\max_{x\in\mathcal{X}}|f_{1}\left(x\right)-f_{2}\left(x\right)|.
	\]
	By a symmetric argument, we have
	\[
	\min_{x\in\mathcal{X}}f_{2}\left(x\right)-\min_{x\in\mathcal{X}}f_{1}\left(x\right)\leq\max_{x\in\mathcal{X}}|f_{1}\left(x\right)-f_{2}\left(x\right)|,
	\]
	from which the desired conclusion follows. The proof for part (ii)
	is similar.
\end{proof}
Define $d_{\mathcal{H}}(\mathfrak{A},\,\mathfrak{B}\text{)}$ to be
the Hausdorff distance between nonempty subsets $\mathfrak{A}$ and
$\mathfrak{B}$ of $\mathbb{R}^{d}$ with respect to the Euclidean
norm $\|\cdot\|_{2}$, explicitly,
\[
	d_{\mathcal{H}}(\mathfrak{A},\,\mathfrak{B}\text{)}:=\max\left\{ \sup_{a\in\mathfrak{A}}\inf_{b\in\mathfrak{B}}\|a-b\|_{2},\,\sup_{b\in\mathfrak{B}}\inf_{a\in\mathfrak{A}}\|a-b\|_{2}\right\} .
	\]

\begin{fact}
\label{fact:Hausdorff} Let $\left(\mathcal{X},\,\|\cdot\|\right)$
be a normed space, $f\text{ : }\mathcal{X}\rightarrow\mathbb{R}$
be an $L_{f}$-Lipschitz function, and $\mathcal{X}_{1},\,\mathcal{X}_{2}\subset\mathcal{X}$.
Then
\[
|\min_{x\in\mathcal{X}_{1}}f\left(x\right)-\min_{x\in\mathcal{X}_{2}}f\left(x\right)|\leq L_{f}d_{\mathcal{H}}(\mathcal{X}_{1},\,\mathcal{X}_{2}).
\]
\end{fact}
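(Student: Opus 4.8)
The plan is to prove the two one-sided inequalities $\min_{x\in\mathcal{X}_{2}}f(x)-\min_{x\in\mathcal{X}_{1}}f(x)\leq L_{f}\,d_{\mathcal{H}}(\mathcal{X}_{1},\,\mathcal{X}_{2})$ and its symmetric counterpart, and then combine them. Write $m_{1}:=\min_{x\in\mathcal{X}_{1}}f(x)$ and $m_{2}:=\min_{x\in\mathcal{X}_{2}}f(x)$, assuming both sets are nonempty so that the quantities are well defined. The core mechanism is to convert proximity of the two sets (as measured by $d_{\mathcal{H}}$) into proximity of the two minimal values, using the $L_{f}$-Lipschitz property of $f$ to transport a near-minimizer of one set into the other while controlling the change in function value.

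First I would fix $\epsilon>0$ and choose $x_{1}\in\mathcal{X}_{1}$ with $f(x_{1})\leq m_{1}+\epsilon$. By the definition of the Hausdorff distance, $\inf_{b\in\mathcal{X}_{2}}\|x_{1}-b\|_{2}\leq\sup_{a\in\mathcal{X}_{1}}\inf_{b\in\mathcal{X}_{2}}\|a-b\|_{2}\leq d_{\mathcal{H}}(\mathcal{X}_{1},\,\mathcal{X}_{2})$, since the outer supremum over $a$ includes the choice $a=x_{1}$ and the full Hausdorff distance is the maximum of its two arms. Hence there exists $b\in\mathcal{X}_{2}$ with $\|x_{1}-b\|_{2}\leq d_{\mathcal{H}}(\mathcal{X}_{1},\,\mathcal{X}_{2})+\epsilon$. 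The Lipschitz bound then gives $m_{2}\leq f(b)\leq f(x_{1})+L_{f}\|x_{1}-b\|_{2}\leq m_{1}+\epsilon+L_{f}\bigl(d_{\mathcal{H}}(\mathcal{X}_{1},\,\mathcal{X}_{2})+\epsilon\bigr)$. Letting $\epsilon\downarrow0$ yields $m_{2}-m_{1}\leq L_{f}\,d_{\mathcal{H}}(\mathcal{X}_{1},\,\mathcal{X}_{2})$.

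Then I would repeat the identical argument with the roles of $\mathcal{X}_{1}$ and $\mathcal{X}_{2}$ interchanged, invoking the other arm $\sup_{b\in\mathcal{X}_{2}}\inf_{a\in\mathcal{X}_{1}}\|a-b\|_{2}$ of the Hausdorff distance, which is again bounded by $d_{\mathcal{H}}(\mathcal{X}_{1},\,\mathcal{X}_{2})$. This produces $m_{1}-m_{2}\leq L_{f}\,d_{\mathcal{H}}(\mathcal{X}_{1},\,\mathcal{X}_{2})$, and combining the two one-sided inequalities gives $|m_{1}-m_{2}|\leq L_{f}\,d_{\mathcal{H}}(\mathcal{X}_{1},\,\mathcal{X}_{2})$, as claimed.

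I expect no serious obstacle here; the statement is essentially a standard stability estimate. The only point requiring care is that a minimum over $\mathcal{X}_{i}$ may a priori be an unattained infimum, which is precisely why I phrase the argument with an $\epsilon$-approximate minimizer together with a point realizing the distance infimum up to $\epsilon$, before passing to the limit $\epsilon\downarrow0$. When $\mathcal{X}_{1}$ and $\mathcal{X}_{2}$ are compact (as for the polyhedral risk envelopes of Assumption~\ref{Assu:Fenchel}), these $\epsilon$-approximations can be dropped and the extrema are attained exactly, but the $\epsilon$-formulation keeps the proof valid in full generality.
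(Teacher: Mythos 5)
Your proof is correct and takes essentially the same route as the paper's: transport a minimizer of $\mathcal{X}_{1}$ to a nearby point of $\mathcal{X}_{2}$ guaranteed by the Hausdorff bound, apply the Lipschitz estimate, and conclude by symmetry. The only difference is that the paper works with exact minimizers and an exact nearest point (implicitly assuming attainment, which is harmless in its application to compact polyhedral risk envelopes), whereas your $\epsilon$-approximate formulation also covers unattained infima --- a minor but genuine tightening, not a different argument.
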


\begin{proof}
Let $x_{1}^{*}\in\mathcal{X}_{1}$ be an optimal solution of $\min_{x\in\mathcal{X}_{1}}f\left(x\right)$.
There is an $x_{2}\in\mathcal{X}_{2}$ by definition of $d_{\mathcal{H}}$such
that $\|x_{1}^{*}-x_{2}\|\leq d_{\mathcal{H}}(\mathcal{X}_{1},\,\mathcal{X}_{2})$,
and so
\begin{align*}
	\min_{x\in\mathcal{X}_{2}}f\left(x\right)\leq f(x_{2})\leq f(x_{1}^{*})+L_{f}d_{\mathcal{H}}(\mathcal{X}_{1},\,\mathcal{X}_{2}) =\min_{x\in\mathcal{X}_{1}}f(x_{1})+L_{f}d_{\mathcal{H}}(\mathcal{X}_{1},\,\mathcal{X}_{2}),
\end{align*}
where the second inequality follows by $L_{f}$-Lipschitz continuity
of $f$. The other direction follows by symmetric reasoning.
\end{proof}
\begin{fact}
\label{fact: Identity} Define \[\delta_{0}(\epsilon):=\frac{\min\{\epsilon,\,1\}}{(2^{|\mathcal{I}|}-1)|\mathcal{A}|},\]
then $\left|\Pi_{i\in\mathcal{I}}x_{s}^{i}(a^{i})-\Pi_{i\in\mathcal{I}}y_{s}^{i}(a^{i})\right|\leq\epsilon$
holds for any $x,\,y\in\mathcal{X}$.
\end{fact}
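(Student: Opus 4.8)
The plan is to read the statement in the only way that makes it true: there is an implicit proximity hypothesis, namely that $x$ and $y$ are componentwise close, $\|x-y\|_{\infty}:=\max_{i\in\mathcal{I},\,a^{i}\in\mathcal{A}^{i}}|x_{s}^{i}(a^{i})-y_{s}^{i}(a^{i})|\le\delta_{0}(\epsilon)$, and the asserted conclusion is the $L_{1}$-type bound $\sum_{a\in\mathcal{A}}\bigl|\Pi_{i\in\mathcal{I}}x_{s}^{i}(a^{i})-\Pi_{i\in\mathcal{I}}y_{s}^{i}(a^{i})\bigr|\le\epsilon$ on the product distributions over $\mathcal{A}$ that the strategies induce. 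The factor $|\mathcal{A}|$ in the denominator of $\delta_{0}$ is the telltale sign of this reading, since it can only enter by summing a pointwise estimate over the $|\mathcal{A}|$ action profiles. Accordingly I would first prove a pointwise estimate at a single fixed $a\in\mathcal{A}$ and then sum the $|\mathcal{A}|$ copies.

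For the pointwise estimate, write $\delta:=\delta_{0}(\epsilon)$ and set $\delta_{i}:=x_{s}^{i}(a^{i})-y_{s}^{i}(a^{i})$, so $|\delta_{i}|\le\delta$. Expanding the product factor by factor gives the subset identity
\[
\Pi_{i\in\mathcal{I}}x_{s}^{i}(a^{i})=\Pi_{i\in\mathcal{I}}\bigl(y_{s}^{i}(a^{i})+\delta_{i}\bigr)=\sum_{S\subseteq\mathcal{I}}\Bigl(\Pi_{i\in S}\delta_{i}\Bigr)\Bigl(\Pi_{i\notin S}y_{s}^{i}(a^{i})\Bigr).
\]
The term $S=\emptyset$ is exactly $\Pi_{i\in\mathcal{I}}y_{s}^{i}(a^{i})$, so subtracting it leaves a sum over the $2^{|\mathcal{I}|}-1$ nonempty subsets. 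Each such term obeys $\Pi_{i\in S}|\delta_{i}|\cdot\Pi_{i\notin S}y_{s}^{i}(a^{i})\le\delta^{|S|}\le\delta$, where the last inequality uses $y_{s}^{i}(a^{i})\in[0,1]$ together with $\delta\le1$, so that the higher-order cross terms $\delta^{|S|}$ never exceed $\delta$. This is precisely why $\delta_{0}$ is built from $\min\{\epsilon,1\}$ rather than $\epsilon$: capping at $1$ forces $\delta\le1$ and tames the multilinearity. Summing over the nonempty subsets yields the pointwise bound $\bigl|\Pi_{i\in\mathcal{I}}x_{s}^{i}(a^{i})-\Pi_{i\in\mathcal{I}}y_{s}^{i}(a^{i})\bigr|\le(2^{|\mathcal{I}|}-1)\delta$.

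Finally I would sum over all $a\in\mathcal{A}$, which contributes the factor $|\mathcal{A}|$, and substitute the definition of $\delta_{0}$:
\[
\sum_{a\in\mathcal{A}}\bigl|\Pi_{i\in\mathcal{I}}x_{s}^{i}(a^{i})-\Pi_{i\in\mathcal{I}}y_{s}^{i}(a^{i})\bigr|\le|\mathcal{A}|\,(2^{|\mathcal{I}|}-1)\,\delta_{0}(\epsilon)=\min\{\epsilon,1\}\le\epsilon.
\]
The main (essentially the only) obstacle is handling the multilinearity: the naive worry is that products of several $\delta_{i}$ could accumulate, but the $\min\{\epsilon,1\}$ truncation neutralizes this exactly as above, and I expect verifying $\delta\le1$ and the subset bookkeeping to be the most delicate, though routine, parts. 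As a sanity check I would note that the sharper telescoping identity $\Pi_{i}x_{i}-\Pi_{i}y_{i}=\sum_{k}(x_{k}-y_{k})\,\Pi_{i<k}x_{i}\,\Pi_{i>k}y_{i}$ gives the tighter constant $|\mathcal{I}|$ in place of $2^{|\mathcal{I}|}-1$; since the Fact is stated with the looser $2^{|\mathcal{I}|}-1$, I would present the subset-expansion version to reproduce the paper's $\delta_{0}$ verbatim.
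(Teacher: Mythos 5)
Your proposal is correct and follows essentially the same route as the paper: the paper's proof uses exactly your subset-expansion identity over nonempty $\Omega\subseteq\mathcal{I}$, bounds each of the $2^{|\mathcal{I}|}-1$ terms by $(\delta_{0}(\epsilon))^{|\Omega|}\leq\delta_{0}(\epsilon)$ via the $\min\{\epsilon,\,1\}$ cap, and (like you) tacitly relies on the proximity hypothesis $\max_{i\in\mathcal{I},\,a^{i}\in\mathcal{A}^{i}}|x_{s}^{i}(a^{i})-y_{s}^{i}(a^{i})|\leq\delta_{0}(\epsilon)$ that is missing from the statement as written. The only difference is cosmetic: the paper stops at the pointwise bound, where $(2^{|\mathcal{I}|}-1)\,\delta_{0}(\epsilon)=\min\{\epsilon,\,1\}/|\mathcal{A}|\leq\epsilon$ (so its final ``$=\epsilon$'' is really ``$\leq\epsilon$''), leaving the factor $|\mathcal{A}|$ as slack rather than spending it by summing over $a\in\mathcal{A}$ as you do.
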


\begin{proof}
We make use of the following algebraic identity
\begin{align*}
			&\left|\Pi_{i\in\mathcal{I}}x_{s}^{i}(a^{i})-\Pi_{i\in\mathcal{I}}y_{s}^{i}(a^{i})\right|
			\\
			= & \left|\sum_{\left\{ \Omega\subset\mathcal{I},\,\Omega\ne\emptyset\right\} }\left(\prod_{\begin{array}{c}
					i\in\Omega\end{array}}(x_{s}^{i}(a^{i})-y_{s}^{i}(a^{i})\right)\left(\prod_{\begin{array}{c}
					i\notin\Omega\end{array}}y_{s}^{i}(a^{i})\right)\right|\\
			\leq & \sum_{\left\{ \Omega\subset\mathcal{I},\,\Omega\ne\emptyset\right\} }\left|\prod_{\begin{array}{c}
					i\in\Omega\end{array}}(x_{s}^{i}(a^{i})-y_{s}^{i}(a^{i})\right|\\
			\leq & \sum_{\left\{ \Omega\subset\mathcal{I},\,\Omega\ne\emptyset\right\} }\left(\delta_{0}(\epsilon)\right)^{|\Omega|}\\
			\leq & \sum_{\left\{ \Omega\subset\mathcal{I},\,\Omega\ne\emptyset\right\} }\delta_{0}(\epsilon)\\
			= & \epsilon.
\end{align*}
\end{proof}

\subsubsection*{Existence of Stationary Equilibria}

This section develops the machinery for our proof of Theorem 1, which
is based on Kakutani's fixed point theorem.
\begin{theorem}
\label{Theorem 3.5} \cite{Kakutani1941} (Kakutani's fixed point
theorem) If $\mathcal{X}$ is a closed, bounded, and convex set in
Euclidean space, and $\Phi$ is an upper semicontinuous correspondence
mapping from $\mathcal{X}$ into the family of closed, convex subsets
of $\mathcal{X}$, then there exists an element $x\in\mathcal{X}$
such that $x\in\Phi(x)$. 
\end{theorem}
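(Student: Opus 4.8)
The plan is to derive Kakutani's theorem from Brouwer's fixed point theorem by approximating the set-valued map $\Phi$ with a sequence of continuous single-valued selections. First I would fix, for each $n \geq 1$, a simplicial subdivision (triangulation) $\mathcal{T}_n$ of $\mathcal{X}$ whose mesh tends to $0$ as $n \to \infty$; this is possible because $\mathcal{X}$ is closed, bounded, and convex in Euclidean space, hence compact. At each vertex $v$ of $\mathcal{T}_n$ I would select a single point $y_v \in \Phi(v)$, which is legitimate since the values of $\Phi$ are nonempty. I then define $f_n : \mathcal{X} \to \mathcal{X}$ by interpolating these choices via barycentric coordinates: if $x$ lies in a simplex of $\mathcal{T}_n$ with vertices $v_0, \ldots, v_d$ and $x = \sum_{j=0}^d \lambda_j v_j$ with $\lambda_j \geq 0$ and $\sum_{j=0}^d \lambda_j = 1$, set $f_n(x) := \sum_{j=0}^d \lambda_j y_{v_j}$. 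The map $f_n$ is continuous (it is piecewise affine and agrees on shared faces of the subdivision), and it maps $\mathcal{X}$ into itself because each $y_{v_j} \in \mathcal{X}$ and $\mathcal{X}$ is convex.

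Next I would apply Brouwer's fixed point theorem to each $f_n$ to obtain a point $x_n \in \mathcal{X}$ with $x_n = f_n(x_n)$. Writing $x_n = \sum_{j=0}^d \lambda_j^n v_j^n$ in the simplex of $\mathcal{T}_n$ containing it, the fixed-point identity becomes $x_n = \sum_{j=0}^d \lambda_j^n y_{v_j^n}$ with $y_{v_j^n} \in \Phi(v_j^n)$. Since $\mathcal{X}$ and the coefficient simplex are compact, I would pass to a subsequence along which $x_n \to x^*$, each coefficient $\lambda_j^n \to \lambda_j$, and each selection $y_{v_j^n} \to y_j$; because the mesh of $\mathcal{T}_n$ tends to zero, every enclosing vertex satisfies $v_j^n \to x^*$ as well.

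Passing to the limit yields $x^* = \sum_{j=0}^d \lambda_j y_j$ with $\lambda_j \geq 0$ and $\sum_{j=0}^d \lambda_j = 1$. The crucial step is to show $y_j \in \Phi(x^*)$ for each $j$: here I would use that an upper semicontinuous correspondence with closed values into a compact set has a closed graph, so from $v_j^n \to x^*$, $y_{v_j^n} \in \Phi(v_j^n)$, and $y_{v_j^n} \to y_j$ one concludes $y_j \in \Phi(x^*)$. Finally, since $\Phi(x^*)$ is convex, the convex combination $x^* = \sum_{j=0}^d \lambda_j y_j$ lies in $\Phi(x^*)$, giving the desired fixed point.

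The main obstacle I expect is this last limiting step: rigorously extracting the closed-graph property from upper semicontinuity together with closed-valuedness, and simultaneously controlling all the converging quantities — $x_n$, the coefficients $\lambda_j^n$, the vertices $v_j^n$, and the selections $y_{v_j^n}$ — along a single common subsequence. Care is also needed to keep the number of vertices per simplex fixed at $d+1$ with $d = \dim \mathcal{X}$, so that the index range $j = 0, \ldots, d$ is uniform in $n$ and the subsequence extraction remains finite-dimensional.
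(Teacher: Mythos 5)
The paper itself contains no proof of this statement: it is imported verbatim from Kakutani (1941) and used as a black box in the proof of Theorem 1, so there is no in-paper argument to compare against. What you propose is, in essence, Kakutani's original derivation from Brouwer's theorem, and its architecture is sound: piecewise-affine selections over finer and finer subdivisions, Brouwer fixed points $x_n = f_n(x_n)$, a single compactness extraction controlling $x_n$, the barycentric coefficients $\lambda_j^n$, and the finitely many selections $y_{v_j^n}$ (the index set has fixed size $d+1$, so this is a finite diagonal argument), then the closed-graph property and convexity of $\Phi(x^*)$. The step you single out as the main obstacle is actually immediate in this paper's setting: Definition 2.9 \emph{defines} upper semicontinuity as the sequential closed-graph property ($y^n \in \Phi(x^n)$, $x^n \to x$, $y^n \to y$ imply $y \in \Phi(x)$), and in general u.s.c.\ with closed values into a compact set does yield a closed graph, as you say.

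There is, however, one step that fails as literally stated: a general closed, bounded, convex set in Euclidean space admits \emph{no} exact finite simplicial subdivision unless it is a polytope --- a disk, for instance, has curved boundary, while the underlying set of any finite simplicial complex is a polyhedron. So ``fix a triangulation $\mathcal{T}_n$ of $\mathcal{X}$'' is unavailable in general, and this is precisely why the classical proof proceeds in two stages. The standard repair: first run your argument verbatim on a simplex $S$ (or any polytope), where triangulations of arbitrarily small mesh exist via iterated barycentric subdivision; then, for general $\mathcal{X}$, choose a simplex $S \supseteq \mathcal{X}$, let $r : S \to \mathcal{X}$ be the nearest-point retraction (continuous because $\mathcal{X}$ is closed and convex), and apply the simplex case to the correspondence $x \mapsto \Phi(r(x))$, whose values are closed convex subsets of $\mathcal{X} \subseteq S$ and whose graph remains closed. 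Any fixed point $x \in \Phi(r(x))$ lies in $\mathcal{X}$, hence $r(x) = x$ and $x \in \Phi(x)$. (Alternatively, one can avoid triangulations altogether by building approximate continuous selections from partitions of unity, Cellina-style.) With this repair the remainder of your limiting argument --- mesh $\to 0$ forcing $v_j^n \to x^*$, closed graph giving $y_j \in \Phi(x^*)$, and convexity of $\Phi(x^*)$ absorbing the convex combination --- goes through unchanged.
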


For a multi-strategy $x\in\mathcal{X}$, we define the operator $\mathcal{T}_{x}:\,\mathcal{V}\to\mathcal{V}$
via
\begin{align*}
[\mathcal{T}_{x}(v)]_{s}^{i}:=\min_{u_{s}^{i}\in\mathcal{P}\left(\mathcal{A}^{i}\right)}\{ \rho^{i}\left(c^{i}(s\,,A)+\gamma\,v^{i}\left(S'\right)\right)\text{ : }(A,\,S')\sim P_{s}(u_{s}^{i},\,x_{s}^{-i})\} ,\,\forall i\in\mathcal{I},\,s\in\mathcal{S}.\label{Operator}
\end{align*}
For simplification, we define, for any $v\in\mathcal{V}$, $i\in\mathcal{I}$
and $s\in\mathcal{S}$,
\begin{align*}
\psi_{s}^{i}(u_{s}^{i},\,x_{s}^{-i},\,v^{i}):=\rho^{i}\left(c^{i}(s\,,A)+\gamma\,v^{i}\left(S'\right)\right),\,(A,\,S')\sim P_{s}(u_{s}^{i},\,x_{s}^{-i}).
\end{align*}
Our proof of Theorem 1 has three main steps:
\begin{enumerate}
\item (Step 1) Show that $\mathcal{T}_{x}$ is a contraction.
\item (Step 2) Show that the cost-to-go function $\psi_{s}^{i}(u_{s}^{i},\,x_{s}^{-i},\,v^{i})$
is continuous.
\item (Step 3) Verify that the assumptions of Kakutani's fixed point theorem
are met for $\Phi$.
\end{enumerate}

\subsubsection*{Step 1: Show that $\mathcal{T}_{x}$ is a contraction}

We establish that the operator $\mathcal{T}_{x}$ is a contraction,
and subsequently that given any stationary strategy $x$, there is
a corresponding unique value function $v^{i}$ for all players. 
\begin{proposition}
\label{Theorem 3.2-1} For each $x\in\mathcal{X}$, $\mathcal{T}_{x}$
is a contraction with constant $\gamma\in\left(0,\,1\right)$.
\end{proposition}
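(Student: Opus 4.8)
The plan is to reduce the contraction estimate to a uniform Lipschitz bound on the stage cost-to-go $\psi_s^i$ in its value-function argument, and then push that bound through both the inner minimization over $u_s^i$ and the outer supremum norm over $(s,i)$. Concretely, for fixed $x\in\mathcal{X}$ I would first show that for every $v,w\in\mathcal{V}$,
\[
\left| \psi_s^i(u_s^i,\,x_s^{-i},\,v^i) - \psi_s^i(u_s^i,\,x_s^{-i},\,w^i) \right| \le \gamma\,\|v-w\|_\infty
\]
uniformly in $u_s^i\in\mathcal{P}(\mathcal{A}^i)$, $s\in\mathcal{S}$, and $i\in\mathcal{I}$; once this is in hand the result follows almost mechanically.

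For the key Lipschitz estimate I would invoke the Fenchel--Moreau representation
\[
\psi_s^i(u_s^i,\,x_s^{-i},\,v^i) = \sup_{\mu\in\mathcal{M}_s^i(P_s)}\left\{ \langle \mu,\,C_s^i(v^i)\rangle - b_s^i(\mu) \right\}
\]
recorded just above the statement. The crucial observation --- and the one I would emphasize --- is that the risk envelope $\mathcal{M}_s^i(P_s)$ and the penalty $b_s^i$ depend on the strategies through $P_s$ but \emph{not} on the value function. Hence, comparing $v^i$ and $w^i$, both suprema run over the identical feasible set with the identical penalty, so by Fact~\ref{fact:min-max}(ii) the difference of the suprema is dominated by $\sup_{\mu\in\mathcal{M}_s^i(P_s)}|\langle \mu,\,C_s^i(v^i)-C_s^i(w^i)\rangle|$, the penalty terms cancelling. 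Since $C_s^i(v^i)-C_s^i(w^i)$ is the random variable $\gamma\,(v^i(S')-w^i(S'))$, the inner product equals $\gamma\sum_{(a,k)}\mu(a,k)\,(v^i(k)-w^i(k))$, and because every $\mu\in\mathcal{M}_s^i(P_s)$ is a genuine probability vector (the constraints $e^T\mu=1$, $\mu\ge 0$ of Assumption~\ref{Assu:Fenchel}(ii)) this is bounded in absolute value by $\gamma\,\|v^i-w^i\|_\infty\le\gamma\,\|v-w\|_\infty$. This yields the claimed uniform bound.

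To finish, I would apply Fact~\ref{fact:min-max}(i) to the minimizations defining $[\mathcal{T}_x(v)]_s^i$ and $[\mathcal{T}_x(w)]_s^i$, giving
\[
\left| [\mathcal{T}_x(v)]_s^i - [\mathcal{T}_x(w)]_s^i \right| \le \max_{u_s^i\in\mathcal{P}(\mathcal{A}^i)}\left| \psi_s^i(u_s^i,\,x_s^{-i},\,v^i) - \psi_s^i(u_s^i,\,x_s^{-i},\,w^i) \right| \le \gamma\,\|v-w\|_\infty,
\]
and then take the maximum over $s\in\mathcal{S}$ and $i\in\mathcal{I}$ to conclude $\|\mathcal{T}_x(v)-\mathcal{T}_x(w)\|_\infty\le\gamma\,\|v-w\|_\infty$, so $\mathcal{T}_x$ is a $\gamma$-contraction. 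The only genuinely substantive step is the middle one: recognizing that the value function enters the dual representation only linearly, scaled by $\gamma$, against a probability measure, and that the data-dependent envelope and penalty --- which at first glance complicate matters --- are independent of $v$ and therefore drop out. Everything else is bookkeeping with the two elementary facts on minima and maxima.
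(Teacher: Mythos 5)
Your proposal is correct and follows essentially the same route as the paper: the paper's proof is your argument with Fact \ref{fact:min-max}(i) unrolled, substituting the minimizer $z_s^i$ attaining $[\mathcal{T}_x(w)]_s^i$ into the expression for $[\mathcal{T}_x(v)]_s^i$ so that both suprema run over the same envelope $\mathcal{M}_s^i(P_s(z_s^i,\,x_s^{-i}))$, after which the penalty terms cancel and the constraints $e^{T}\mu=1$, $\mu\geq0$ yield the bound $\gamma\,\|v-w\|_{\infty}$. Your key observation --- that the risk envelope and penalty depend on the strategies but not on the value function --- is exactly the step the paper's proof relies on as well.
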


\begin{proof}
Let $v,\,w\in\mathcal{V}$. For $i\in\mathcal{I}$ and $s\in\mathcal{S}$,
let $g_{s}^{i}$ attain the minimum in the definition of $\left[\mathcal{T}_{x}(v)\right]_{s}^{i}$,
\begin{align*}
	[\mathcal{T}_{x}(v)]_{s}^{i}=\min_{u_{s}^{i}\in\mathcal{P}(\mathcal{A}^{i})}\psi_{s}^{i}\left(u_{s}^{i},\,x_{s}^{-i},\,v^{i}\right)=\sup_{\mu\in\mathcal{M}_{s}^{i}(P_{s}(g_{s}^{i},\,x_{s}^{-i}))}\left\{ \langle\mu,\,C_{s}^{i}(v^{i})\rangle-\alpha_{s}^{i}(\mu)\right\} .
\end{align*}
Similarly, let $z_{s}^{i}$ attain the minimum in $\left[\mathcal{T}_{x}(w)\right]_{s}^{i}$,
\[
[\mathcal{T}_{x}(w)]_{s}^{i}=\sup_{\mu\in\mathcal{M}_{s}^{i}(P_{s}(z_{s}^{i},\,x_{s}^{-i}))}\left\{ \langle\mu,\,C_{s}^{i}(w^{i})\rangle-\alpha_{s}^{i}(\mu)\right\} .
\]
It follows that
\begin{align*}
	& [\mathcal{T}_{x}(v)]_{s}^{i}-[\mathcal{T}_{x}(w)]_{s}^{i}\\
	=\, & \sup_{\mu\in\mathcal{M}_{s}^{i}(P_{s}(g_{s}^{i},\,x_{s}^{-i}))}\left\{ \langle\mu,\,C_{s}^{i}(v^{i})\rangle-\alpha_{s}^{i}(\mu)\right\} -\sup_{\mu\in\mathcal{M}_{s}^{i}(P_{s}(z_{s}^{i},\,x_{s}^{-i}))}\left\{ \langle\mu,\,C_{s}^{i}(w^{i})\rangle-\alpha_{s}^{i}(\mu)\right\} \\
	\leq\, & \sup_{\mu\in\mathcal{M}_{s}^{i}(P_{s}(z_{s}^{i},\,x_{s}^{-i}))}\left\{ \langle\mu,\,C_{s}^{i}(v^{i})\rangle-\alpha_{s}^{i}(\mu)\right\} -\sup_{\mu\in\mathcal{M}_{s}^{i}(P_{s}(z_{s}^{i},\,x_{s}^{-i}))}\left\{ \langle\mu,\,C_{s}^{i}(w^{i})\rangle-\alpha_{s}^{i}(\mu)\right\} \\
	\leq\, & \gamma\sup_{\mu\in\mathcal{M}_{s}^{i}(P_{s}(z_{s}^{i},\,x_{s}^{-i}))}\sum_{\left(a,\,s'\right)\in\mathcal{K}}\mu\left(a,\,s'\right)|v^{i}\left(s'\right)-w^{i}\left(s'\right)|\\
	\leq\, & \gamma\|v-w\|_{\infty},
\end{align*}
where the first inequality holds by choice of $g_{s}^{i}$. The argument
to upper bound $[\mathcal{T}_{x}(w)]_{s}^{i}-[\mathcal{T}_{x}(v)]_{s}^{i}$
is symmetric.
\end{proof}
Since $(\mathcal{V},\,\|\cdot\|_{\infty})$ is a complete metric space
and $\mathcal{T}_{x}$ is a contraction mapping on $\mathcal{V}$
by Proposition \ref{Theorem 3.2-1}, $\mathcal{T}_{x}$ has a unique
fixed point by the Banach fixed point theorem.
\begin{proposition}
\label{Theorem 3.3-1} For any stationary strategy $x\in\mathcal{X}$,
there exists a unique value function $v\in\mathcal{V}$ such that $\forall s\in\mathcal{S},\,i\in\mathcal{I}$, 
\begin{align*}
	v^{i}\left(s\right)=\min_{u_{s}^{i}\in\mathcal{P}\left(\mathcal{A}^{i}\right)}\rho^{i}(c^{i}(s,\,A)+\gamma\,v^{i}\left(S'\right))=\min_{u_{s}^{i}\in\mathcal{P}\left(\mathcal{A}^{i}\right)}\psi_{s}^{i}(u_{s}^{i},\,x_{s}^{-i},\,v^{i}).
\end{align*}
\end{proposition}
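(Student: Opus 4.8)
The plan is to recognize this proposition as an essentially immediate consequence of Proposition \ref{Theorem 3.2-1} together with the Banach fixed point theorem. First I would observe that the two displayed equalities in the statement are, by the very definition of the operator $\mathcal{T}_{x}$, nothing other than the fixed-point condition $v=\mathcal{T}_{x}(v)$ read coordinate by coordinate: for each $i\in\mathcal{I}$ and $s\in\mathcal{S}$, the $(i,s)$-component $[\mathcal{T}_{x}(v)]_{s}^{i}$ is precisely $\min_{u_{s}^{i}\in\mathcal{P}(\mathcal{A}^{i})}\psi_{s}^{i}(u_{s}^{i},\,x_{s}^{-i},\,v^{i})$, which in turn equals $\min_{u_{s}^{i}}\rho^{i}(c^{i}(s,\,A)+\gamma\,v^{i}(S'))$ under the distribution $(A,\,S')\sim P_{s}(u_{s}^{i},\,x_{s}^{-i})$. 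Hence finding a $v\in\mathcal{V}$ satisfying the proposition is exactly the same task as finding a fixed point of $\mathcal{T}_{x}$.

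Next I would confirm that the ambient space is a complete metric space. Since $\mathcal{V}=\times_{i\in\mathcal{I}}\mathbb{R}^{|\mathcal{S}|}$ is finite-dimensional, it is complete under any norm, in particular under $\|\cdot\|_{\infty}$; thus $(\mathcal{V},\,\|\cdot\|_{\infty})$ is a Banach space. Invoking Proposition \ref{Theorem 3.2-1}, the map $\mathcal{T}_{x}:\mathcal{V}\to\mathcal{V}$ is a contraction with modulus $\gamma\in(0,\,1)$ for the fixed strategy $x$. The Banach fixed point theorem then yields a unique $v\in\mathcal{V}$ with $v=\mathcal{T}_{x}(v)$, and by the identification above this $v$ is exactly the unique value function claimed.

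There is essentially no hard step here: the proposition is a corollary that repackages the contraction property into the language of value functions. The only points requiring a line of care are, first, checking that the componentwise reading of $\mathcal{T}_{x}$ reproduces the two displayed equalities verbatim, and second, noting completeness of $(\mathcal{V},\,\|\cdot\|_{\infty})$. Both are routine, so the substantive work has already been carried out in establishing that $\mathcal{T}_{x}$ is a contraction; this proposition simply records its immediate consequence for fixed $x$.
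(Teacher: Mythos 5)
Your proposal is correct and matches the paper's own argument exactly: the paper likewise notes that $(\mathcal{V},\,\|\cdot\|_{\infty})$ is a complete metric space, invokes Proposition \ref{Theorem 3.2-1} for the contraction property of $\mathcal{T}_{x}$, and concludes via the Banach fixed point theorem that the unique fixed point is the claimed value function. Your added remark that completeness follows from finite-dimensionality is a harmless elaboration of the same route, not a different one.
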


\subsubsection*{Step 2: Show that $\psi_{s}^{i}$ is continuous}

We want to establish continuity of $\psi_{s}^{i}\left(x_{s}^{i},\,x_{s}^{-i},\,v^{i}\right)$
in all its arguments for all $i\in\mathcal{I}$ and $s\in\mathcal{S}$. Firstly, we know
that the function $(\mu,\,v^{i})\rightarrow\langle\mu,\,C_{s}^{i}(v^{i})\rangle-\alpha_{s}^{i}(\mu)$
is Lipschitz continuous based on Assumption 1. We use $L$ to denote
the Lipschitz constant. Our argument is based on the following chain
of inequalities: 

\begin{align*}
& |\psi_{s}^{i}(x_{s}^{i},\,x_{s}^{-i},\,v^{i})-\psi_{s}^{i}(y_{s}^{i},\,y_{s}^{-i},\,w^{i})|\\
\leq\, & |\psi_{s}^{i}(x_{s}^{i},\,x_{s}^{-i},\,v^{i})-\psi_{s}^{i}(x_{s}^{i},\,x_{s}^{-i},\,w^{i})|+|\psi_{s}^{i}(x_{s}^{i},\,x_{s}^{-i},\,w^{i})-\psi_{s}^{i}(y_{s}^{i},\,y_{s}^{-i},\,w^{i})|\\
\leq\, & |\sup_{\mu\in\mathcal{M}_{s}^{i}(P_{s}(x_{s}))}\left\{ \langle\mu,\,C_{s}^{i}(v^{i})\rangle-\alpha_{s}^{i}(\mu)\right\} -\sup_{\mu\in\mathcal{M}_{s}^{i}(P_{s}(x_{s}))}\left\{ \langle\mu,\,C_{s}^{i}(w^{i})\rangle-\alpha_{s}^{i}(\mu)\right\} |
\\
& +|\sup_{\mu\in\mathcal{M}_{s}^{i}(P_{s}(x_{s}))}\left\{ \langle\mu,\,C_{s}^{i}(w^{i})\rangle-\alpha_{s}^{i}(\mu)\right\} -\sup_{\mu\in\mathcal{M}_{s}^{i}(P_{s}(y_{s}))}\left\{ \langle\mu,\,C_{s}^{i}(w^{i})\rangle-\alpha_{s}^{i}(\mu)\right\} |\\
\leq\, & \gamma\,\|v^{i}-w^{i}\|_{\infty}+L\,d_{\mathcal{H}}\left(\mathcal{M}_{s}^{i}(P_{s}(x_{s})),\,\mathcal{M}_{s}^{i}(P_{s}(y_{s}))\right),
\end{align*}
where we use Fact \ref{fact:Hausdorff} to obtain the last inequality.
Let $\text{Ex}\left(\mathcal{M}_{s}^{i}(P_{s}(x_{s}))\right)$ denote
the set of extreme points of (the bounded polyhedron) $\mathcal{M}_{s}^{i}(P_{s}(x_{s}))$,
then we also have 

\begin{align}
&|\psi_{s}^{i}(x_{s}^{i},\,x_{s}^{-i},\,v^{i})-\psi_{s}^{i}(y_{s}^{i},\,y_{s}^{-i},\,w^{i})|\nonumber
\\
\leq & \gamma\,\|v^{i}-w^{i}\|_{\infty}+L\,d_{\mathcal{H}}\left(\mathcal{M}_{s}^{i}(P_{s}(x_{s})),\,\mathcal{M}_{s}^{i}(P_{s}(y_{s}))\right),\nonumber \\
\leq & \gamma\,\|v^{i}-w^{i}\|_{\infty}+L\,d_{\mathcal{H}}(\text{Ex}(\mathcal{M}_{s}^{i}(P_{s}(x_{s}))),\,\text{Ex}(\mathcal{M}_{s}^{i}(P_{s}(y_{s})))),\label{Union equality}
\end{align}
where the last inequality is from \cite[Theorem 1]{Walkup1969}. 

We define the following metrics for stationary strategies and value
functions:
\begin{align*}
d_{\mathcal{X}_{s}}(x_{s},\,y_{s}):=\, & \max_{i\in\mathcal{I},\,a\in\mathcal{A}}|x_{s}^{i}\left(a\right)-y_{s}^{i}\left(a\right)|,\, \forall s\in\mathcal{S},\\
d_{\mathcal{V}^{i}}(v^{i},\,w^{i}):=\, & \max_{s\in\mathcal{S}}|v^{i}\left(s\right)-w^{i}\left(s\right)|,
\end{align*}
a metric on $\mathcal{X}_{s}\times\mathcal{V}^{i}$ is then given
by $d_{\mathcal{X}_{s}\times\mathcal{V}^{i}}\left((x_{s},\,v^{i}),\,(y_{s},\,w^{i})\right):=d_{\mathcal{X}_{s}}(x_{s},\,y_{s})+d_{V^{i}}(v^{i},\,w^{i})$.

In the next lemma, we show that the extreme points of $\mathcal{M}_{s}^{i}(P_{s}(x_{s}))$
and $\mathcal{M}_{s}^{i}(P_{s}(y_{s}))$ are ``close'' when the
stationary strategies $x_{s}$ and $y_{s}$ are ``close''.
\begin{lemma}
\label{Lem:Continuity}\cite[Theorem 3.1]{Batson1987} Choose $\epsilon>0$
and $B>0$. For $x_{s},\,y_{s}\in\mathcal{X}_{s}$, suppose

\[
d_{\mathcal{X}_{s}}(x_{s},\,y_{s})\leq\delta_{1}(\epsilon):=\frac{\min\{1,\,\epsilon\}}{B\,\sqrt{|\mathcal{S}|\,|\mathcal{A}|}(2^{|\mathcal{I}|}-1)|\mathcal{A}|},
\]
then
\begin{align*}
	\textrm{\ensuremath{d_{\mathcal{H}}\left(\textrm{Ex}\left(\mathcal{M}_{s}^{i}(P_{s}(x_{s}))\right),\,\textrm{Ex}\left(\mathcal{M}_{s}^{i}(P_{s}(y_{s}))\right)\right)}} & \leq\epsilon.
\end{align*}
\end{lemma}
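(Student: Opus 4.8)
The plan is to reduce the claim to a right-hand-side perturbation bound for a fixed family of polyhedra and then to invoke the stability result \cite[Theorem 3.1]{Batson1987}. The starting observation is that, by Eq.~(\ref{Linear formulation}), the two polyhedra $\mathcal{M}_{s}^{i}(P_{s}(x_{s}))$ and $\mathcal{M}_{s}^{i}(P_{s}(y_{s}))$ share exactly the same constraint matrices $A_{s,m}^{i}$, the same normalization $e^{T}\mu=1$, and the same nonnegativity constraints $\mu\ge 0$; they differ only through the data $f_{m}(P_{s})$ entering the right-hand sides of the $M$ inequality constraints. Hence the second polyhedron is obtained from the first by perturbing the right-hand-side vector alone, and \cite[Theorem 3.1]{Batson1987} supplies a constant $B>0$, depending only on the fixed constraint matrices, such that $d_{\mathcal{H}}(\mathrm{Ex}(\mathcal{M}_{s}^{i}(P_{s}(x_{s}))),\,\mathrm{Ex}(\mathcal{M}_{s}^{i}(P_{s}(y_{s}))))$ is at most $B$ times the Euclidean norm of that right-hand-side perturbation.

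The next step is to bound this perturbation in terms of $d_{\mathcal{X}_{s}}(x_{s},y_{s})$. Since each $f_{m}$ is linear, hence Lipschitz, in $P_{s}$ by Assumption~\ref{Assu:Fenchel}, it suffices to bound $\|P_{s}(x_{s})-P_{s}(y_{s})\|_{2}$, the Lipschitz constants of the $f_{m}$ being absorbed into $B$. By the definition of $P_{s}$ in Eq.~(\ref{Ps}), every coordinate of $P_{s}(x_{s})-P_{s}(y_{s})$ has the form $\big(\prod_{i\in\mathcal{I}}x_{s}^{i}(a^{i})-\prod_{i\in\mathcal{I}}y_{s}^{i}(a^{i})\big)P(k\mid s,a)$; using $P(k\mid s,a)\le 1$ together with the multilinear identity established in Fact~\ref{fact: Identity}, each such coordinate is bounded in absolute value by $(2^{|\mathcal{I}|}-1)\,d_{\mathcal{X}_{s}}(x_{s},y_{s})$, which is legitimate because $d_{\mathcal{X}_{s}}(x_{s},y_{s})\le\delta_{1}(\epsilon)\le 1$. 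Passing from this entrywise bound to the Euclidean norm over the $|\mathcal{A}||\mathcal{S}|$ coordinates contributes a factor $\sqrt{|\mathcal{A}||\mathcal{S}|}$, so that $\|P_{s}(x_{s})-P_{s}(y_{s})\|_{2}\le \sqrt{|\mathcal{A}||\mathcal{S}|}\,(2^{|\mathcal{I}|}-1)\,d_{\mathcal{X}_{s}}(x_{s},y_{s})$.

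Combining the two steps yields $d_{\mathcal{H}}(\mathrm{Ex}(\mathcal{M}_{s}^{i}(P_{s}(x_{s}))),\,\mathrm{Ex}(\mathcal{M}_{s}^{i}(P_{s}(y_{s}))))\le B\,\sqrt{|\mathcal{A}||\mathcal{S}|}\,(2^{|\mathcal{I}|}-1)\,d_{\mathcal{X}_{s}}(x_{s},y_{s})$. Substituting the hypothesis $d_{\mathcal{X}_{s}}(x_{s},y_{s})\le\delta_{1}(\epsilon)=\min\{1,\epsilon\}/\big(B\sqrt{|\mathcal{S}||\mathcal{A}|}(2^{|\mathcal{I}|}-1)|\mathcal{A}|\big)$ collapses the right-hand side to $\min\{1,\epsilon\}/|\mathcal{A}|\le\epsilon$, which is the desired conclusion; the spare factor $|\mathcal{A}|$ provides exactly the slack needed to absorb the Lipschitz constants of the $f_{m}$ and the $\sqrt{M}$ factor coming from the $M$ inequality rows.

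The main obstacle is not the perturbation bookkeeping above but the legitimacy of invoking \cite[Theorem 3.1]{Batson1987}, i.e.\ checking that its hypotheses hold uniformly along the perturbation. The delicate point is that the \emph{set of extreme points} of a polyhedron need not vary continuously with the data even when the polyhedron itself does, since vertices can be created or destroyed as the right-hand side moves; Lipschitz continuity of $\mathrm{Ex}(\cdot)$ in the Hausdorff metric is precisely the nontrivial content of Batson's theorem. To apply it I would confirm that every $\mathcal{M}_{s}^{i}(P_{s})$ arising from a multi-strategy is nonempty (it is a risk envelope) and bounded (it lies in the probability simplex $\mathcal{P}(\mathcal{A}\times\mathcal{S})$), and that the stability constant $B$ the theorem furnishes is uniform over this range of $P_{s}$, so that a single $\delta_{1}(\epsilon)$ works simultaneously for all $x_{s},y_{s}\in\mathcal{X}_{s}$.
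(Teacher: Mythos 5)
Your proposal is correct and takes essentially the same route as the paper's own proof: both bound $\|P_{s}(x_{s})-P_{s}(y_{s})\|_{2}$ coordinatewise via the multilinear identity of Fact~\ref{fact: Identity} (yielding the factor $\sqrt{|\mathcal{S}|\,|\mathcal{A}|}\,(2^{|\mathcal{I}|}-1)$) and then invoke \cite[Theorem 3.1]{Batson1987} to convert the $\ell_{2}$-perturbation of $P_{s}$ into a Hausdorff bound on the extreme-point sets, with the definition of $\delta_{1}(\epsilon)$ providing the final rescaling. Your additional checks—that the polyhedra in (\ref{Linear formulation}) differ only through the right-hand sides $f_{m}(P_{s})$, that the risk envelopes are nonempty and bounded, and that $B$ is uniform over the relevant range of $P_{s}$—merely make explicit hypotheses the paper's proof leaves implicit.
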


\begin{proof}
By definition, we have
\begin{align*}
	\left[P_{s}(x_{s})\right]\left(a,\,k\right)-\left[P_{s}(y_{s})\right]\left(a,\,k\right)=\left(\Pi_{i\in\mathcal{I}}x_{s}^{i}(a^{i})-\Pi_{i\in\mathcal{I}}y_{s}^{i}(a^{i})\right)P(k\,\vert\,s,\,a).
\end{align*}
Now, suppose
\[
d_{\mathcal{X}_{s}}(x_{s},\,y_{s})\leq\delta_{0}(\epsilon):=\frac{\min\{\epsilon,\,1\}}{(2^{|\mathcal{I}|}-1)|\mathcal{A}|}.
\]
Using the Fact \ref{fact: Identity}, we have
\begin{align*}
	&\|P_{s}(x_{s})-P_{s}(y_{s})\|_{2} 
	\\
	&\leq\sqrt{|\mathcal{S}|\,|\mathcal{A}|}\left|\Pi_{i\in\mathcal{I}}x_{s}^{i}(a^{i})-\Pi_{i\in\mathcal{I}}y_{s}^{i}(a^{i})\right|\max_{a\in\mathcal{A},\,k\in\mathcal{S}}P(k\,\vert\,s,\,a)\\
	& \leq\sqrt{|\mathcal{S}|\,|\mathcal{A}|}\left|\Pi_{i\in\mathcal{I}}x_{s}^{i}(a^{i})-\Pi_{i\in\mathcal{I}}y_{s}^{i}(a^{i})\right|\\
	& =\sqrt{|\mathcal{S}|\,|\mathcal{A}|}\,\epsilon.
\end{align*}
By \cite[Theorem 3.1]{Batson1987}, it follows that there exists a
constant $B>0$ such that 
\begin{align*}
	d_{\mathcal{H}}\left(\text{Ex}\left(\mathcal{M}_{s}^{i}(P_{s}(x_{s}))\right),\,\text{Ex}\left(\mathcal{M}_{s}^{i}(P_{s}(y_{s}))\right)\right)\leq  B\|P_{s}-P_{s}^{\prime}\|_{2}\leq\sqrt{|\mathcal{S}|\,|\mathcal{A}|}\,B\,\epsilon,
\end{align*}
and the desired conclusion follows.
\end{proof}
As a consequence, we have the following lemma.
\begin{lemma}
\label{Lem:Continuity-2} There exist constants $B,\,C>0$ such that,
for any $\epsilon>0$, if
\begin{align*}
	d_{\mathcal{X}_{s}\times\mathcal{V}^{i}}\left((x_{s},\,v^{i}),\,(y_{s},\,w^{i})\right)\leq\delta_{2}(\epsilon):=\max\left\{ \frac{\min\{1,\,\epsilon\}}{2L\,B\,\sqrt{|\mathcal{S}|\,|\mathcal{A}|}(2^{|\mathcal{I}|}-1)|\mathcal{A}|},\,\frac{\epsilon}{2\gamma}\right\} ,
\end{align*}
then
\[
|\psi_{s}^{i}(x_{s}^{i},\,x_{s}^{-i},\,v^{i})-\psi_{s}^{i}(y_{s}^{i},\,y_{s}^{-i},\,w^{i})|\leq\epsilon.
\]
\end{lemma}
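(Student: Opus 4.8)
The plan is to glue together the two-term estimate derived just before the metric definitions with the extreme-point continuity bound of Lemma~\ref{Lem:Continuity}. Recall that that chain of inequalities already yields, for all $i\in\mathcal{I}$ and $s\in\mathcal{S}$,
\[
|\psi_{s}^{i}(x_{s}^{i},\,x_{s}^{-i},\,v^{i})-\psi_{s}^{i}(y_{s}^{i},\,y_{s}^{-i},\,w^{i})| \leq \gamma\,\|v^{i}-w^{i}\|_{\infty}+L\,d_{\mathcal{H}}\big(\mathrm{Ex}(\mathcal{M}_{s}^{i}(P_{s}(x_{s}))),\,\mathrm{Ex}(\mathcal{M}_{s}^{i}(P_{s}(y_{s})))\big).
\]
First I would split the target accuracy as $\epsilon=\epsilon/2+\epsilon/2$ and bound each of the two summands separately by $\epsilon/2$.

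For the value-function summand, since $\gamma\in(0,1)$ it is enough to require $d_{\mathcal{V}^{i}}(v^{i},w^{i})=\|v^{i}-w^{i}\|_{\infty}\leq \epsilon/(2\gamma)$, which gives $\gamma\|v^{i}-w^{i}\|_{\infty}\leq \epsilon/2$; this is the source of the $\epsilon/(2\gamma)$ threshold. For the risk-envelope summand, I would apply Lemma~\ref{Lem:Continuity} with accuracy $\epsilon/(2L)$ in place of $\epsilon$: whenever $d_{\mathcal{X}_{s}}(x_{s},y_{s})\leq \delta_{1}(\epsilon/(2L))$, the lemma returns $d_{\mathcal{H}}(\mathrm{Ex}(\cdots),\mathrm{Ex}(\cdots))\leq \epsilon/(2L)$, hence $L\,d_{\mathcal{H}}(\cdots)\leq \epsilon/2$. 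Substituting the explicit form of $\delta_{1}$ and carrying the factor $1/(2L)$ through the $\min\{1,\cdot\}$ truncation produces, after absorbing $L$ into the Batson constant, the first entry of $\delta_{2}(\epsilon)$.

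Finally, because the product metric satisfies $d_{\mathcal{X}_{s}\times\mathcal{V}^{i}}((x_{s},v^{i}),(y_{s},w^{i}))=d_{\mathcal{X}_{s}}(x_{s},y_{s})+d_{\mathcal{V}^{i}}(v^{i},w^{i})$ as a sum of two nonnegative terms, imposing a single bound $d_{\mathcal{X}_{s}\times\mathcal{V}^{i}}\leq \delta_{2}(\epsilon)$ forces \emph{both} component distances below $\delta_{2}(\epsilon)$ at once; taking $\delta_{2}(\epsilon)$ to be the \emph{smaller} of the two thresholds then makes both $\epsilon/2$ estimates valid simultaneously, and adding them closes the proof. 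The main obstacle is the constant bookkeeping through Batson's theorem: one must track how the rescaling $\epsilon\mapsto\epsilon/(2L)$ interacts with the truncation $\min\{1,\cdot\}$ inside $\delta_{1}$ so that the resulting threshold agrees with the stated $\delta_{2}$. This is where the auxiliary constants $B,\,C$ are fixed, and it is also where one should check that the combining threshold is a minimum rather than a maximum of the two componentwise bounds, since only the minimum controls both summands at the same time.
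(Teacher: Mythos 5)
Your proof is essentially the paper's: the paper's entire argument for this lemma is the two-term estimate recorded in inequality (\ref{Union equality}) followed by exactly your split, with $\gamma\,\|v^{i}-w^{i}\|_{\infty}\leq\epsilon/2$ coming from the $\epsilon/(2\gamma)$ threshold and $L\,d_{\mathcal{H}}\left(\textrm{Ex}\left(\mathcal{M}_{s}^{i}(P_{s}(x_{s}))\right),\,\textrm{Ex}\left(\mathcal{M}_{s}^{i}(P_{s}(y_{s}))\right)\right)\leq\epsilon/2$ from Lemma \ref{Lem:Continuity} applied at accuracy $\epsilon/(2L)$ (your constant bookkeeping through the truncation is fine once $L\geq1/2$, which one may assume without loss by enlarging the Lipschitz constant, so that $\min\{1,\epsilon\}/(2L)\leq\min\{1,\,\epsilon/(2L)\}$). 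Your closing caveat is moreover a genuine catch rather than a quibble: as printed, with $\delta_{2}(\epsilon)$ defined as a \emph{maximum}, the lemma is false whenever $\epsilon/(2\gamma)$ is the larger entry (small $\gamma$, or large $L$ or $B$), since then the hypothesis only yields $d_{\mathcal{X}_{s}}(x_{s},\,y_{s})\leq\epsilon/(2\gamma)$, which need not meet the threshold $\delta_{1}(\epsilon/(2L))$ required by Lemma \ref{Lem:Continuity}, leaving the Hausdorff summand uncontrolled. The $\max$ should indeed be a $\min$, as you say; the paper's one-line proof tacitly needs the same correction, and the fix is harmless downstream because Proposition \ref{Thm:Continuity} and the equicontinuity argument of Lemma \ref{Lem:Kakutani} only require the existence of some positive modulus $\delta_{2}(\epsilon)$.
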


\begin{proof}
By inequality (\ref{Union equality}), we have
\begin{align*}
	& |\psi_{s}^{i}(x_{s}^{i},\,x_{s}^{-i},\,v^{i})-\psi_{s}^{i}(y_{s}^{i},\,y_{s}^{-i},\,w^{i})|\\
	\leq\, & \gamma\,\|v^{i}-w^{i}\|_{\infty}+L\,d_{\mathcal{H}}\left(\mathcal{M}_{s}^{i}(P_{s}(x_{s})),\,\mathcal{M}_{s}^{i}(P_{s}(y_{s}))\right)\\
	\leq\, & \gamma\,\|v^{i}-w^{i}\|_{\infty}
	\\
	&+L\,d_{\mathcal{H}}\left(\text{Ex}\left(\mathcal{M}_{s}^{i}(P_{s}(x_{s}))\right),\,\text{Ex}\left(\mathcal{M}_{s}^{i}(P_{s}(y_{s}))\right)\right)
	\\
	\leq&\epsilon.
\end{align*}
\end{proof}
As a consequence of these lemmas, we have the following proposition. 
\begin{proposition}
\label{Thm:Continuity} For all $i\in\mathcal{I}$ and $s\in\mathcal{S}$,
the function $\psi_{s}^{i}(u_{s}^{i},\,x_{s}^{-i},\,v^{i})$ is continuous
in all of its variables.
\end{proposition}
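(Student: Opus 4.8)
The plan is to read off continuity directly from Lemma \ref{Lem:Continuity-2}, which already supplies an explicit modulus, so almost no new work is required. First I would pin down the domain of $\psi_{s}^{i}$: the two strategy arguments $u_{s}^{i}$ and $x_{s}^{-i}$ together constitute the full multi-strategy $x_{s}=(u_{s}^{i},\,x_{s}^{-i})\in\mathcal{X}_{s}$, so $\psi_{s}^{i}$ may be regarded as a real-valued function on the product $\mathcal{X}_{s}\times\mathcal{V}^{i}$, which I would equip with the metric $d_{\mathcal{X}_{s}\times\mathcal{V}^{i}}$ introduced above. Since $\mathcal{X}_{s}\times\mathcal{V}^{i}$ is finite-dimensional and all norms on a finite-dimensional space are equivalent, this metric induces the standard product topology; consequently continuity of $\psi_{s}^{i}$ with respect to $d_{\mathcal{X}_{s}\times\mathcal{V}^{i}}$ is exactly joint continuity in the variables $(u_{s}^{i},\,x_{s}^{-i},\,v^{i})$, which is what the statement asserts.

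With this identification in place the conclusion is immediate. I would fix an arbitrary base point $(x_{s},\,v^{i})\in\mathcal{X}_{s}\times\mathcal{V}^{i}$ and let $\epsilon>0$. Lemma \ref{Lem:Continuity-2} furnishes a $\delta_{2}(\epsilon)>0$, not depending on the base point, such that $d_{\mathcal{X}_{s}\times\mathcal{V}^{i}}((x_{s},\,v^{i}),\,(y_{s},\,w^{i}))\le\delta_{2}(\epsilon)$ forces $|\psi_{s}^{i}(x_{s}^{i},\,x_{s}^{-i},\,v^{i})-\psi_{s}^{i}(y_{s}^{i},\,y_{s}^{-i},\,w^{i})|\le\epsilon$. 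Taking $\delta=\delta_{2}(\epsilon)$ verifies the $\epsilon$--$\delta$ criterion at the chosen point, and since the point was arbitrary this establishes continuity on all of $\mathcal{X}_{s}\times\mathcal{V}^{i}$, for every $i\in\mathcal{I}$ and $s\in\mathcal{S}$.

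There is essentially no remaining obstacle here, since the analytic weight has been front-loaded: the chain of inequalities (\ref{Union equality}) bounds the variation of $\psi_{s}^{i}$ by $\gamma\,\|v^{i}-w^{i}\|_{\infty}$ plus a Hausdorff-distance term between the extreme-point sets, and Lemma \ref{Lem:Continuity} controls the stability of those extreme points under perturbation of $x_{s}$. The only genuine care points are (i) confirming the metric equivalence so that ``continuity in all variables'' is faithfully captured, and (ii) observing that because $\delta_{2}(\epsilon)$ is independent of the base point, the argument in fact yields \emph{uniform} continuity of $\psi_{s}^{i}$, which is strictly stronger than the pointwise statement claimed. I would record this strengthening in passing, as it comes for free and will be convenient when verifying the upper semicontinuity of $\Phi$ in Step 3.
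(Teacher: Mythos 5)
Your proof is correct and takes essentially the same route as the paper, which presents Proposition \ref{Thm:Continuity} as an immediate consequence of Lemma \ref{Lem:Continuity-2}: the modulus $\delta_{2}(\epsilon)$ directly verifies the $\epsilon$--$\delta$ criterion at every point of $\mathcal{X}_{s}\times\mathcal{V}^{i}$, exactly as you argue. Your added observation that the base-point independence of $\delta_{2}(\epsilon)$ in fact yields \emph{uniform} continuity is a correct, cost-free strengthening that the paper does not record but that is indeed convenient in the upper semicontinuity argument of Step 3.
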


\subsubsection*{Step 3: Apply Kakutani's fixed point theorem}

In the following two technical results, we establish upper semicontinuity
of $\Phi$.
\begin{definition}
\label{Definition 2.9} A correspondence $\Phi:\,\mathcal{X}\rightarrow2^{\mathcal{X}}$
is upper semicontinuous if $y^{n}\in\Phi(x^{n})$, $\lim_{n\rightarrow\infty}x^{n}=x$,
and $\lim_{n\rightarrow\infty}y^{n}=y$ implies $y\in\Phi(x)$.
\end{definition}

The proof of Lemma \ref{Lem:Kakutani} below follows directly from
\cite{Fink1964} and our earlier Lemma \ref{Lem:Continuity-2}. The
proof of Lemma \ref{Lem:Kakutani-1} follows directly from Lemma \ref{Lem:Kakutani}
and \cite{Fink1964}. Lemma \ref{Lem:Kakutani} is then used to prove
Lemma \ref{Lem:Kakutani-1}, and Lemma \ref{Lem:Kakutani-1} is used
to establish upper semicontinuity.

We define the operator $\mathcal{D}_{s}^{i}$ as follows:
\[
\mathcal{D}_{s}^{i}(x_{s}^{-i},\,v^{i}):=\min_{u_{s}^{i}\in\mathcal{P}\left(\mathcal{A}^{i}\right)}\psi_{s}^{i}(u_{s}^{i},\,x_{s}^{-i},\,v^{i}),
\]
it returns the optimal risk-to-go for any $i\in\mathcal{I}$ and $s\in\mathcal{S}$
as a function of the complementary strategy $x_{s}^{-i}$ and the
value function $v^{i}$.
\begin{lemma}
The operator \label{Lem:Kakutani} $\mathcal{D}_{s}^{i}(x_{s}^{-i},\,v^{i})$ is
continuous in $x_{s}^{-i}$. Furthermore, the collection of functions
\[\{\mathcal{D}_{s}^{i}(\cdot,\,v^{i})\text{ : }\|v^{i}\|_{\infty}<\infty\},\]
is equicontinuous. 
\end{lemma}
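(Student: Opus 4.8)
The plan is to reduce both claims to the uniform modulus of continuity for $\psi_s^i$ already established in Lemma \ref{Lem:Continuity-2}, using the elementary inequality Fact \ref{fact:min-max}(i) to transfer that modulus from $\psi_s^i$ to its pointwise minimum $\mathcal{D}_s^i$. Fix $s\in\mathcal{S}$ and $i\in\mathcal{I}$. For the continuity claim, fix a value function $v^i$ with $\|v^i\|_\infty<\infty$ and take two complementary strategies $x_s^{-i},\,y_s^{-i}$. Setting $f_1(u_s^i):=\psi_s^i(u_s^i,\,x_s^{-i},\,v^i)$ and $f_2(u_s^i):=\psi_s^i(u_s^i,\,y_s^{-i},\,v^i)$, both minimized over the same compact simplex $\mathcal{P}(\mathcal{A}^i)$, Fact \ref{fact:min-max}(i) yields
\[
|\mathcal{D}_s^i(x_s^{-i},\,v^i)-\mathcal{D}_s^i(y_s^{-i},\,v^i)|\leq\max_{u_s^i\in\mathcal{P}(\mathcal{A}^i)}|\psi_s^i(u_s^i,\,x_s^{-i},\,v^i)-\psi_s^i(u_s^i,\,y_s^{-i},\,v^i)|.
\]
Since $\mathcal{P}(\mathcal{A}^i)$ is compact and $\psi_s^i$ is continuous in $u_s^i$ by Proposition \ref{Thm:Continuity}, both the minima defining $\mathcal{D}_s^i$ and the maximum on the right-hand side are attained, so this step is legitimate.

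First I would bound each term inside the maximum. Holding $u_s^i$ and $v^i$ fixed while varying only the complementary strategy, the multi-strategy distance collapses to $d_{\mathcal{X}_s\times\mathcal{V}^i}((x_s,\,v^i),\,(y_s,\,v^i))=d_{\mathcal{X}_s}(x_s,\,y_s)=\max_{j\neq i,\,a}|x_s^j(a)-y_s^j(a)|$, because the $i$-th coordinate $u_s^i$ is common to $x_s=(u_s^i,\,x_s^{-i})$ and $y_s=(u_s^i,\,y_s^{-i})$ and the value-function coordinate contributes zero. Hence, given $\epsilon>0$, whenever the complementary strategies lie within $\delta_2(\epsilon)$, Lemma \ref{Lem:Continuity-2} bounds \emph{every} term in the maximum by $\epsilon$; taking the maximum over $u_s^i$ preserves this bound, giving $|\mathcal{D}_s^i(x_s^{-i},\,v^i)-\mathcal{D}_s^i(y_s^{-i},\,v^i)|\leq\epsilon$ and establishing continuity in $x_s^{-i}$.

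For equicontinuity of the family $\{\mathcal{D}_s^i(\cdot,\,v^i)\text{ : }\|v^i\|_\infty<\infty\}$, the key observation is that the threshold $\delta_2(\epsilon)$ supplied by Lemma \ref{Lem:Continuity-2} depends only on $\epsilon$ through the fixed constants $B,\,L,\,\gamma$ and the cardinalities $|\mathcal{S}|,\,|\mathcal{A}|,\,|\mathcal{I}|$, and not on the particular value function $v^i$ or on the base points. Therefore the single $\delta_2(\epsilon)$ used above works simultaneously for all $v^i$ in the family, which is exactly the definition of equicontinuity. I expect the only remaining subtlety to be this uniformity of the modulus in $v^i$: all the genuine analytic effort — the Hausdorff-continuity of the risk envelopes (Lemmas \ref{Lem:Continuity} and \ref{Lem:Continuity-2}) and the resulting joint modulus for $\psi_s^i$ — has already been absorbed upstream, so the present lemma is chiefly a matter of carefully tracking which arguments are held fixed and confirming that $\delta_2(\epsilon)$ never references $v^i$.
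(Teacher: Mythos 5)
Your proof is correct and takes essentially the same route as the paper's: your appeal to Fact~\ref{fact:min-max}(i) is just a packaged form of the paper's argument via the two minimizers $w_{s}^{i}$ and $z_{s}^{i}$, and both proofs then control the resulting $\psi$-differences through the uniform modulus coming from Lemma~\ref{Lem:Continuity-2} and Proposition~\ref{Thm:Continuity}. If anything, you are more explicit than the paper on the equicontinuity claim, which the paper dispatches with the remark that the right-hand side ``can be made arbitrarily small,'' whereas you verify directly that $\delta_{2}(\epsilon)$ depends only on fixed constants and never on $v^{i}$.
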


\begin{proof}
Let
\begin{align*}
	& \mathcal{D}_{s}^{i}(x_{s}^{-i},\,v^{i})=\min_{u_{s}^{i}\in\mathcal{P}\left(\mathcal{A}^{i}\right)}\psi_{s}^{i}(u_{s}^{i},\,x_{s}^{-i},\,v^{i})=\psi_{s}^{i}(w_{s}^{i},\,x_{s}^{-i},\,v^{i}),\\
	&\mathcal{D}_{s}^{i}(y_{s}^{-i},\,v^{i})=\min_{u_{s}^{i}\in\mathcal{P}\left(\mathcal{A}^{i}\right)}\psi_{s}^{i}(u_{s}^{i},\,y_{s}^{-i},\,v^{i})=\psi_{s}^{i}(z_{s}^{i},\,y_{s}^{-i},\,v^{i}).
\end{align*}
Then
\begin{align*}
	\mathcal{D}_{s}^{i}(y_{s}^{-i},\,v^{i})-\mathcal{D}_{s}^{i}(x_{s}^{-i},\,v^{i})\leq\psi_{s}^{i}(w_{s}^{i},\,y_{s}^{-i},\,v^{i})-\psi_{s}^{i}(z_{s}^{i},\,x_{s}^{-i},\,v^{i}),
\end{align*}
and
\begin{align*}
	\mathcal{D}_{s}^{i}(x_{s}^{-i},\,v^{i})-\mathcal{D}_{s}^{i}(y_{s}^{-i},\,v^{i})\leq\psi_{s}^{i}(z_{s}^{i},\,x_{s}^{-i},\,v^{i})-\psi_{s}^{i}(z_{s}^{i},\,y_{s}^{-i},\,v^{i}).
\end{align*}
If $v^{i}$ is bounded, then the right-hand side of the above inequalities
can be made arbitrarily small through control of $x_{s}^{-i}$ and
$y_{s}^{-i}$ via Proposition \ref{Thm:Continuity}.
\end{proof}
Let us define a mapping from stationary strategies to value functions
via
\begin{align*}
\tau^{i}(x^{-i}):=\{v^{i}=(v^{i}\left(s\right))_{s\in\mathcal{S}}:\,,v^{i}\left(s\right)=\min_{u_{s}^{i}\in\mathcal{P}\left(\mathcal{A}^{i}\right)}\psi_{s}^{i}(u_{s}^{i},\,x_{s}^{-i},\,v^{i}),\,s\in\mathcal{S}\} ,\,\forall i\in\mathcal{I}.
\end{align*}
Each $\tau^{i}(x^{-i})$ returns the value function for player $i$
corresponding to a best response to the complementary strategy $x^{-i}$. Denote the $s^{th}$
element of $\tau^{i}(x^{-i})$ by $\tau_{s}^{i}(x^{-i})$, let $(x^{-i})_{n}$
be a sequence of mixed strategies of all players satisfying $\lim_{n\rightarrow\infty}(x^{-i})_{n}=x^{-i}$,
and let the corresponding value functions for player $i$ be $\tau^{i}(\left(x^{-i}\right)_{n})$.
The proof of the next Lemma \ref{Lem:Kakutani-1} follows directly
from Proposition \ref{Thm:Continuity} as shown in \cite{Fink1964}.
\begin{lemma}
\label{Lem:Kakutani-1} If $(x^{-i})_{n}\rightarrow x^{-i}$ and $\tau_{s}^{i}((x^{-i})_{n})\rightarrow v^{i}(s)$
as $n\rightarrow\infty$, then $\tau_{s}^{i}(x^{-i})=v^{i}(s)$.
\end{lemma}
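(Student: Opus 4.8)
The plan is to pass to the limit in the defining fixed-point equation for the value functions and then invoke uniqueness. By definition, for every $n$ the value function $\tau^i((x^{-i})_n)$ satisfies, for every $s\in\mathcal{S}$,
\[
\tau_s^i((x^{-i})_n)=\mathcal{D}_s^i((x_s^{-i})_n,\,\tau^i((x^{-i})_n)),
\]
where $\mathcal{D}_s^i(x_s^{-i},\,v^i)=\min_{u_s^i\in\mathcal{P}(\mathcal{A}^i)}\psi_s^i(u_s^i,\,x_s^{-i},\,v^i)$. The left-hand side converges to $v^i(s)$ by hypothesis, so the whole argument reduces to showing that the right-hand side converges to $\mathcal{D}_s^i(x_s^{-i},\,v^i)$.

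To establish this convergence I would split through an intermediate term, bounding
\[
|\mathcal{D}_s^i((x_s^{-i})_n,\,\tau^i((x^{-i})_n))-\mathcal{D}_s^i(x_s^{-i},\,v^i)|
\]
by the sum of $|\mathcal{D}_s^i((x_s^{-i})_n,\,\tau^i((x^{-i})_n))-\mathcal{D}_s^i(x_s^{-i},\,\tau^i((x^{-i})_n))|$ and $|\mathcal{D}_s^i(x_s^{-i},\,\tau^i((x^{-i})_n))-\mathcal{D}_s^i(x_s^{-i},\,v^i)|$. The second term is controlled by the Lipschitz-in-$v^i$ bound $|\mathcal{D}_s^i(x_s^{-i},\,v^i)-\mathcal{D}_s^i(x_s^{-i},\,w^i)|\leq\gamma\,\|v^i-w^i\|_\infty$, which follows from Fact \ref{fact:min-max}(i) together with the $\gamma$-Lipschitz dependence of $\psi_s^i$ on its value-function argument established in the proof of Proposition \ref{Thm:Continuity}; it therefore vanishes because $\tau^i((x^{-i})_n)\to v^i$.

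The main obstacle is the first term, in which the strategy argument varies while the value function $\tau^i((x^{-i})_n)$ is simultaneously moving. This is precisely where I would invoke the equicontinuity furnished by Lemma \ref{Lem:Kakutani}: the family $\{\mathcal{D}_s^i(\cdot,\,w^i):\,\|w^i\|_\infty<\infty\}$ is equicontinuous in $x_s^{-i}$, so since the convergent sequence $\tau^i((x^{-i})_n)$ is uniformly bounded, the modulus of continuity in $x_s^{-i}$ may be chosen independently of the value function occupying the second slot. Hence the first term vanishes as $(x_s^{-i})_n\to x_s^{-i}$, and combining the two estimates gives $\mathcal{D}_s^i((x_s^{-i})_n,\,\tau^i((x^{-i})_n))\to\mathcal{D}_s^i(x_s^{-i},\,v^i)$.

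Equating the two limits then yields $v^i(s)=\min_{u_s^i\in\mathcal{P}(\mathcal{A}^i)}\psi_s^i(u_s^i,\,x_s^{-i},\,v^i)$ for all $s\in\mathcal{S}$, so the limit $v^i$ satisfies the fixed-point characterization defining $\tau^i(x^{-i})$. By the uniqueness of this fixed point (Proposition \ref{Theorem 3.3-1}, a consequence of the contraction property of $\mathcal{T}_x$ in Proposition \ref{Theorem 3.2-1}), we conclude $v^i=\tau^i(x^{-i})$, i.e. $\tau_s^i(x^{-i})=v^i(s)$ for every $s$, as claimed.
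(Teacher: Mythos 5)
Your proof is correct and takes essentially the same route as the paper, which establishes this lemma only by citing Fink (1964) together with Lemma \ref{Lem:Kakutani} and Proposition \ref{Thm:Continuity}: your argument---passing to the limit in the fixed-point equation, controlling the value-function slot via the $\gamma$-Lipschitz estimate (the contraction property of Proposition \ref{Theorem 3.2-1}), controlling the strategy slot via the equicontinuity of Lemma \ref{Lem:Kakutani}, and concluding by the uniqueness in Proposition \ref{Theorem 3.3-1}---is exactly the deferred Fink-style argument written out in full. The one step you use implicitly, that the pointwise hypothesis $\tau_{s}^{i}((x^{-i})_{n})\rightarrow v^{i}(s)$ upgrades to convergence in $\|\cdot\|_{\infty}$, is immediate because $\mathcal{S}$ is finite.
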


The proof of our main result Theorem 1 is encapsulated in the following
three lemmas.
\begin{lemma}
For all $x\in\mathcal{X}$, the set $\Phi\left(x\right)$ is nonempty
and is a subset of $\mathcal{X}$.
\end{lemma}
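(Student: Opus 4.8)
The plan is to extract nonemptiness from the existence of a unique value function together with the continuity already established in Step 2, and to read off the subset property directly from the definition of $\Phi$. First I would invoke Proposition \ref{Theorem 3.3-1}: since the given $x\in\mathcal{X}$ is fixed, the operator $\mathcal{T}_{x}$ is a contraction by Proposition \ref{Theorem 3.2-1} on the complete metric space $(\mathcal{V},\,\|\cdot\|_{\infty})$, so the Banach fixed point theorem furnishes a unique value function $v_{\ast}=(v_{\ast}^{i})_{i\in\mathcal{I}}$ satisfying $v_{\ast}^{i}(s)=\min_{u_{s}^{i}\in\mathcal{P}(\mathcal{A}^{i})}\psi_{s}^{i}(u_{s}^{i},\,x_{s}^{-i},\,v_{\ast}^{i})$ for all $s\in\mathcal{S}$ and $i\in\mathcal{I}$. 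This $v_{\ast}$ is precisely the object that appears in the definition of $\Phi(x)$, so that definition is well posed.

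Next I would argue that each inner minimization admits a minimizer. For a fixed pair $(i,\,s)$ and the $v_{\ast}^{i}$ just obtained, the map $u_{s}^{i}\mapsto\psi_{s}^{i}(u_{s}^{i},\,x_{s}^{-i},\,v_{\ast}^{i})$ is continuous by Proposition \ref{Thm:Continuity}, and the feasible set $\mathcal{P}(\mathcal{A}^{i})$ is a probability simplex over the finite action set $\mathcal{A}^{i}$, hence nonempty and compact. By the Weierstrass extreme value theorem the minimum is attained, so $\arg\min_{u_{s}^{i}\in\mathcal{P}(\mathcal{A}^{i})}\psi_{s}^{i}(u_{s}^{i},\,x_{s}^{-i},\,v_{\ast}^{i})\neq\emptyset$ for every $(i,\,s)$. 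Selecting one minimizer $\tilde{q}_{s}^{i}$ for each pair and assembling $\tilde{q}=(\tilde{q}_{s}^{i})_{s\in\mathcal{S},\,i\in\mathcal{I}}$ yields an element satisfying both defining conditions of $\Phi(x)$, which shows $\Phi(x)\neq\emptyset$.

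For the inclusion $\Phi(x)\subseteq\mathcal{X}$, I would simply observe that any $\tilde{q}\in\Phi(x)$ has each component $\tilde{q}_{s}^{i}$ lying in $\mathcal{P}(\mathcal{A}^{i})$, being a minimizer over that simplex, so $\tilde{q}^{i}=(\tilde{q}_{s}^{i})_{s\in\mathcal{S}}\in\mathcal{X}^{i}=\times_{s\in\mathcal{S}}\mathcal{P}(\mathcal{A}^{i})$ and therefore $\tilde{q}\in\mathcal{X}=\times_{i\in\mathcal{I}}\mathcal{X}^{i}$. This is essentially immediate, since the definition of $\Phi(x)$ already restricts $\tilde{q}$ to $\mathcal{X}$.

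The only substantive ingredients are borrowed from earlier results, so the main obstacle is conceptual rather than technical: one must confirm that the value function $v_{\ast}$ feeding the $\arg\min$ is the correct, well-defined object supplied by the contraction argument, and that the continuity of $\psi_{s}^{i}$ holds jointly in the relevant variables so that Weierstrass applies on the compact simplex. Both are already in hand from Propositions \ref{Theorem 3.3-1} and \ref{Thm:Continuity}, so no genuinely new estimate is required at this stage.
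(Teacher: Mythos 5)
Your proposal is correct and follows essentially the same route as the paper: continuity of $\psi_{s}^{i}$ (Proposition \ref{Thm:Continuity}) plus compactness of the simplex $\mathcal{P}(\mathcal{A}^{i})$ gives attainment of each minimum via Weierstrass, and the inclusion $\Phi(x)\subseteq\mathcal{X}$ is definitional. If anything, you are slightly more careful than the paper's terse proof by explicitly invoking Proposition \ref{Theorem 3.3-1} (the Banach fixed point argument) to certify that the value function $v_{\ast}$ appearing in the definition of $\Phi(x)$ exists and is well defined, a step the paper leaves implicit.
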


\begin{proof}
By Proposition \ref{Thm:Continuity}, $\psi_{s}^{i}(u_{s}^{i},\,x_{s}^{-i},\,v^{i})$
is continuous in all of its arguments. By the Weierstrass theorem,
the minimum of this function on the compact set $\mathcal{P}(\mathcal{A}^{i})$
exists and is attained. Thus, the equality:
\[
v^{i}(s)=\min_{u_{s}^{i}\in\mathcal{P}\left(\mathcal{A}^{i}\right)}\psi_{s}^{i}(u_{s}^{i},\,x_{s}^{-i},\,v^{i}),
\]
can be established and therefore $\Phi(x)\neq\emptyset$. By definition,
$\Phi(x)\subseteq\mathcal{X}$ for all $x\in\mathcal{X}$.
\end{proof}
The following intermediate result plays a key role in proving that
$\Phi(x)$ is convex for all $x\in\mathcal{X}$. 
\begin{lemma}
\label{Lemma B 12} Suppose that $(g^{i})_{i\in\mathcal{I}},\,(z^{i})_{i\in\mathcal{I}}\in\Phi(x)$,
then given $x^{-i}\in\mathcal{X}^{-i}$ we have
\begin{align*}
	\mathcal{M}_{s}^{i}\left[P_{s}((1-\lambda)\,z^{i}+\lambda\,g_{s}^{i},\,x_{s}^{-i})\right]\subseteq\mathcal{M}_{s}^{i}\left[P_{s}(z_{s}^{i},\,x_{s}^{-i})\right]\cup\mathcal{M}_{s}^{i}\left[P_{s}(g_{s}^{i},\,x_{s}^{-i})\right],
\end{align*}
for any $\lambda\in[0,\,1]$.
\end{lemma}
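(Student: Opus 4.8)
The plan is to reduce this set inclusion to an elementary statement about a one-parameter family of linear inequality systems, exploiting two nested layers of linearity. First I would record that, by the definition of $P_s$ in Eq.~(\ref{Ps}), each entry $u_s^i(a^i)\big(\times_{j\ne i}x_s^j(a^j)\big)P(k\,\vert\,s,\,a)$ is affine in the player-$i$ argument, so that
\[
P_s\big((1-\lambda)\,z_s^i+\lambda\,g_s^i,\,x_s^{-i}\big)=(1-\lambda)\,P_s(z_s^i,\,x_s^{-i})+\lambda\,P_s(g_s^i,\,x_s^{-i}).
\]
Write $P_z:=P_s(z_s^i,\,x_s^{-i})$, $P_g:=P_s(g_s^i,\,x_s^{-i})$, and $P_\lambda$ for their convex combination. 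Assumption~\ref{Assu:Fenchel}(ii) guarantees that $P_s$ enters the polyhedron (\ref{Linear formulation}) only through the maps $f_m$, which are linear in $P_s$; hence $f_m(P_\lambda)=(1-\lambda)f_m(P_z)+\lambda f_m(P_g)$ for every $m\in[M]$, while the matrices $A_{s,m}^i$, the normalization $e^T\mu=1$, and the sign constraint $\mu\ge0$ are independent of $\lambda$.

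Next I would fix $\mu\in\mathcal{M}_s^i(P_\lambda)$ and, for each $m\in[M]$, introduce the slack vector $r_m(\lambda):=A_{s,m}^i\mu+f_m(P_\lambda)-h_{s,m}^i$. The two linearity observations give $r_m(\lambda)=(1-\lambda)\,r_m(0)+\lambda\,r_m(1)$, where $r_m(0)\ge0$ for all $m$ is exactly the condition for $\mu\in\mathcal{M}_s^i(P_z)$ and $r_m(1)\ge0$ for all $m$ the condition for $\mu\in\mathcal{M}_s^i(P_g)$ (the normalization and sign constraints being common to all three envelopes). Membership $\mu\in\mathcal{M}_s^i(P_\lambda)$ asserts $r_m(\lambda)\ge0$ componentwise for every $m$. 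Thus the lemma is equivalent to the purely linear-algebraic implication: if a convex combination of the two slack systems $\{r_m(0)\}_m$ and $\{r_m(1)\}_m$ is entirely nonnegative, then one of the two systems is entirely nonnegative.

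This last implication is the step where the difficulty concentrates, and where I would spend essentially all of the effort. A per-constraint reading of $r_m(\lambda)\ge0$ only yields, for each $m$ \emph{separately}, that $r_m(0)\ge0$ or $r_m(1)\ge0$; it does not permit selecting the \emph{same} endpoint simultaneously across all $m$, since nonnegativity of a convex combination of two vectors forces neither summand to be nonnegative. To synchronize the choice I would seek a monotone (nesting) structure of the envelopes along the segment --- showing, for instance, that the slacks $r_m(0)$ and $r_m(1)$ are comparable in a way that sandwiches $\mathcal{M}_s^i(P_\lambda)$ between $\mathcal{M}_s^i(P_z)$ and $\mathcal{M}_s^i(P_g)$ --- or, lacking a global comparison, pass to the extreme points of $\mathcal{M}_s^i(P_\lambda)$ (as is done with Walkup's theorem \cite[Theorem 1]{Walkup1969} in the continuity step) and argue that each vertex is determined by active constraints inherited from a single endpoint. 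I would flag that, under Assumption~\ref{Assu:Fenchel} alone, this synchronization can break down when several of the $A_{s,m}^i$-constraints cross along the segment, so a rigorous proof must draw on additional structure of the admissible risk envelopes --- or exploit that $z^i$ and $g^i$, being both best responses in $\Phi(x)$, realize a common optimal value and hence cannot make incompatible constraint families binding at once. Verifying that the risk measures of interest genuinely possess such a nesting property is the crux I expect to be hardest.
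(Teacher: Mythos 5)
Your reduction coincides, step for step, with the paper's own proof: the paper uses linearity of $P_{s}$ in the player-$i$ argument and linearity of the $f_{m}$ (invoked there as quasiconvexity) to obtain exactly your slack identity, in the form
\begin{align*}
A_{s,\,m}^{i}\,\mu+\max\left\{ f_{m}(z_{s}^{i},\,x_{s}^{-i},\,P_{s}),\,f_{m}(g_{s}^{i},\,x_{s}^{-i},\,P_{s})\right\}
\geq A_{s,\,m}^{i}\,\mu+f_{m}((1-\lambda)\,z^{i}+\lambda\,g_{s}^{i},\,x_{s}^{-i},\,P_{s})\geq h_{s,\,m}^{i},
\end{align*}
for each $m\in[M]$. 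But where you stop and flag the synchronization problem, the paper simply asserts in one line that ``at least one of $\mu\in\mathcal{M}_{s}^{i}\left[P_{s}(z_{s}^{i},\,x_{s}^{-i})\right]$ or $\mu\in\mathcal{M}_{s}^{i}\left[P_{s}(g_{s}^{i},\,x_{s}^{-i})\right]$ must hold.'' That is precisely the unsynchronized-to-synchronized leap you isolated: the coordinatewise max inequality yields, for each scalar row of each constraint block separately, that one endpoint's inequality holds, but different rows may select different endpoints. The paper supplies no nesting argument, no vertex argument, and --- notably --- never uses the hypothesis $(g^{i})_{i\in\mathcal{I}},\,(z^{i})_{i\in\mathcal{I}}\in\Phi(x)$ anywhere in the proof. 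So you have not missed an idea that the paper possesses; the paper's proof contains the very non sequitur you predicted.

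Your skepticism can be made concrete inside the paper's own CVaR instance of formulation (\ref{Linear formulation}), where the constraint is the componentwise bound $\mu\leq P_{s}/(1-\alpha^{i})$. Let player $i$ have two actions with deterministic transitions to two distinct states (opponents fixed), let $z_{s}^{i}$ and $g_{s}^{i}$ be the two pure strategies, so $P_{z}$ puts mass $1$ on one coordinate and $P_{g}$ on another, and take $\lambda=1/2$, $1-\alpha^{i}=0.9$. Then $\mu=P_{\lambda}=(1-\lambda)P_{z}+\lambda P_{g}$ satisfies $\mu\leq P_{\lambda}/0.9$ and the simplex constraints, hence $\mu\in\mathcal{M}_{s}^{i}(P_{\lambda})$, yet $\mu$ belongs to neither endpoint envelope, since membership in $\mathcal{M}_{s}^{i}(P_{z})$ forces the $P_{g}$-coordinate of $\mu$ to vanish and vice versa. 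Whether the best-response property of $z^{i}$ and $g^{i}$ excludes such configurations is exactly the open question you raise; any complete proof must exploit it (or impose additional structure on the admissible envelopes), and the paper's proof, which ignores that hypothesis, cannot. Note the stakes: this lemma is what delivers convexity of $\Phi(x)$ in Lemma \ref{lem:Phi_convex}, and through it the Kakutani argument for Theorem \ref{Theorem 3.11}, so the gap you identified is not cosmetic.
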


\begin{proof}
From Assumption 1(ii), we know that $\mathcal{M}_{s}^{i}(P_{s})$
is a polyhedron characterized by formulation (6).
Suppose that $f_{m},\,m=1,\,....,\,M$ are linear in $u_{s}$, $x_{s}^{-i}$
, and transition kernel $P_{s}$. It follows that for any $\lambda\in[0,\,1]$
and $\mu\in\mathcal{M}_{s}^{i}\left[P_{s}((1-\lambda)\,z^{i}+\lambda\,g_{s}^{i},\,x_{s}^{-i})\right]$
satisfies for $m=1,\,...,\,M$, 
\[
A_{s,\,m}^{i}\,\mu+f_{m}((1-\lambda)\,z^{i}+\lambda\,g_{s}^{i},\,x_{s}^{-i},\,P_{s})\geq h_{s,\,m}^{i}. 
\]
Since $f_{m}$, for $m=1,\,....,\,M$ are linear (and thus quasiconvex)
functions, for any $\lambda\in[0,\,1]$
and $\mu\in\mathcal{M}_{s}^{i}(P_{s}((1-\lambda)\,z^{i}+\lambda\,g_{s}^{i},\,x_{s}^{-i}))$
for $m=1,\,...,\,M$, the inequalities
\begin{align*}
	& A_{s,\,m}^{i}\,\mu+\max\left\{ f_{m}(z_{s}^{i},\,x_{s}^{-i},\,P_{s}),\,f_{m}(g_{s}^{i},\,x_{s}^{-i},\,P_{s})\right\} \\
	\geq\, & A_{s,\,m}^{i}\,\mu+f_{m}((1-\lambda)\,z^{i}+\lambda\,g_{s}^{i},\,x_{s}^{-i},\,P_{s})\geq h_{s,\,m}^{i},
\end{align*}
also hold. Thus, for any $\lambda\in[0,\,1]$ and $\mu\in\mathcal{M}_{s}^{i}\left[P_{s}((1-\lambda)\,z^{i}+\lambda\,g_{s}^{i},\,x_{s}^{-i})\right]$,
at least one of $\mu\in\mathcal{M}_{s}^{i}\left[P_{s}(z_{s}^{i},\,x_{s}^{-i})\right]$
or $\mu\in\mathcal{M}_{s}^{i}\left[P_{s}(g_{s}^{i},\,x_{s}^{-i})\right]$
must hold.
\end{proof}
We now establish the convexity of $\Phi(x)$.
\begin{lemma}
\label{lem:Phi_convex}The set $\Phi(x)$ is convex.
\end{lemma}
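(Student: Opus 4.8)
The plan is to fix $\lambda\in[0,1]$ and two best responses $g=(g^{i})_{i\in\mathcal{I}},\,z=(z^{i})_{i\in\mathcal{I}}\in\Phi(x)$, and to show that their convex combination $w:=(1-\lambda)\,z+\lambda\,g$ again belongs to $\Phi(x)$. Since each $\mathcal{P}(\mathcal{A}^{i})$ is a simplex, $w_{s}^{i}=(1-\lambda)\,z_{s}^{i}+\lambda\,g_{s}^{i}$ is a legitimate mixed strategy and $w\in\mathcal{X}$. The key preliminary observation is that, by Proposition \ref{Theorem 3.3-1}, the value function $v_{\ast}^{i}$ appearing in the definition of $\Phi(x)$ is uniquely determined by the complementary strategy $x^{-i}$ alone; hence the \emph{same} $v_{\ast}^{i}$ governs $g$, $z$, and $w$. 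It therefore suffices to show, for every $i\in\mathcal{I}$ and $s\in\mathcal{S}$, that $w_{s}^{i}$ attains the common minimum value $v_{\ast}^{i}(s)$ of $\psi_{s}^{i}(\cdot,\,x_{s}^{-i},\,v_{\ast}^{i})$.

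To this end I would invoke the Fenchel--Moreau representation
\[
\psi_{s}^{i}(u_{s}^{i},\,x_{s}^{-i},\,v_{\ast}^{i})=\sup_{\mu\in\mathcal{M}_{s}^{i}(P_{s}(u_{s}^{i},\,x_{s}^{-i}))}\left\{\langle\mu,\,C_{s}^{i}(v_{\ast}^{i})\rangle-b_{s}^{i}(\mu)\right\},
\]
and note its crucial structural feature: because $v_{\ast}^{i}$ is held fixed, the objective $\mu\mapsto\langle\mu,\,C_{s}^{i}(v_{\ast}^{i})\rangle-b_{s}^{i}(\mu)$ is one and the same function in all three cases, and the strategy enters only through the feasible set $\mathcal{M}_{s}^{i}$. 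Lemma \ref{Lemma B 12} states that this feasible set at $w_{s}^{i}$ is contained in the union $\mathcal{M}_{s}^{i}[P_{s}(z_{s}^{i},\,x_{s}^{-i})]\cup\mathcal{M}_{s}^{i}[P_{s}(g_{s}^{i},\,x_{s}^{-i})]$. Since a supremum over a union of sets equals the maximum of the suprema over each set, this yields
\[
\psi_{s}^{i}(w_{s}^{i},\,x_{s}^{-i},\,v_{\ast}^{i})\leq\max\left\{\psi_{s}^{i}(z_{s}^{i},\,x_{s}^{-i},\,v_{\ast}^{i}),\,\psi_{s}^{i}(g_{s}^{i},\,x_{s}^{-i},\,v_{\ast}^{i})\right\}.
\]

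Because $g,\,z\in\Phi(x)$, both $g_{s}^{i}$ and $z_{s}^{i}$ are minimizers, so the right-hand side equals $\max\{v_{\ast}^{i}(s),\,v_{\ast}^{i}(s)\}=v_{\ast}^{i}(s)$; hence $\psi_{s}^{i}(w_{s}^{i},\,x_{s}^{-i},\,v_{\ast}^{i})\leq v_{\ast}^{i}(s)$. The reverse inequality is immediate, since $v_{\ast}^{i}(s)$ is by definition the minimum of $\psi_{s}^{i}(\cdot,\,x_{s}^{-i},\,v_{\ast}^{i})$ over $\mathcal{P}(\mathcal{A}^{i})$ and $w_{s}^{i}\in\mathcal{P}(\mathcal{A}^{i})$. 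Thus $w_{s}^{i}$ is itself an optimal response achieving $v_{\ast}^{i}(s)$ for every $i$ and $s$, so $w\in\Phi(x)$ and $\Phi(x)$ is convex.

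The main obstacle is already absorbed into Lemma \ref{Lemma B 12}: the envelope $\mathcal{M}_{s}^{i}(P_{s})$ need not behave convexly as a function of the generating distribution $P_{s}$, so a convex combination of strategies does \emph{not} in general produce a convex combination of envelopes, and one cannot argue convexity of $\psi_{s}^{i}$ in $u_{s}^{i}$ directly. The saving feature is that the defining constraints $f_{m}(P_{s})$ are linear (hence quasiconvex) in the strategies, which forces any $\mu$ feasible at the interpolated strategy to be feasible at one of the two endpoints, giving the union inclusion rather than a genuine convexity statement. Once that inclusion is granted, the passage through the supremum together with the minimality of $v_{\ast}^{i}(s)$ are routine.
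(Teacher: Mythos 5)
Your proof is correct and follows essentially the same route as the paper's: both hinge on the Fenchel--Moreau representation with the fixed value function $v_{\ast}^{i}$ and on Lemma \ref{Lemma B 12}'s union inclusion, combined with the minimality of $v_{\ast}^{i}(s)$. Your version is in fact a slightly streamlined rendering of the paper's argument (bounding $\psi_{s}^{i}(w_{s}^{i},\,x_{s}^{-i},\,v_{\ast}^{i})$ directly by the supremum over the union rather than passing through the convex combination $\lambda\,\psi_{s}^{i}(z_{s}^{i},\cdot)+(1-\lambda)\,\psi_{s}^{i}(g_{s}^{i},\cdot)$), and your explicit appeal to Proposition \ref{Theorem 3.3-1} to justify that the same $v_{\ast}^{i}$ governs all three strategies makes precise a step the paper leaves implicit.
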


\begin{proof}
Suppose that $(g^{i})_{i\in\mathcal{I}},\,(z^{i})_{i\in\mathcal{I}}\in\Phi(x)$.
For all $u_{s}^{i}\in\mathcal{P}(\mathcal{A}^{i})$ we have
\begin{align*}
	v^{i}(s)=\psi_{s}^{i}(g_{s}^{i},\,x_{s}^{-i},\,v^{i})=\psi_{s}^{i}(z_{s}^{i},\,x_{s}^{-i},\,v^{i})\leq\psi_{s}^{i}(u_{s}^{i},\,x_{s}^{-i},\,v^{i}).
\end{align*}
The above inequality holds based on the definition of operator $\Phi(x)$
via formulation (9), which returns the best responses
to all other players\textquoteright{} strategies. Hence, for any $\lambda\in[0,\,1]$
we have
\begin{align*}
	v^{i}(s)=\lambda\,\psi_{s}^{i}(z_{s}^{i},\,x_{s}^{-i},\,v^{i})+(1-\lambda)\psi_{s}^{i}(g_{s}^{i},\,x_{s}^{-i},\,v^{i})\leq\psi_{s}^{i}(u_{s}^{i},\,x_{s}^{-i},\,v^{i}).
\end{align*}
Then, we have
\begin{align*}
	& \lambda\,\psi_{s}^{i}(g_{s}^{i},\,x_{s}^{-i},\,v^{i})+(1-\lambda)\psi_{s}^{i}(z_{s}^{i},\,x_{s}^{-i},\,v^{i})\\
	=\, & \lambda\max_{\mu\in\mathcal{M}_{s}^{i}(P_{s}(z_{s}^{i},\,x_{s}^{-i}))}\left\{ \langle\mu,\,C_{s}^{i}(v^{i})\rangle-\alpha_{s}^{i}(\mu)\right\} \\
	& +(1-\lambda)\max_{\mu\in\mathcal{M}_{s}^{i}(P_{s}(g_{s}^{i},\,x_{s}^{-i}))}\left\{ \langle\mu,\,C_{s}^{i}(v^{i})\rangle-\alpha_{s}^{i}(\mu)\right\} ,
\end{align*}
based on the Fenchel-Moreau representation.
Furthermore, we have
\begin{align*}
			& \lambda\max_{\mu\in\mathcal{M}_{s}^{i}(P_{s}(z_{s}^{i},\,x_{s}^{-i}))}\left\{ \langle\mu,\,C_{s}^{i}(v^{i})\rangle-\alpha_{s}^{i}(\mu)\right\} 
			\\
			& +(1-\lambda)\max_{\mu\in\mathcal{M}_{s}^{i}(P_{s}(g_{s}^{i},\,x_{s}^{-i}))}\left\{ \langle\mu,\,C_{s}^{i}(v^{i})\rangle-\alpha_{s}^{i}(\mu)\right\} \\
			= & \max_{\mu\in\mathcal{M}_{s}^{i}(P_{s}(z_{s}^{i},\,x_{s}^{-i}))\cup\mathcal{M}_{s}^{i}(P_{s}(g_{s}^{i},\,x_{s}^{-i}))}\left\{ \langle\mu,\,C_{s}^{i}(v^{i})\rangle-\alpha_{s}^{i}(\mu)\right\} ,
\end{align*}
since $\psi_{s}^{i}(g_{s}^{i},\,x_{s}^{-i},\,v^{i})=\psi_{s}^{i}(z_{s}^{i},\,x_{s}^{-i},\,v^{i})$
holds in our setting. Thus,
\begin{align*}
	& \lambda\max_{\mu\in\mathcal{M}_{s}^{i}(P_{s}(z_{s}^{i},\,x_{s}^{-i}))}\left\{ \langle\mu,\,C_{s}^{i}(v^{i})\rangle-\alpha_{s}^{i}(\mu)\right\} 
	\\
	& +(1-\lambda)\max_{\mu\in\mathcal{M}_{s}^{i}(P_{s}(g_{s}^{i},\,x_{s}^{-i}))}\left\{ \langle\mu,\,C_{s}^{i}(v^{i})\rangle-\alpha_{s}^{i}(\mu)\right\} \\
	\geq\, & \max_{\mu\in\mathcal{M}_{s}^{i}(P_{s}((1-\lambda)\,z^{i}+\lambda\,g_{s}^{i},\,x_{s}^{-i}))}\left\{ \langle\mu,\,C_{s}^{i}(v^{i})\rangle-\alpha_{s}^{i}(\mu)\right\} ,
\end{align*}
by Lemma \ref{Lemma B 12}. Consequently,
\begin{align*}
	\psi_{s}^{i}(u_{s}^{i},\,x_{s}^{-i},\,v^{i})\geq\, &v^{i}(s)\\
	=\, & \lambda\,\psi_{s}^{i}(z_{s}^{i},\,x_{s}^{-i},\,v^{i})+(1-\lambda)\psi_{s}^{i}(g_{s}^{i},\,x_{s}^{-i},\,v^{i})\\
	\geq\, & \psi_{s}^{i}(\lambda z_{s}^{i}+(1-\lambda)g_{s}^{i},\,x_{s}^{-i},\,v^{i})\geq v^{i}(s),
\end{align*}
and hence $\lambda(z^{i})_{i\in\mathcal{I}}+(1-\lambda)(g^{i})_{i\in\mathcal{I}}\in\Phi(x)$.
\end{proof}
The next lemma completes our proof. 
\begin{lemma}
\label{Lem:Kakutani-2} $\Phi$ is an upper semicontinuous correspondence
on $\mathcal{X}$.
\end{lemma}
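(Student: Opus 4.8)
The plan is to verify the sequential characterization of upper semicontinuity from Definition \ref{Definition 2.9} directly: given sequences $x^{n}\to x$ and $y^{n}\to y$ in $\mathcal{X}$ with $y^{n}\in\Phi(x^{n})$ for every $n$, I would show $y\in\Phi(x)$. First I would attach to each $x^{n}$ its associated value function. By the definition of $\Phi$ in \eqref{Phi}, membership $y^{n}\in\Phi(x^{n})$ means that for every $s\in\mathcal{S}$ and $i\in\mathcal{I}$ there is a value function $v^{i,n}$ satisfying $v^{i,n}(s)=\min_{u_{s}^{i}\in\mathcal{P}(\mathcal{A}^{i})}\psi_{s}^{i}(u_{s}^{i},\,(x^{n})_{s}^{-i},\,v^{i,n})$ together with $(y^{n})_{s}^{i}\in\arg\min_{u_{s}^{i}}\psi_{s}^{i}(u_{s}^{i},\,(x^{n})_{s}^{-i},\,v^{i,n})$. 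By Proposition \ref{Theorem 3.3-1} this value function is unique, so $v^{i,n}=\tau^{i}((x^{n})^{-i})$.

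The key step is to establish $v^{i,n}\to v_{\ast}^{i}:=\tau^{i}(x^{-i})$. Since the costs are defined on the finite set $\mathcal{S}\times\mathcal{A}$ and $\gamma\in(0,\,1)$, the contraction estimate of Proposition \ref{Theorem 3.2-1} yields a uniform bound $\|v^{i,n}\|_{\infty}\leq M$ independent of $n$, so the value functions lie in a compact subset of $\mathbb{R}^{|\mathcal{S}|}$. Along any convergent subsequence $v^{i,n}\to\bar{v}^{i}$, the closed-graph property of $\tau^{i}$ in Lemma \ref{Lem:Kakutani-1} (applied with $(x^{n})^{-i}\to x^{-i}$ and $\tau_{s}^{i}((x^{n})^{-i})=v^{i,n}(s)\to\bar{v}^{i}(s)$) forces $\bar{v}^{i}=\tau^{i}(x^{-i})=v_{\ast}^{i}$. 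As every subsequential limit equals $v_{\ast}^{i}$, the whole sequence converges, $v^{i,n}\to v_{\ast}^{i}$. I would then pass to the limit in the optimality inequality: for each fixed $u_{s}^{i}\in\mathcal{P}(\mathcal{A}^{i})$ we have $\psi_{s}^{i}((y^{n})_{s}^{i},\,(x^{n})_{s}^{-i},\,v^{i,n})\leq\psi_{s}^{i}(u_{s}^{i},\,(x^{n})_{s}^{-i},\,v^{i,n})$, and using $(y^{n})_{s}^{i}\to y_{s}^{i}$, $(x^{n})_{s}^{-i}\to x_{s}^{-i}$, $v^{i,n}\to v_{\ast}^{i}$, together with the joint continuity of $\psi_{s}^{i}$ from Proposition \ref{Thm:Continuity}, both sides converge to give $\psi_{s}^{i}(y_{s}^{i},\,x_{s}^{-i},\,v_{\ast}^{i})\leq\psi_{s}^{i}(u_{s}^{i},\,x_{s}^{-i},\,v_{\ast}^{i})$ for all $u_{s}^{i}$. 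This says $y_{s}^{i}$ minimizes $\psi_{s}^{i}(\cdot,\,x_{s}^{-i},\,v_{\ast}^{i})$; and since $v_{\ast}^{i}=\tau^{i}(x^{-i})$ already satisfies the fixed-point equation $v_{\ast}^{i}(s)=\min_{u_{s}^{i}}\psi_{s}^{i}(u_{s}^{i},\,x_{s}^{-i},\,v_{\ast}^{i})$ by construction, both defining conditions of $\Phi(x)$ hold and hence $y\in\Phi(x)$.

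The main obstacle is the convergence of the value functions $v^{i,n}\to v_{\ast}^{i}$. The argmin passage is a routine limit once the continuity of $\psi_{s}^{i}$ is in hand, but the value functions are themselves only defined implicitly as contraction fixed points that drift with $x^{n}$, so their convergence cannot be read off directly. The compactness-plus-closed-graph argument built on Lemma \ref{Lem:Kakutani-1} is precisely what converts the mere closedness of the graph of $\tau^{i}$ into genuine convergence of the whole sequence, and this is the step that carries the real content of the proof.
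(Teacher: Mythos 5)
Your proposal is correct and follows essentially the same route as the paper's proof: extract a convergent (sub)sequence of the value functions $\tau^{i}((x^{-i})_{n})$, identify the limit as $\tau^{i}(x^{-i})$ via the closed-graph property of Lemma \ref{Lem:Kakutani-1}, and use the joint continuity of $\psi_{s}^{i}$ from Proposition \ref{Thm:Continuity} to pass the best-response property to the limit. Your write-up is in fact slightly more careful than the paper's---you make the uniform bound on the value functions explicit via the contraction of Proposition \ref{Theorem 3.2-1} and pass to the limit in the optimality inequality $\psi_{s}^{i}((y^{n})_{s}^{i},\,(x^{n})_{s}^{-i},\,v^{i,n})\leq\psi_{s}^{i}(u_{s}^{i},\,(x^{n})_{s}^{-i},\,v^{i,n})$ rather than asserting the best-response equality directly at the limit point---but these are refinements of the same argument, not a different approach.
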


\begin{proof}
Suppose $x_{n}\rightarrow x$, $y_{n}\rightarrow y$, and $y_{n}\in\Phi(x_{n})$.
Taking a subsequence, we can suppose $\tau_{s}^{i}((x^{-i})_{n})\rightarrow v^{i}(s)$.
Using the triangle inequality, for any $s\in\mathcal{S}$ and $i\in\mathcal{I}$
we have
\begin{align*}
	&\left|\psi_{s}^{i}(y_{s}^{i},\,x_{s}^{-i},\,v^{i})-v^{i}(s)\right|\\
	\leq &  \left|\psi_{s}^{i}(y_{s}^{i},\,x_{s}^{-i},\,v^{i})-\psi_{s}^{i}(y_{s}^{i},\,x_{s}^{-i},\,\tau_{s}^{i}((x^{-i})_{n}))\right|+\left|\psi_{s}^{i}(y_{s}^{i},\,x_{s}^{-i},\,\tau_{s}^{i}((x^{-i})_{n}))-v^{i}(s)\right|\\
	= & \left|\psi_{s}^{i}(y_{s}^{i},\,x_{s}^{-i},\,v^{i})-\psi_{s}^{i}(y_{s}^{i},\,x_{s}^{-i},\,\tau_{s}^{i}((x^{-i})_{n}))\right|+\left|\tau_{s}^{i}((x^{-i})_{n})-v^{i}(s)\right|\rightarrow0,
\end{align*}
as $n\rightarrow\infty$. The above equality holds by definition of
a best response and therefore, $v^{i}(s)=\psi_{s}^{i}(y_{s}^{i},\,x_{s}^{-i},\,v^{i})$.
By Lemma \ref{Lem:Kakutani-1}, we also have $\tau_{s}^{i}((x^{-i})_{n})=v^{i}(s)$.
Thus, we have established
\begin{align*}
	v^{i}(s)=&\psi_{s}^{i}(y_{s}^{i},\,x_{s}^{-i},\,v^{i})
	\\
	=&\tau_{s}^{i}((x^{-i})_{n})=\min_{u_{s}^{i}\in\mathcal{P}\left(\mathcal{A}^{i}\right)}\psi_{s}^{i}(u_{s}^{i},\,x_{s}^{-i},\,v^{i}),
\end{align*}
and so $y\in\Phi(x)$, completing the proof that $\Phi$ is an upper
semicontinuous correspondence. The fact that $\Phi(x)$ is a closed
set for any $x\in\mathcal{X}$ follows from the definition of upper
semicontinuity.
\end{proof}

\subsection{Examples of Saddle-Point Risk Measures}

For our $Q$-learning algorithm, we specifically focus on risk measures
that can be estimated by solving a stochastic saddle-point problem
such as as Problem (11). The following result, based
on \cite[Theorem 3.2]{Huang2018a}, gives special conditions on $G^{i}$
for the corresponding risk function $\rho^{i}$ in Problem (11).
\begin{theorem}
\label{Theorem 5.2} Suppose there is a collection of functions $\left\{ h_{z}\right\} _{z\in\mathcal{Z}^{i}}$
such that: (i) $h_{z}$ is $P$-square summable for every $y\in\mathcal{Y}^{i}$,
$z\in\text{\ensuremath{\mathcal{Z}}}^{i}$; (ii) $y\rightarrow h_{z}\left(X-y\right)$
is convex; (iii) $z\rightarrow h_{z}\left(X-y\right)$ is concave;
and (iv) $G^{i}:\,\mathcal{L}\times\mathcal{Y}^{i}\times\mathcal{Z}^{i}\rightarrow\mathbb{R}$
is given by $G^{i}(X,\,y,\,z)=y+h_{z}(X-y)$, then the saddle-point risk
measure (Problem (11)) is a convex risk measure.
\end{theorem}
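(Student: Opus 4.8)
The plan is to verify that $\rho^i$ defined by Problem (11) satisfies the axioms of a convex risk measure, with convexity being the only substantive property to establish; monotonicity, translation invariance, and normalization will be read off from the explicit form $G^i(X,y,z)=y+h_z(X-y)$ together with the standing structural properties of $h_z$. The engine of the argument is the classical pair of facts that a pointwise supremum of jointly convex functions is jointly convex, and that partial minimization of a jointly convex function over a convex set produces a convex function of the remaining variable.

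\textbf{Step 1 (joint convexity for fixed $z$).} First I would fix $z\in\mathcal{Z}^i$ and show that $(X,y)\mapsto\mathbb{E}_P[G^i(X,y,z)]=\mathbb{E}_P[y+h_z(X-y)]$ is jointly convex on $\mathcal{L}\times\mathcal{Y}^i$. The term $y$ is linear in $(X,y)$. For the term $\mathbb{E}_P[h_z(X-y)]$, the key observation is that hypothesis (ii) --- convexity of $y\mapsto h_z(X-y)$ for every $X$ --- is equivalent, via the affine and affinely invertible substitution $W=X-y$, to convexity of $h_z$ itself; consequently $(X,y)\mapsto h_z(X-y)$ is jointly convex, being the composition of the convex map $h_z$ with the affine map $(X,y)\mapsto X-y$. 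Hypothesis (i) guarantees integrability, so taking the expectation (a positive linear functional) preserves joint convexity. \textbf{Step 2 (supremum over $z$).} I would then set $F(X,y):=\max_{z\in\mathcal{Z}^i}\mathbb{E}_P[G^i(X,y,z)]$ and conclude that $F$ is jointly convex in $(X,y)$, since a pointwise supremum of the jointly convex functions from Step 1 is again jointly convex; here hypothesis (iii) together with the compactness and convexity of $\mathcal{Z}^i$ from Assumption 2 guarantees that the supremum is finite and attained, so that $F$ is real-valued.

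\textbf{Step 3 (partial minimization over $y$).} Finally, since $\rho^i(X)=\min_{y\in\mathcal{Y}^i}F(X,y)$ with $\mathcal{Y}^i$ compact and convex and $F$ jointly convex, the standard partial-minimization result yields convexity of $X\mapsto\rho^i(X)$. Concretely, for $X_0,X_1$ and $\lambda\in[0,1]$, pick minimizers $y_0,y_1$ attaining $\rho^i(X_0),\rho^i(X_1)$; the point $\lambda y_0+(1-\lambda)y_1$ is feasible by convexity of $\mathcal{Y}^i$, so $\rho^i(\lambda X_0+(1-\lambda)X_1)\le F(\lambda X_0+(1-\lambda)X_1,\lambda y_0+(1-\lambda)y_1)\le\lambda\rho^i(X_0)+(1-\lambda)\rho^i(X_1)$, where the second inequality is the joint convexity of $F$. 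Compactness of $\mathcal{Y}^i$ makes the outer minimum attained, so $\rho^i$ is a finite convex function.

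To finish identifying $\rho^i$ as a convex risk measure, the remaining axioms are inherited from the form of $G^i$ under the monotonicity and normalization properties of $h_z$ supplied by the ambient framework: monotonicity because $h_z$ is nondecreasing in its argument (so $X\le Y$ forces $h_z(X-y)\le h_z(Y-y)$ pointwise, preserved through $\max_z$ and $\min_y$), translation invariance from the shift $y\mapsto y+c$ inside $\min_y\max_z\mathbb{E}_P[y+h_z(X+c-y)]$ (valid whenever $\mathcal{Y}^i$ is taken large enough to contain the relevant minimizers, as in the CVaR example), and $\rho^i(0)=0$ by direct evaluation. I expect the main obstacle to be Step 1: one must recognize that the per-variable hypothesis (ii) in fact upgrades to \emph{joint} convexity in $(X,y)$ --- separate convexity in $X$ and in $y$ would \emph{not} suffice for the partial-minimization step --- while the concavity hypothesis (iii) should be understood as contributing nothing to the convexity of $\rho^i$, serving only to keep the inner maximization finite and well-posed.
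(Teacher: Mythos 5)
First, a point of reference: the paper never proves this statement. Theorem 3 is imported wholesale from \cite[Theorem 3.2]{Huang2018a} (``The following result, based on \cite[Theorem 3.2]{Huang2018a}, gives special conditions on $G^{i}$\dots''), so there is no in-paper argument to compare against, and your proposal must be judged on its own merits. Your treatment of convexity --- the axiom you rightly call substantive --- is correct and well executed. The key observation, that hypothesis (ii) upgrades to \emph{joint} convexity of $(X,\,y)\mapsto h_{z}(X-y)$ precisely because the dependence is through the difference $X-y$ (so separate convexity in $y$, via the affine reparametrization $w=X-y$, is equivalent to convexity of $h_{z}$ itself, and convex-composed-with-linear is jointly convex), is exactly right; the supremum-over-$z$ then partial-minimization-over-$y$ pipeline is the standard route and surely the one underlying the cited source. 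Your remark that (iii) contributes nothing to convexity of $\rho^{i}$, serving only well-posedness of the inner maximization, is also correct.

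The genuine gap is in the remaining axioms. Hypotheses (i)--(iv) as stated do \emph{not} imply monotonicity, translation invariance, or normalization, and your proof covers them by invoking properties ``supplied by the ambient framework'' --- nondecreasingness of $h_{z}$, a $\mathcal{Y}^{i}$ ``large enough'' to absorb shifts --- that appear neither in the theorem nor in Assumption 2; indeed Assumption 2(i) makes $\mathcal{Y}^{i}$ compact, which is incompatible with your shift argument $y\mapsto y+c$ unless one additionally argues the minimizer is interior (the paper does this in its examples only by assuming $X$ has bounded support $[\eta_{\textrm{min}},\,\eta_{\textrm{max}}]$). That this is a real gap and not a formality: take $h_{z}(w)=w^{2}$ with no $z$-dependence. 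It satisfies (i)--(iv) (square summability for bounded $X$, convexity in the argument, trivial concavity in $z$), yet when the unconstrained minimizer $y^{*}=\mathbb{E}[X]-1/2$ is feasible one gets $\min_{y}\mathbb{E}[y+(X-y)^{2}]=\mathbb{E}[X]+\mathrm{Var}(X)-1/4$, a convex functional that is neither monotone (a high-variance $X$ pointwise below a constant $Y$ can have larger value) nor normalized ($\rho(0)=-1/4$). So either the conclusion must be weakened to ``convex functional of $X$,'' or the hypotheses must be strengthened by the extra conditions on $h_{z}$ imposed in \cite{Huang2018a} --- which is how the paper itself closes the circle, by citation rather than proof. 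Your instinct to flag these axioms as externally supplied was pointing at a genuine under-specification in the statement; the fix is to promote those conditions to explicit hypotheses rather than leaning on the CVaR instance.
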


We now give some examples of functions $\left\{ h_{z}\right\} _{z\in\mathcal{Z}^{i}}$
satisfying the conditions of Theorem \ref{Theorem 5.2} such that a corresponding risk-aware Markov perfect equilibrium exists.
\begin{example}
The distance between any probability distribution and a reference
distribution may be measured by a $\phi$-divergence function, several
examples of $\phi$-divergence functions are shown in Table 4. We
can, in principle, approximate convex $\phi$-divergence functions
with piecewise linear convex functions of the form $\hat{\phi}(\mu)=\max_{j\in\mathcal{J}}\left\{ \langle d_{j},\,\mu\rangle+g_{j}\right\} $.
Using this form of $\hat{\phi}$, we may define a corresponding
set of probability distributions:
\begin{align}
	\mathcal{M}_{s}^{i}(P_{s})=\{ \mu:\,\mu=P_{s}\circ\xi_{s},\,B_{s}\,\mu=e,\,\mu\geq0,\,B_{s}\,P_{s}\circ\left(d_{j}\circ\xi_{s}+g_{j}\right)\leq\alpha^{i}\cdot e,\,\forall j\in\mathcal{J}\} ,\label{Phi divergence}
\end{align}
for constants $\alpha^{i}\in\left(0,\,1\right)$ for all $i\in\mathcal{I}$.
Based on \cite[Lemma 1]{Postek2015}, the risk measure corresponding
to (\ref{Phi divergence}) has the form
\begin{align}
			\rho(X) = \inf_{b\geq0,\,\eta\in\mathbb{R}}\left\{ \eta+b\,\alpha^{i}+b\,\mathbb{E}_{P_{s}}\left[\hat{\phi}^{\ast}\left(\frac{X-\eta}{b}\right)\right]\right\} ,\label{Counterpart-1}
\end{align}
where $\hat{\phi}^{\ast}$ is the convex conjugate of $\hat{\phi}$.

(i) Let $\phi_{z}$ denote a family of $\phi$-divergence functions
parameterized by $z\in\mathcal{Z}^{i}$ that is concave in $\mathcal{Z}^{i}$,
and let $\hat{\phi}_{z}$ and $\phi_{z}^{\ast}$ denote the corresponding
piecewise linear approximation and its convex conjugate, respectively.
Then, we may define
\begin{align*}
	\mathcal{M}_{z}:=\{ &\mu:\,\mu=P_{s}\circ\xi_{s},\,B_{s}\,\mu=e,\,\mu\geq0,\,
	\\
	& B_{s}\,P_{s}\circ\hat{\phi_{z}}(\xi_{s})\leq\alpha^{i}\cdot e\},
\end{align*}
and the risk measure corresponding to $\cup_{z\in\mathcal{Z}^{i}}\mathcal{M}_{z}$
is
\begin{align}
	\rho(X)=\inf_{b\geq0,\,\eta\in\mathbb{R}}\max_{z\in\mathcal{Z}^{i}}\Bigg\{ \eta+b\,\alpha^{i}+b\,\mathbb{E}_{P_{s}}\left[\phi_{z}^{\ast}\left(\frac{X-\eta}{b}\right)\right]\Bigg\} .\label{Counterpart}
\end{align}
Suppose we choose $h_{z}$ (from Theorem \ref{Theorem 5.2}) to be
\[
\frac{h_{z}(X-\eta)}{b}=\phi_{z}^{\ast}\left(\frac{X-\eta}{b}\right),
\]
for any $\eta\in\mathbb{R}$ and $b>0$. Assume $X$ has bounded support
$[\eta_{\textrm{min}},\,\eta_{\textrm{max}}]$, then formulation (\ref{Counterpart})
becomes
\[
\rho(X)=\min_{\eta\in[\eta_{\textrm{min}},\,\eta_{\textrm{max}}]}\max_{z\in\mathcal{Z}^{i}}\left\{ \eta+\mathbb{E}_{P_{s}}\left[h_{z}(X-\eta)\right]\right\} ,
\]
which conforms to the saddle-point structure in Problem (11).

(ii) To recover CVaR, we let $\alpha^{i}\in\left(0,\,1\right)$ for
all $i\in\mathcal{I}$ and choose the $\phi$-divergence function
\[
\phi(x)=\begin{cases}
0 & 0\leq x\leq\frac{e}{1-\alpha^{i}},\\
\infty & \textrm{otherwise},
\end{cases}
\]
and we take
\begin{align*}
	\mathcal{M}_{s}^{i}(P_{s})=\Big\{\mu:\,\mu=P_{s}\circ\xi_{s},\,B_{s}\,\mu=e,\,\mu\geq0,\,0\leq\mu\leq\frac{P_{s}}{1-\alpha^{i}}\Big\} .\label{Mp}
\end{align*}
If we take the convex conjugate of this $\phi$-divergence function
and substitute it into Eqs. (\ref{Counterpart-1}), we obtain
\begin{align*}
	\rho(X)=\min_{\eta\in[\eta_{\textrm{min}},\,\eta_{\textrm{max}}]}\{\eta+(1-\alpha^{i})^{-1}\mathbb{E}_{P_{s}}\left[\max\left\{ X-\eta,\,0\right\} \right]\} ,
\end{align*}
corresponding to $h_{z}(X-\eta)=-(1-\alpha^{i})^{-1}\max\left\{ X-\eta,\,0\right\} $
for all $z\in\mathcal{Z}^{i}$.
\end{example}

\begin{table*}
\centering{}%
\begin{tabular}{|c|c|c|}
	\hline 
	\textbf{Name}  & $\phi(t)\quad t\geq0$  & $\phi^{*}(s)$\tabularnewline
	\hline 
	\hline 
	Kullback-Leibler  & $t\log t-t+1$  & $\exp(s)-1$\tabularnewline
	\hline 
	Burg entropy  & $-\log t+t-1$  & $-\log(1-s),\quad s<1$\tabularnewline
	\hline 
	$\chi^{2}$ distance  & $\frac{1}{t}(t-1)^{2}$  & $2-2\sqrt{1-s},\quad s<1$\tabularnewline
	\hline 
	Modified $\chi^{2}$ distance  & $(t-1)^{2}$  & $\begin{cases}
	-1 & \quad s<-2\\
	s+s^{2}/4 & \quad s\geq-2
	\end{cases}$\tabularnewline
	\hline 
	Hellinger distance  & $(\sqrt{t}-1)^{2}$  & $\frac{s}{1-s},\quad s<1$\tabularnewline
	\hline 
	$\chi$- divergence  & $|t-1|^{\theta}$  & $s+(\theta+1)\left(\frac{|s|}{\theta}\right)^{\theta/(\theta-1)}$\tabularnewline
	\hline 
	Variation distance  & $|t-1|$  & $\max\left\{ -1,\,s\right\} ,\quad s\leq1$\tabularnewline
	\hline 
	Cressie-Read  & $\frac{1-\theta+\theta t-t^{\theta}}{\theta(1-\theta)},\quad t\neq0,1$  & $\frac{1}{\theta}(1-s(1-\theta))^{\theta/(1-\theta)}-\frac{1}{\theta},\quad s<\frac{1}{1-\theta}$\tabularnewline
	\hline 
\end{tabular}\caption{Examples of $\phi$-divergence functions and their convex conjugate
	functions}
\end{table*}

\subsection{RaNashQL Implementation Details}

We give further details on each step of Algorithm 1 as follows. We
will shortly require the definition
\[
\Pi_{\mathcal{Y}\times\mathcal{Z}}[(y,\,z)]:=\arg\min_{(y^{\prime},\,z^{\prime})\in\mathcal{Y}\times\mathcal{Z}}\|(y,\,z)-(y^{\prime},\,z^{\prime})\|_{2},
\]
i.e., the Euclidean projection onto $\mathcal{Y}\times\mathcal{Z}$.
\begin{itemize}
\item \textbf{Step 0}: Initialization:
\begin{itemize}
	\item \textbf{Step 0a}: Initialize all $Q$-values $Q_{1,1}^{i}(s,\,a)$
	for all $(s,\,a)\in\mathbb{K}$ and $i\in\mathcal{I}$;
	\item \textbf{Step 0b}: Initialize $\left(y_{0,t}^{i}(s,\,a),\,z_{0,t}^{i}(s,\,a)\right)$
	for all $t\leq T$, $(s,\,a)\in\mathbb{K}$, and $i\in\mathcal{I}$.
\end{itemize}
\item \textbf{Step 1}: For all $(s,\,a)\in\mathbb{K}$ and $i\in\mathcal{I}$,
set \[\left(y_{n,1}^{i}(s,\,a),\,z_{n,1}^{i}(s,\,a)\right)=\left(y_{n-1,T}^{i}(s,\,a),\,z_{n-1,T}^{i}(s,\,a)\right),\]
and $Q_{n,1}^{i}(s,\,a)=Q_{n-1,T}^{i}(s,\,a)$. All agents observe
the current state $s_{t}^{n}$:
\begin{itemize}
	\item \textbf{Step 1a}: Generate an action $a_{n}^{i}$ from policy $\pi$
	(which gives some positive probability to all actions);
	\item \textbf{Step 1b}: Observe actions $a_{n}=(a_{n}^{i})_{i\in\mathcal{I}}$,
	costs $\left\{ c^{i}(s_{t}^{n},\,a_{n})\right\} _{i\in\mathcal{I}}$,
	and next state $s_{t+1}^{n}\sim P\left(\cdot\,\vert\,s_{t}^{n},\,a_{n}\right)$.
\end{itemize}
\item \textbf{Step 2}: Compute Nash $Q$-values $v_{n-1}^{i}(s_{t+1}^{n})=Nash^{i}\left(Q_{n-1,T}^{j}(s_{t+1}^{n})\right)_{j\in\mathcal{I}}$
for all $i\in\mathcal{I}$.Compute
	\begin{align*}
		\hat{q}_{n,t}^{i}(s_{t}^{n},\,a^{n})=G^{i}(c^{i}(s_{t}^{n},\,a_{n})+\gamma\,v_{n-1}^{i}(s_{t+1}^{n}),\,y_{n,t}^{i}(s_{t}^{n},\,a_{n}),\,z_{n,t}^{i}(s_{t}^{n},\,a_{n})),
	\end{align*}
	and
	\begin{align}
		\left(y_{n,t}^{i}(s_{t}^{n},\,a_{n}),\,z_{n,t}^{i}(s_{t}^{n},\,a_{n})\right)=\frac{1}{t-\tau_{\ast}(t)+1}\sum_{\tau=\tau_{\ast}(t)}^{t}\left(y_{n,\tau}^{i}(s_{t}^{n},\,a_{n}),\,z_{n.\tau}^{i}(s_{t}^{n},\,a_{n})\right),\label{Q-learning_saddle}
	\end{align}
	for all $i\in\mathcal{I}$. This step observes a new state and computes
	the estimated $Q$-value $\hat{q}_{n,t}^{i}$;

\item \textbf{Step 3}: For all $(s,\,a)\in\mathbb{K}$, and $i\in\mathcal{I}$,
compute
\begin{align}
	Q_{n,t}^{i}(s,\,a)=\left(1-\text{\ensuremath{\theta}}_{\beta}^{n}(s,\,a)\right)Q_{n-1,T}^{i}(s,\,a)+\theta_{\beta}^{n}(s,\,a)\,\hat{q}_{n,t}^{i}(s_{t}^{n},\,a^{n}).\label{Q-value}
\end{align}
This update is the same as in standard $Q$-learning w.r.t. the outer
loop.
\item \textbf{Step 4}: Update
\begin{align}
			& \left(y_{n,t+1}^{i}(s_{t}^{n},\,a^{n}),\,z_{n,t+1}^{i}(s_{t}^{n},\,a^{n})\right)\nonumber \\
			= & \Pi_{\mathcal{Y}^{i}\times\mathcal{Z}^{i}}\left\{ \left(y_{n,t}^{i}(s_{t}^{n},\,a^{n}),\,z_{n,t}^{i}(s_{t}^{n},\,a^{n})\right)\right. \nonumber
			\\ 
			& -\left.\lambda_{t,\alpha}\psi\left(c^{i}(s_{t}^{n},\,a^{n})+\gamma\,v_{n-1}^{i}(s_{t+1}^{n}),\,y_{n,t}^{i}(s_{t}^{n},\,a^{n}),\,z_{n,t}^{i}(s_{t}^{n},\,a^{n})\right)\right\},\label{SASP}
\end{align}
for all $i\in\mathcal{I}$, and
\begin{align}
			& \psi^{i}\left(v_{n-1}^{i}(s_{t+1}^{n}),\,y_{n,t}^{i}(s_{t}^{n},\,a^{n}),\,z_{n,t}^{i}(s_{t}^{n},\,a^{n})\right)\nonumber \\
			= & \left(\begin{array}{c}
				H_{\mathcal{Y}}G_{y}^{i}(c^{i}(s_{t}^{n},\,a^{n})+
				\\
				\gamma\,v_{n-1}^{i}(s_{t+1}^{n}),\,y_{n,t}^{i}(s_{t}^{n},\,a^{n}),\,z_{n,t}^{i}(s_{t}^{n},\,a^{n}))\\
				\\
				-H_{\mathcal{Z}}G_{z}^{i}(c^{i}(s_{t}^{n},\,a^{n})+
				\\
				\gamma\,v_{n-1}^{i}(s_{t+1}^{n}),\,y_{n,t}^{i}(s_{t}^{n},\,a^{n}),\,z_{n,t}^{i}(s_{t}^{n},\,a^{n}))
			\end{array}\right).\label{SASP-1}
\end{align}
This is the risk estimation step, it updates the current iterate of
the risk corresponding to each selected state-action pair.
\end{itemize}

\subsection{Assumptions for RaNashQL}

We now list the necessary definitions and assumptions for our algorithm,
most of which are standard in the stochastic approximation literature.
We first define a probability space $(\Omega,\,\mathcal{G},\,P)$
where
\[
\mathcal{G}=\sigma\left\{ (s_{t}^{n},\,a^{n}),\,n\leq N,\,t\leq T\right\} ,
\]
is the $\sigma$-algebra for the history state-action pairs up to iteration $T$ and $N$, and the filtration is
\[
\mathcal{G}_{t}^{n}=\left\{ \sigma\left\{ (s_{\tau}^{i},\,a_{\tau}^{i}),\,i<n,\,\tau\leq T\right\} \cup\left\{ (s_{\tau}^{n},\,a_{\tau}^{n}),\,\tau\leq t\right\} \right\} ,
\]
for $t\leq T$ and $n\leq N$, with $\mathcal{G}_{t}^{0}=\left\{ \varnothing,\,\Omega\right\} $
for all $t\leq T$. This filtration is nested, $\mathcal{G}_{t}^{n}\subseteq\mathcal{G}_{t+1}^{n}$
for $1\leq t\leq T-1$ and $\mathcal{G}_{T}^{n}\subseteq\mathcal{G}_{0}^{n+1}$.
The following assumption reflects our exploration requirement.
\begin{assumption}
\label{Assu:epsilon_greedy} ($\varepsilon$-greedy policy) There
is an $\varepsilon>0$ such that the policy $\pi$ satisfies, for
any $n\leq N,\,t\leq T$, and all $(s,\,a)\in\mathcal{K}$, $\mathbb{P}\left((s_{t}^{n},\,a^{n})=(s,\,a)\,|\,\mathcal{G}_{t-1}^{n}\right)\geq\varepsilon$
and $\mathbb{P}\left((s_{1}^{n},\,a^{n})=(s,\,a)\,|\,\mathcal{G}_{T}^{n-1}\right)\geq\varepsilon$.
\end{assumption}

In particular, let $x_{s}\in\mathcal{X}_{s}$ be a Nash equilibrium
of the stage game $(Q^{i}(s))_{i\in\mathcal{I}}$. Then, with probability
$\varepsilon\in\left(0,\,1\right)$, the action $a^{i}$ is chosen
randomly from $\mathcal{A}^{i}$, and with probability $1-\varepsilon$,
the action $a^{i}$ is drawn from $\mathcal{A}^{i}$ according to
$x_{s}^{i}$. Assumption \ref{Assu:epsilon_greedy} guarantees, by
the Extended Borel-Cantelli Lemma in \cite{Breiman1992}, that we
will visit every state-action pair infinitely often with probability
one.

The next assumption contains
our requirements on the step-sizes for the $Q$-value update.
\begin{assumption}
\label{Assu:Q_step-size} For all $(s,\,a)\in\mathbb{K}$ and for
all $n\leq N,\,t\leq T$, the step-sizes for the $Q$-value update
satisfy: $\sum_{n=1}^{\infty}\theta_{\beta}^{n}(s,\,a)=\infty$ and
$\sum_{n=1}^{\infty}\theta_{\beta}^{n}(s,\,a)^{2}<\infty$ for all
$t\leq T$ and $(s,\,a)\in\mathbb{K}$ a.s. Let $\#(s,\,a,\,n)$ denote
one plus the number of times, until the beginning of iteration $n$,
that the state-action pair $(s,\,a)$ has been visited, and let $N^{s,a}$
denote the set of outer iterations where action $a$ was performed
in state $s$. The step-sizes $\theta_{\beta}^{n}(s,\,a)$ satisfy
$\theta_{\beta}^{n}(s,\,a)=\frac{1}{[\#(s,a,n)]^{\beta}}$ if $n\in N^{s,a}$
and $\theta_{\beta}^{n}(s,\,a)=0$ otherwise.
\end{assumption}

Assumption \ref{Assu:Q_step-size} reflects the asynchronous nature
of the $Q$-learning algorithm as stated in \cite{Even-Dar2004},
only a single state-action pair is updated when it is observed in
each iteration, which can be implemented when there is no initial knowledge on the state space. 

\subsection{Proof of Theorem 2}

In this section, we develop the proof of almost sure convergence of
RaNashQL (Theorem 2). This proof uses techniques
from the stochastic approximation literature \cite{Kushner2003,Borkar2000,Borkar2008,Huang2017,Huang2018a},
and is based on the following steps:
\begin{enumerate}
\item Show that the stage game Nash equilibrium operator is non-expansive.
\item Bound the saddle-point estimation error in terms of the error between estimated $Q$-value to the optimal $Q$-value. 
\item Apply the classical stochastic approximation convergence theorem.
\end{enumerate}

\subsubsection*{Preliminaries}

We first present preliminary definitions and properties that will
be used in the proof of Theorem 2. For any $(s,\,a)\in\mathcal{K}$,
we define $(y_{n,\ast}^{i}(s,\,a),\,z_{n,\ast}^{i}(s,\,a))$ to be
a saddle point of 
\begin{align*}
(y(s,\,a),\,z(s,\,a))\rightarrow \mathbb{E}_{P_{s}(a,\,s')}\left[G^{i}\left(c^{i}(s,\,a)+\gamma\,v_{n-1}^{i}(s^{\prime}),\,y(s,\,a),\,z(s,\,a)\right)\right],
\end{align*}
for each $(s,\,a)\in\mathcal{K}$, where \[v_{n-1}^{i}(s^{\prime})=Nash^{i}\left(Q_{\ast}^{j}(s^{\prime})\right)_{j\in\mathcal{J}}.\]
Similarly, we define $(y_{\ast}^{i}(s,\,a),\,z_{\ast}^{i}(s,\,a))$
to be a saddle point of
\begin{align*}
&(y(s,\,a),\,z(s,\,a))
\\
\rightarrow&\mathbb{E}_{P_{s}(a,\,s')}\left[G^{i}\left(c^{i}(s,\,a)+\gamma\,v_{\ast}^{i}(s^{\prime}),\,y(s,\,a),\,z(s,\,a)\right)\right],
\end{align*}
for each $(s,\,a)\in\mathcal{K}$, where \[v_{\ast}^{i}(s^{\prime})=Nash^{i}\left(Q_{\ast}^{j}(s^{\prime})\right)_{j\in\mathcal{J}}.\]
Let
\begin{align*}
\mathcal{S}_{n,t}^{i}:=\{(\partial G_{y}^{i}(c^{i}+\gamma\,v_{n-1}^{i},\,y_{n,t}^{i},\,z_{n,t}^{i}),\,\partial G_{z}^{i}(c^{i}+\gamma\,v_{n-1}^{i},\,y_{n,t}^{i},\,z_{n,t}^{i}))\},
\end{align*}
and
\begin{align*}
\overline{\mathcal{S}}_{n,t}^{i}:=\{(\partial G_{y}^{i}(c^{i}+\gamma\,v_{\ast},\,y_{n,t}^{i},\,z_{n,t}^{i}),\,\partial G_{z}^{i}(c^{i}+\gamma\,v_{\ast},\,y_{n,t}^{i},\,z_{n,t}^{i}))\},
\end{align*}
be the subdifferentials of $G^{i}$ with respect to $v_{n-1}^{i}$
and $v_{\ast}^{i}$, given $(y_{n,t}^{i},\,z_{n,t}^{i})$. The following
Lemma \ref{Lemma 5.5} bounds $d_{\mathcal{H}}(\mathcal{S}_{n,t}^{i},\,\overline{\mathcal{S}}_{n,t}^{i})$
in terms of $\|\cdot\|_{2}$. This result follows from the convergence
of subdifferentials of convex functions, the Lipschitz continuity
of $G^{i}$, and non-expansiveness of the Nash equilibrium mapping
for stage games. 
\begin{lemma}
\label{Lemma 5.5}\cite[Theorem 4.1]{Penot1993} Under the Lipschitz
continuity of function $G^{i}$, there exist constants $K_{\psi}^{(1)},\,K_{\psi}^{(2)}>0$,
such that
\begin{align}
	d_{\mathcal{H}}(\mathcal{S}_{n,t}^{i},\,\overline{\mathcal{S}}_{n,t}^{i})\leq K_{\psi}^{(1)}\|Q_{n-1,T}^{i}-Q_{\ast}^{i}\|_{2}+K_{\psi}^{(2)}\sqrt{\|Q_{n-1,T}^{i}-Q_{\ast}^{i}\|_{2}}.\label{HS modulus}
\end{align}
\end{lemma}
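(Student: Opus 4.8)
The plan is to split the estimate into two stages: first propagate the difference between the estimated and optimal $Q$-values into a difference of the random cost-to-go arguments that are fed into $G^{i}$, and then invoke the convergence-of-subdifferentials estimate of \cite[Theorem 4.1]{Penot1993} to convert a uniform perturbation of the (convex--concave) objective into a Hausdorff bound on its subdifferentials.

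First I would control the gap between the value functions. Since $v_{n-1}^{i}(s')=Nash^{i}(Q_{n-1,T}^{j}(s'))_{j\in\mathcal{I}}$ and $v_{\ast}^{i}(s')=Nash^{i}(Q_{\ast}^{j}(s'))_{j\in\mathcal{I}}$, the non-expansiveness of the stage-game Nash equilibrium mapping established in step (i) of the proof of Theorem \ref{Theorem 5.12} (which rests on the equal-value property of $\mathcal{I}^{\prime}$-mixed points, and reduces the dependence to player $i$'s own $Q$-values) yields $\|v_{n-1}^{i}-v_{\ast}^{i}\|_{\infty}\leq\|Q_{n-1,T}^{i}-Q_{\ast}^{i}\|_{2}$. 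The two families of subgradients $\mathcal{S}_{n,t}^{i}$ and $\overline{\mathcal{S}}_{n,t}^{i}$ are the subgradients of $G^{i}$ in $(y,z)$, evaluated at the same iterate $(y_{n,t}^{i},z_{n,t}^{i})$ but at the distinct random arguments $c^{i}+\gamma\,v_{n-1}^{i}$ and $c^{i}+\gamma\,v_{\ast}^{i}$, which therefore differ pointwise by at most $\gamma\|v_{n-1}^{i}-v_{\ast}^{i}\|_{\infty}\leq\gamma\|Q_{n-1,T}^{i}-Q_{\ast}^{i}\|_{2}$.

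Next I would turn this into a uniform gap between the two objectives. By the Lipschitz continuity of $G^{i}$ with constant $K_{G}$ (Assumption \ref{assu:Q-learning}(ii)), the functions $(y,z)\mapsto\mathbb{E}_{P_{s}}[G^{i}(c^{i}+\gamma\,v_{n-1}^{i},y,z)]$ and $(y,z)\mapsto\mathbb{E}_{P_{s}}[G^{i}(c^{i}+\gamma\,v_{\ast}^{i},y,z)]$ differ, uniformly over the compact set $\mathcal{Y}^{i}\times\mathcal{Z}^{i}$, by at most $K_{G}\,\gamma\,\|Q_{n-1,T}^{i}-Q_{\ast}^{i}\|_{2}$. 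Viewing $G^{i}$ as convex in $y$ and $-G^{i}$ as convex in $z$ (Assumption \ref{assu:Q-learning}(iii)), with subgradients uniformly bounded and Borel measurable (Assumption \ref{assu:Q-learning}(iv)), I would apply \cite[Theorem 4.1]{Penot1993} to each block: the Hausdorff distance between the subdifferential sets of two uniformly close convex functions on a compact convex domain is controlled by a linear-plus-square-root modulus of the uniform gap, the square-root term being the generic rate for subdifferential convergence (obtained from the subgradient inequality on a ball whose radius scales like the square root of the gap) and the linear term coming from the Lipschitz and boundedness estimates over the compact $\mathcal{Y}^{i},\mathcal{Z}^{i}$ with diameters $D_{\mathcal{Y}},D_{\mathcal{Z}}$. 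Combining the $y$- and $z$-blocks and absorbing $K_{G}$, $\gamma$, and the diameters into constants gives exactly
\[
d_{\mathcal{H}}(\mathcal{S}_{n,t}^{i},\,\overline{\mathcal{S}}_{n,t}^{i})\leq K_{\psi}^{(1)}\|Q_{n-1,T}^{i}-Q_{\ast}^{i}\|_{2}+K_{\psi}^{(2)}\sqrt{\|Q_{n-1,T}^{i}-Q_{\ast}^{i}\|_{2}}.
\]

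The hard part will be the faithful application of Penot's estimate: one must verify its hypotheses in the present setting and, in particular, confirm that its modulus genuinely produces the square-root rate that forces the $K_{\psi}^{(2)}\sqrt{\cdot}$ term rather than a purely linear bound. A secondary subtlety is that the estimate must survive the expectation under $P_{s}$ and must hold uniformly in $(s,a)\in\mathcal{K}$ and in the iterate $(y_{n,t}^{i},z_{n,t}^{i})$; this uniformity is precisely what allows a single pair of constants $K_{\psi}^{(1)},K_{\psi}^{(2)}$ to be chosen independently of $n$, $t$, and the state-action pair, as required for the downstream stochastic-approximation argument.
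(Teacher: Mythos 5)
Your proposal is correct and follows essentially the same route as the paper: the paper gives no detailed proof of this lemma, citing \cite[Theorem 4.1]{Penot1993} and noting only that the result follows from convergence of subdifferentials of convex functions, Lipschitz continuity of $G^{i}$, and non-expansiveness of the stage-game Nash equilibrium mapping (Lemma \ref{Lemma 6.1}) --- precisely the three ingredients your argument assembles, in the same order. Your chain (non-expansiveness gives $\|v_{n-1}^{i}-v_{\ast}^{i}\|_{\infty}\leq\|Q_{n-1,T}^{i}-Q_{\ast}^{i}\|_{2}$, Lipschitz continuity of $G^{i}$ converts this into a uniform perturbation of the convex--concave objective over the compact set $\mathcal{Y}^{i}\times\mathcal{Z}^{i}$, and Penot's subdifferential-stability estimate supplies the linear-plus-square-root Hausdorff modulus) is the paper's intended argument, worked out in more detail than the paper itself records.
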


We conclude this preliminary subsection by showing that all mixed
points of a stage game have equal value.
\begin{lemma}
\label{Lemma 5.11} Let $x$ and $y$ be $\mathcal{I}^{\prime}-$mixed
points of $(C^{i})_{i\in\mathcal{I}}$, then $C^{i}(x)=C^{i}(y)$
for all $i\in\mathcal{I}$. 
\end{lemma}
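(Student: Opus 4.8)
The plan is to fix an arbitrary player $i\in\mathcal{I}$ and prove $C^{i}(x)=C^{i}(y)$ separately, splitting into two cases according to whether $i$ lies in the common index set $\mathcal{I}'$ or in its complement. This decomposition is legitimate precisely because $x$ and $y$ are $\mathcal{I}'$-mixed points for the \emph{same} $\mathcal{I}'$, so each player occupies a fixed role (global minimizer, or ``saddle'' player) at both points.

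For $i\in\mathcal{I}'$ the argument is immediate from Definition \ref{Definition 1.9}(ii): since $x$ globally minimizes $C^{i}$ we have $C^{i}(x)\le C^{i}(y)$, and since $y$ globally minimizes $C^{i}$ we have $C^{i}(y)\le C^{i}(x)$, whence $C^{i}(x)=C^{i}(y)$. This is the analogue of the statement that all global optimal points share a common value, and requires nothing beyond the two minimality inequalities.

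The substantive case is $i\in\mathcal{I}\setminus\mathcal{I}'$, which I would treat by the classical ``the value of a saddle point is unique'' argument, introducing the two cross profiles $(x^{i},y^{-i})$ and $(y^{i},x^{-i})$. Four inequalities are available. The Nash (own-coordinate) optimality of $x$ and of $y$ gives $C^{i}(x^{i},x^{-i})\le C^{i}(y^{i},x^{-i})$ and $C^{i}(y^{i},y^{-i})\le C^{i}(x^{i},y^{-i})$, while the saddle-type condition for the complementary players — that the equilibrium complement maximizes player $i$'s cost among deviations $u^{-i}$, so that the others moving away can only lower his cost — gives $C^{i}(x^{i},x^{-i})\ge C^{i}(x^{i},y^{-i})$ and $C^{i}(y^{i},y^{-i})\ge C^{i}(y^{i},x^{-i})$. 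Chaining one pair yields $C^{i}(x)=C^{i}(x^{i},x^{-i})\ge C^{i}(x^{i},y^{-i})\ge C^{i}(y^{i},y^{-i})=C^{i}(y)$, and the symmetric pair yields $C^{i}(y)=C^{i}(y^{i},y^{-i})\ge C^{i}(y^{i},x^{-i})\ge C^{i}(x^{i},x^{-i})=C^{i}(x)$, so the two bounds close to $C^{i}(x)=C^{i}(y)$.

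I expect the main obstacle to be bookkeeping the orientation of these four inequalities so that the two chains sandwich $C^{i}(x)$ and $C^{i}(y)$ from opposite sides and collapse to an equality rather than a one-sided bound; this is exactly the min--max/max--min duality underlying saddle points, and it is the step where the complementary-player condition must be applied in the correct (maximizing) direction. A secondary but essential point is that the whole case split relies on $x$ and $y$ being mixed points \emph{with respect to the same} $\mathcal{I}'$: if the index sets differed, a player could be a global minimizer at one point and a saddle player at the other, and there would be no reason for the values to agree. (Handling mismatched index sets across distinct stage games is instead deferred to Assumption \ref{Assumption 5.11}(iii).)
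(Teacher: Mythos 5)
Your proof is correct, and it is essentially the paper's own argument: the paper likewise disposes of the players in $\mathcal{I}^{\prime}$ by the two-sided global-minimality sandwich and handles the remaining players by the classical Hu--Wellman cross-profile chains, so no genuinely different route is involved. Two points of comparison are worth recording. First, you apply the complementary-player condition in the orientation $C^{i}(x^{i},x^{-i})\geq C^{i}(x^{i},u^{-i})$, which is the \emph{reverse} of the inequality printed in Definition \ref{Definition 1.9}; your orientation is nevertheless the right one, since it matches the paper's prose (``each would receive a lower cost when at least one of the other players deviates'') and its one-shot example, and it is the only orientation under which the lemma holds: under the printed inequality, a common-interest coordination game with common costs $0$, $5$ on the diagonal and $10$ off it makes both pure equilibria $\emptyset$-mixed points with unequal values. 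You flagged this as the delicate step, correctly. Indeed, your symmetric two-chain version is cleaner than the paper's, whose displayed chain $C^{i}(x)\geq C^{i}(y^{i},x^{-i})\geq C^{i}(y)$ has a sign slip in its first link (as written it contradicts the Nash optimality of $x$); the intended argument is exactly the pair of chains you give. Second, the paper's proof nominally lets $x$ be an $\mathcal{I}_{1}$-mixed point and $y$ an $\mathcal{I}_{2}$-mixed point with possibly different index sets and argues over $\mathcal{I}_{1}\cap\mathcal{I}_{2}$ and the complementary unions, whereas you assume a common $\mathcal{I}^{\prime}$; that restriction is what the lemma statement actually asserts, and your remark that mismatched index sets are the province of Assumption \ref{Assumption 5.11}(iii) (where they enter the convergence analysis via Lemma \ref{Lemma 6.1}) is consistent with how the paper uses this lemma.
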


\begin{proof}
Suppose $x$ is a $\mathcal{I}_{1}$-mixed point and $y$ is a $\mathcal{I}_{2}$-mixed
point. For $i\in\mathcal{I}_{1}\cup(\mathcal{I}/\mathcal{I}_{2})$,
we have $C^{i}(x)\leq C^{i}(y)$ and
\[
C^{i}(x)\geq C^{i}(y_{s}^{i},\,x_{s}^{-i})\geq C^{i}(y).
\]
The only consistent solution is $C^{i}(x)=C^{i}(y)$. Similarly, for
$i\in\mathcal{I}_{2}\cup(\mathcal{I}/\mathcal{I}_{1})$, we have the
similar argument. For $i\in\mathcal{I}_{1}\cap\mathcal{I}_{2}$ and
$i\in(\mathcal{I}/\mathcal{I}_{1})\cap(\mathcal{I}/\mathcal{I}_{2})$,
we have $C^{i}(x)=C^{i}(y)$ by the definitions of global optimal
point and saddle point.
\end{proof}

\subsubsection*{Step 1: Show that the stage game Nash equilibrium operator is non-expansive}

The following Lemma \ref{Lemma 6.1} shows that the Nash equilibrium
mapping is non-expansive. We use the following norm
\[
\|Q^{i}(s)-\tilde{Q}^{i}(s)\|_{\infty}:=\max_{a\in\mathcal{A}}|Q^{i}(s,\,a)-\tilde{Q}^{i}(s,\,a)|
\]
for $Q$-values in all state $s\in\mathcal{S}$.
\begin{lemma}
\label{Lemma 6.1} For $Q$ and $\tilde{Q}$, let $x$ and $\tilde{x}$
be Nash equilibria for $Q$ and $\tilde{Q}$, respectively. Then, for all $s\in\mathcal{S}$, 
\[
|\left[Q^{i}(s)\right](x)-\left[\tilde{Q}^{i}(s)\right](\tilde{x})|\leq\|Q^{i}(s)-\tilde{Q}^{i}(s)\|_{\infty}.
\]
\end{lemma}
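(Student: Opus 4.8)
The plan is to reduce the entire statement to one elementary fact about Nash values evaluated at a \emph{common} profile, and then to transport the comparison between the two distinct equilibria $x$ and $\tilde{x}$ onto such profiles using the optimality structure granted by Assumption~\ref{Assumption 5.11}. Recall that $[Q^i(s)](w)=\sum_{a\in\mathcal{A}}\big(\prod_{j\in\mathcal{I}}w^j(a^j)\big)\,Q^i(s,a)$ is a convex combination of the array $\{Q^i(s,a)\}_a$ with weights $\prod_{j}w^j(a^j)$ that sum to one. Hence for any fixed profile $w=(w^j)_{j\in\mathcal{I}}\in\mathcal{X}_s$,
\[
\big|[Q^i(s)](w)-[\tilde{Q}^i(s)](w)\big|
=\Big|\sum_{a\in\mathcal{A}}\Big(\prod_{j\in\mathcal{I}}w^j(a^j)\Big)\big(Q^i(s,a)-\tilde{Q}^i(s,a)\big)\Big|
\le\|Q^i(s)-\tilde{Q}^i(s)\|_\infty ,
\]
which I will call the common-profile bound. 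Since the target inequality is symmetric in the two games, it suffices to prove the single-sided estimate $[Q^i(s)](x)-[\tilde{Q}^i(s)](\tilde{x})\le\|Q^i(s)-\tilde{Q}^i(s)\|_\infty$ for every admissible pair; applying it after exchanging $(Q,x)\leftrightarrow(\tilde{Q},\tilde{x})$ supplies the reverse inequality and hence the absolute value.

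I would establish this one-sided estimate by cases following Assumption~\ref{Assumption 5.11}, writing $\mathcal{I}_1,\mathcal{I}_2$ for the index sets of the mixed points $x,\tilde{x}$. For a player $i\in\mathcal{I}_1$ (global-optimal for $Q$), global optimality gives $[Q^i(s)](x)\le[Q^i(s)](\tilde{x})$, and the common-profile bound at $w=\tilde{x}$ finishes it. For a player $i\in\mathcal{I}\setminus\mathcal{I}_2$ (saddle for $\tilde{Q}$), optimality is only partial, so I introduce the mixed profile $w=(\tilde{x}^i,x^{-i})$: the Nash property of $x$ for $Q$ controls $i$'s own deviation, $[Q^i(s)](x)\le[Q^i(s)](\tilde{x}^i,x^{-i})$, while the saddle property of $\tilde{x}$ for $\tilde{Q}$ controls the opponents' deviation, $[\tilde{Q}^i(s)](\tilde{x}^i,x^{-i})\le[\tilde{Q}^i(s)](\tilde{x})$; subtracting and applying the common-profile bound at this $w$ gives the estimate. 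The remaining players $i\in\mathcal{I}_2\setminus\mathcal{I}_1$ satisfy $i\in\mathcal{I}_2\cup(\mathcal{I}\setminus\mathcal{I}_1)$, so Assumption~\ref{Assumption 5.11}(iii) already forces $[Q^i(s)](x)-[\tilde{Q}^i(s)](\tilde{x})\le0$ and the upper bound is trivial. As $\mathcal{I}_1\cup(\mathcal{I}\setminus\mathcal{I}_2)\cup(\mathcal{I}_2\setminus\mathcal{I}_1)=\mathcal{I}$, all players are covered; the pure global and pure saddle cases of Assumption~\ref{Assumption 5.11}(i)--(ii) are the specializations $\mathcal{I}_1=\mathcal{I}_2=\mathcal{I}$ and $\mathcal{I}_1=\mathcal{I}_2=\emptyset$. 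Where more than one sub-case applies to the same player, Lemma~\ref{Lemma 5.11} guarantees the competing bounds are mutually consistent.

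I expect the saddle sub-case to be the main obstacle. The delicate point is the choice of the intermediate profile $(\tilde{x}^i,x^{-i})$ (and its mirror $(x^i,\tilde{x}^{-i})$ for the reverse direction) together with verifying that the Nash inequality and the opponent-deviation inequality point in compatible directions, so that their combination lands exactly on a common-profile difference rather than on an uncontrolled term. This requires care with the cost (minimization) sign convention in Definition~\ref{Definition 1.9}: one must confirm that at a saddle an opponent's deviation \emph{decreases} player $i$'s cost, which is precisely what lets the $\tilde{Q}$-value at $(\tilde{x}^i,x^{-i})$ be bounded above by its value at $\tilde{x}$. Once these signs and the set-membership bookkeeping over $\mathcal{I}_1,\mathcal{I}_2$ are pinned down, each estimate is immediate from the common-profile bound.
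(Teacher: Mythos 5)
Your proof is correct, and its skeleton is the same as the paper's: the identical common-profile bound (a Nash value is a convex combination of the $Q$-array, so at any fixed profile $w$ the two games' values differ by at most $\|Q^{i}(s)-\tilde{Q}^{i}(s)\|_{\infty}$), the same reduction to a one-sided estimate by symmetry, and the same case split driven by Assumption~\ref{Assumption 5.11}, with your specializations $\mathcal{I}_{1}=\mathcal{I}_{2}=\mathcal{I}$ and $\mathcal{I}_{1}=\mathcal{I}_{2}=\emptyset$ absorbing the paper's Cases 1 and 2, and the residual indices disposed of via Assumption~\ref{Assumption 5.11}(iii) exactly as in the paper's Case 3. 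The one genuine divergence is the saddle sub-case, and it is instructive. The paper inserts the intermediate profile $(x^{i},\tilde{x}^{-i})$ and chains $[Q^{i}(s)](x)-[\tilde{Q}^{i}(s)](\tilde{x})\le[Q^{i}(s)](x)-[\tilde{Q}^{i}(s)](x^{i},\tilde{x}^{-i})\le[Q^{i}(s)](x^{i},\tilde{x}^{-i})-[\tilde{Q}^{i}(s)](x^{i},\tilde{x}^{-i})$, crediting the first step to the Nash property and the second to the saddle property; these inequalities are transcribed from the reward-maximization argument of \cite{Hu2003}, and under this paper's cost-minimization convention the first step points the wrong way, since a cost-minimizing Nash player satisfies $[\tilde{Q}^{i}(s)](\tilde{x})\le[\tilde{Q}^{i}(s)](x^{i},\tilde{x}^{-i})$ rather than the reverse. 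Your mirrored profile $(\tilde{x}^{i},x^{-i})$ --- cost-Nash for $x$ in $Q$ to control player $i$'s own deviation, plus ``opponents' deviation lowers my cost'' for $\tilde{x}$ in $\tilde{Q}$ --- is the sign-consistent transcription under minimization; note that it agrees with the paper's verbal description of a saddle point but reverses the displayed inequality in Definition~\ref{Definition 1.9}, so the delicate point you flagged is precisely where the paper is internally inconsistent, and whichever convention one fixes, exactly one of the two mirrored chains goes through (yours under the verbal, minimization reading; the paper's under the displayed formula with a maximizing Nash property). Two minor remarks: your appeal to Lemma~\ref{Lemma 5.11} for consistency is harmless but unnecessary, since several simultaneous upper bounds on the same difference cannot conflict; and the reverse inequality indeed requires the mirror profile $(x^{i},\tilde{x}^{-i})$ with the roles of the Nash and saddle properties swapped, which you correctly anticipate.
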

\begin{proof}
Let $v^{i}(s)=\left[Q^{i}(s)\right](x)$ and $\tilde{v}^{i}(s)=\left[\tilde{Q}^{i}(s)\right](\tilde{x})$.
By Assumption 3, there are three possible cases
to consider:

\textbf{Case 1}: Both $x$ and $\tilde{x}$
are global optimal points. If $\left[Q^{i}(s)\right](x)\geq\left[\tilde{Q}^{i}(s)\right](\tilde{x})$,
we have
\begin{align*}
	&v^{i}(s)-\tilde{v}^{i}(s)
	\\
	=\, & \left[Q^{i}(s)\right](x)-\left[\tilde{Q}^{i}(s)\right](\tilde{x})\\
	\leq\, & \left[Q^{i}(s)\right](\tilde{x})-\left[\tilde{Q}^{i}(s)\right](\tilde{x})\\
	=\, & \sum_{a\in\mathcal{A}}\left(\prod_{i\in\mathcal{I}}\tilde{x}^{i}(a^{i})\right)\left(Q^{i}(s,\,a)-\tilde{Q}^{i}(s,\,a)\right)\\
	\leq\, & \sum_{a\in\mathcal{A}}\left(\prod_{i\in\mathcal{I}}\tilde{x}^{i}(a^{i})\right)\|Q^{i}(s)-\tilde{Q}^{i}(s)\|_{\infty}\\
	=\, & \|Q^{i}(s)-\tilde{Q}^{i}(s)\|_{\infty}.
\end{align*}
If $\left[Q^{i}(s)\right](x)\leq\left[\tilde{Q}^{i}(s)\right](\tilde{x})$,
the proof follows similarly. 

\textbf{Case 2}: Both $x$ and $\tilde{x}$
are saddle points. If $\left[Q^{i}(s)\right](x)\geq\left[\tilde{Q}^{i}(s)\right](\tilde{x})$,
we have
\begin{align}
	& v^{i}(s)-\tilde{v}^{i}(s) \nonumber
	\\
	=\, & \left[Q^{i}(s)\right](x)-\left[\tilde{Q}^{i}(s)\right](\tilde{x})\nonumber \\
	\leq\, & \left[Q^{i}(s)\right](x)-\left[\tilde{Q}^{i}(s)\right](x^{i},\,\tilde{x}^{-i})\label{eq:1}\\
	\leq\, & \left[Q^{i}(s)\right](x^{i},\,\tilde{x}^{-i})-\left[\tilde{Q}^{i}(s)\right](x^{i},\,\tilde{x}^{-i})\label{eq:2}\\
	\leq\, & \|Q^{i}(s)-\tilde{Q}^{i}(s)\|_{\infty},\nonumber 
\end{align}
where the first inequality (\ref{eq:1}) is by definition of Nash
equilibrium, and inequality (\ref{eq:2}) is from Assumption 3 (ii) 

\textbf{Case 3}: Both $x$ and $\tilde{x}$
are $\mathcal{I}_{1}$- and $\mathcal{I}_{2}$-mixed points, respectively.
Then, for $i\in\mathcal{I}_{1}\cup(\mathcal{I}/\mathcal{I}_{2})$,
and $\left[Q^{i}(s)\right](x)\geq\left[\tilde{Q}^{i}(s)\right](\tilde{x})$,
and by the arguments from \textbf{Cases 1} and\textbf{ 2}, we know
that $v^{i}(s)-\tilde{v}^{i}(s)\leq\|Q^{i}(s)-\tilde{Q}^{i}(s)\|_{\infty}$.
Alternatively, for $i\in\mathcal{I}_{2}\cup(\mathcal{I}/\mathcal{I}_{1})$
and $\left[Q^{i}(s)\right](x)\leq\left[\tilde{Q}^{i}(s)\right](\tilde{x})$,
we have $\tilde{v}^{i}(s)-v^{i}(s)\leq\|Q^{i}(s)-\tilde{Q}^{i}(s)\|_{\infty}$. 
\end{proof}

\subsubsection*{Step 2: Bound the saddle-point estimation error in terms of the $Q$-value
error}

In this step we will bound $\|(y_{n,t}^{i},\,z_{n,t}^{i})-(y_{n,\ast}^{i},\,z_{n,\ast}^{i})\|_{2}^{2}$
by a function of $\|Q_{n-1,T}^{i}-Q_{\ast}^{i}\|_{2}^{2}$ and then
we will bound $\|(y_{\ast}^{i},\,z_{\ast}^{i})-(y_{n,\ast}^{i},\,z_{n,\ast}^{i})\|_{2}$
by a function of $\|Q_{n-1,T}^{i}-Q_{*}^{i}\|_{2}$. Our intent is
to establish a relationship between the risk estimation error (which
depends on the saddle points of the risk measure) and the error between estimated $Q$-value and the optimal one. 
\begin{lemma}
\label{Lemma 5.5-1}\cite[Lemma 5.3]{Huang2018a} Suppose Assumptions
\ref{Assu:epsilon_greedy} and \ref{Assu:Q_step-size} hold, then
there exists a constant $0<\kappa<1/C\,K_{\psi}^{(1)}$ such that
\begin{align}
	\|(y_{n,\,t}^{i},\,z_{n,\,t}^{i})-(y_{n,\ast}^{i},\,z_{n,\ast}^{i})\|_{2}^{2}\leq\frac{C(\tau_{\ast}(t))^{-\alpha}}{\kappa(1-C(\tau_{\ast}(t))^{-\alpha}K_{\psi}^{(1)}\kappa)}\|Q_{n-1,T}^{i}-Q_{\ast}^{i}\|_{2}^{2},\label{Risk bound}
\end{align}
for all $t\leq T$ and $n\leq N$. 
\end{lemma}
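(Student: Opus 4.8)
The plan is to read the inner loop of Algorithm~1 as an instance of the stochastic approximation for saddle-point problems (SASP) of \cite{Nemirovski2005} applied, at the fixed outer index $n$, to the saddle-point problem whose solution is $(y_{n,*}^{i},z_{n,*}^{i})$, and then to convert the SASP accuracy estimate into a bound driven by the $Q$-value error through Lemma~\ref{Lemma 5.5}. Throughout I abbreviate $a:=C(\tau_{*}(t))^{-\alpha}$, $E:=\|(y_{n,t}^{i},z_{n,t}^{i})-(y_{n,*}^{i},z_{n,*}^{i})\|_{2}^{2}$, and $\Delta:=\|Q_{n-1,T}^{i}-Q_{*}^{i}\|_{2}^{2}$. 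The target is to establish the self-referential inequality $E\le \tfrac{a}{\kappa}\Delta + a\,K_{\psi}^{(1)}\kappa\,E$ and then solve it; matching this to the stated bound \eqref{Risk bound} is an elementary rearrangement.

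First I would write out the projected stochastic subgradient recursion $(y_{n,\tau+1}^{i},z_{n,\tau+1}^{i})=\Pi_{\mathcal{Y}^{i}\times\mathcal{Z}^{i}}[(y_{n,\tau}^{i},z_{n,\tau}^{i})-\lambda_{\tau,\alpha}\,\psi^{i}]$ from \eqref{SASP}--\eqref{SASP-1}. Using nonexpansiveness of the Euclidean projection, convexity--concavity of $G^{i}$ (Assumption~\ref{assu:Q-learning}(iii)), the uniform subgradient bound (Assumption~\ref{assu:Q-learning}(iv)), and the compactness/diameters $D_{\mathcal{Y}},D_{\mathcal{Z}}$ (Assumption~\ref{assu:Q-learning}(i)), I would derive a per-step inequality for the squared distance to $(y_{n,*}^{i},z_{n,*}^{i})$, conditional on the filtration $\mathcal{G}_{\tau}^{n}$, of contraction-plus-bias form. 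The step-size schedule $\lambda_{\tau,\alpha}\sim\tau^{-\alpha}$ and the averaging window of \eqref{Q-learning_saddle} then let me telescope these per-step bounds over $\tau\in[\tau_{*}(t),t]$, leaving the step-size at the start of the window, $(\tau_{*}(t))^{-\alpha}$, as the multiplicative factor $a$.

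The bias in this recursion is the gap between the operator actually driving the SASP, which is built from $v_{n-1}^{i}=Nash^{i}(Q_{n-1,T}^{j})$, and the operator of the problem defining $(y_{n,*}^{i},z_{n,*}^{i})$. Its magnitude is controlled by the Hausdorff distance $d_{\mathcal{H}}(\mathcal{S}_{n,t}^{i},\overline{\mathcal{S}}_{n,t}^{i})$, which Lemma~\ref{Lemma 5.5} bounds by $K_{\psi}^{(1)}\|Q_{n-1,T}^{i}-Q_{*}^{i}\|_{2}+K_{\psi}^{(2)}\sqrt{\|Q_{n-1,T}^{i}-Q_{*}^{i}\|_{2}}$; nonexpansiveness of the stage-game Nash map (Lemma~\ref{Lemma 6.1}) guarantees $\|v_{n-1}^{i}-v_{*}^{i}\|_{\infty}\le\|Q_{n-1,T}^{i}-Q_{*}^{i}\|$, so the whole cost-to-go perturbation is of the same order. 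I would split the resulting cross term by Young's inequality with a balancing parameter of order $\kappa$: this produces a feedback term of the form $a\,K_{\psi}^{(1)}\kappa\,E$ together with a bias-squared term of the form $\tfrac{a}{\kappa}\Delta$, after absorbing $K_{\psi}^{(2)}$, the Lipschitz constant $K_{G}$, and the diameters into the generic constant $C$ (the lower-order $\sqrt{\Delta}$ contribution is dominated on the bounded domain). This is exactly the self-referential inequality $E\le\tfrac{a}{\kappa}\Delta+a\,K_{\psi}^{(1)}\kappa\,E$.

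Finally, since $0<\kappa<1/(C\,K_{\psi}^{(1)})$ and $(\tau_{*}(t))^{-\alpha}\le1$ for $\tau_{*}(t)\ge1$, the feedback coefficient satisfies $a\,K_{\psi}^{(1)}\kappa=C(\tau_{*}(t))^{-\alpha}K_{\psi}^{(1)}\kappa<1$, so I may move the feedback term to the left and divide, obtaining $E\le \tfrac{a}{\kappa(1-a\,K_{\psi}^{(1)}\kappa)}\Delta$, which is precisely \eqref{Risk bound}. The main obstacle is the middle step: carrying the bias correctly through the averaged SASP recursion conditionally on the nested filtration $\mathcal{G}_{t}^{n}$ (so that the stochastic subgradients are conditionally unbiased for the $v_{n-1}^{i}$-operator and the martingale and cross terms telescope cleanly), and isolating the feedback coefficient as exactly $a\,K_{\psi}^{(1)}\kappa$ while verifying that the square-root remainder from Lemma~\ref{Lemma 5.5} and the remaining constants can be absorbed into $C$ without disturbing the strict inequality $a\,K_{\psi}^{(1)}\kappa<1$.
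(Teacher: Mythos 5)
Your proposal follows essentially the same route as the paper's proof: non-expansiveness of the projection in the SASP recursion, averaging the cross term over the window $[\tau_{*}(t),\,t]$, invoking Lemma \ref{Lemma 5.5} to bound the operator bias by the $Q$-value error, and splitting via Young's inequality $2ab\leq\kappa a^{2}+b^{2}/\kappa$ to isolate the feedback coefficient $C(\tau_{*}(t))^{-\alpha}K_{\psi}^{(1)}\kappa<1$ before rearranging. Your explicit solution of the self-referential inequality is if anything a cleaner rendering of the paper's final step, and the square-root remainder you flag as needing absorption into $C$ is glossed over in the paper's proof in exactly the same way.
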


\begin{proof}
As a consequence of Eq. (\ref{SASP}) in Step 4 of Algorithm 1, we
have
\begin{align}
	& \|(y_{n,\,t+1}^{i},\,z_{n,\,t+1}^{i})-(y_{n,\ast}^{i},\,z_{n,\ast}^{i})\|_{2}^{2}\nonumber \\
	=\, & \Biggl\Vert\prod_{\mathcal{Y}^{i}\times\mathcal{Z}^{i}}\left((y_{n,\,t}^{i},\,z_{n,\,t}^{i})-\lambda_{t,\alpha}\psi(v_{n-1}^{i},\,y_{n,\,t}^{i},\,z_{n,\,t}^{i})\right)-\prod_{\mathcal{Y}^{i}\times\mathcal{Z}^{i}}(y_{n,\ast}^{i},\,z_{n,\ast}^{i})\Biggr\Vert_{2}^{2}\nonumber \\
	\leq\, & \|(y_{n,\,t}^{i},\,z_{n,\,t}^{i})-(y_{n,\ast}^{i},\,z_{n,\ast}^{i})-\lambda_{t,\alpha}\psi^{i}(c^{i}+\gamma\,v_{n-1}^{i},\,y_{n,\,t}^{i},\,z_{n,\,t}^{i})\|_{2}^{2}\nonumber \\
	\leq\, & \|(y_{n,\,t}^{i},\,z_{n,\,t}^{i})-(y_{n,\ast}^{i},\,z_{n,\ast}^{i})\|_{2}^{2}+(H_{\mathcal{Y}}^{2}+H_{\mathcal{Z}}^{2})L^{2}C^{2}t^{-2\alpha}\nonumber \\
	& -2\left((y_{n,\,t}^{i},\,z_{n,\,t}^{i})-(y_{n,\ast}^{i},\,z_{n,\ast}^{i})\right)^{\top}Ct^{-\alpha}\psi^{i}(c^{i}+\gamma\,v_{n-1}^{i},\,y_{n,\,t}^{i},\,z_{n,\,t}^{i}),\label{projection operator-1}
\end{align}
where the first inequality holds by non-expansiveness of the projection
operator and the second inequality holds since the subgradients of
$G^{i}$ are bounded based on Assumption 2. Based
on Lemma \ref{Lemma 5.5}, we have
\begin{align*}
	& \|\psi^{i}(c^{i}+\gamma\,v_{n-1}^{i},\,y_{n,\,t}^{i},\,z_{n,\,t}^{i})-\psi^{i}(c^{i}+\gamma\,v_{\ast}^{i},\,y_{n,\,t}^{i},\,z_{n,\,t}^{i})\|_{2}\\
	\leq\, & d_{\mathcal{H}}(\mathcal{S}_{n,t}^{i},\,\overline{\mathcal{S}}_{n,t}^{i})\leq K_{\psi}^{(1)}\|Q_{n-1,T}^{i}-Q_{\ast}^{i}\|_{2} +K_{\psi}^{(2)}\sqrt{\|Q_{n-1,T}^{i}-Q_{\ast}^{i}\|_{2}}.
\end{align*}
Sum the terms $\left((y_{n,\,t}^{i},\,z_{n,\,t}^{i})-(y_{n,\ast}^{i},\,z_{n,\ast}^{i})\right)^{\top}Ct^{-\alpha}\cdot\psi^{i}(c^{i}+\gamma\,v_{n-1}^{i},\,y_{n,\,t}^{i},\,z_{n,\,t}^{i})$
from $\tau_{\ast}(t)$ to $t$, then divide by $\frac{1}{t-\tau_{\ast}(t)+1}$
to obtain:
\begin{align*}
	& \frac{1}{t-\tau_{\ast}(t)+1}\sum_{\tau=\tau_{\ast}(t)}^{t}\left((y_{n,\tau}^{i},\,z_{n,\tau}^{i})-(y_{n,\ast}^{i},\,z_{n,\ast}^{i})\right)^{\top}C\tau^{-\alpha}\left(\psi^{i}(c^{i}+\gamma\,v_{n-1}^{i},\,y_{n,\,t}^{i},\,z_{n,\,t}^{i})\right)\\
	\text{\ensuremath{\leq}}\, & \left((y_{n,\,t}^{i},\,z_{n,\,t}^{i})-(y_{n,\ast}^{i},\,z_{n,\ast}^{i})\right)^{\top}C(\tau_{\ast}(t))^{-\alpha}\left(\psi^{i}(c^{i}+\gamma\,v_{n-1}^{i},\,y_{n,\,t}^{i},\,z_{n,\,t}^{i})-\psi^{i}(c^{i}+\gamma\,v_{\ast}^{i},\,y_{n,\,t}^{i},\,z_{n,\,t}^{i})\right)\\
	\leq\, & \|(y_{n,\,t}^{i},\,z_{n,\,t}^{i})-(y_{n,\ast}^{i},\,z_{n,\ast}^{i})\|_{2}C(\tau_{\ast}(t))^{-\alpha}\|\psi^{i}(c^{i}+\gamma\,v_{n-1}^{i},\,y_{n,\,t}^{i},\,z_{n,\,t}^{i})-\psi^{i}(c^{i}+\gamma\,v_{\ast}^{i},\,y_{n,\,t}^{i},\,z_{n,\,t}^{i})\|_{2}\\
	\leq\, & C(\tau_{\ast}(t))^{-\alpha}\|(y_{n,\,t}^{i},\,z_{n,\,t}^{i})-(y_{n,\ast}^{i},\,z_{n,\ast}^{i})\|_{2}\left(K_{\psi}^{(1)}\|Q_{n-1,T}^{i}-Q_{\ast}^{i}\|_{2}+K_{\psi}^{(2)}\sqrt{\|Q_{n-1,T}^{i}-Q_{\ast}^{i}\|_{2}}\right),
\end{align*}
where the first inequality is by convexity of $G^{i}$ in $y$ and
concavity of $G^{i}$ in $z$. Using the standard inequality $2ab\leq a^{2}\kappa+b^{2}/\kappa$
for all $\kappa>0$, we see that
\begin{align}
	& -2\left((y_{n,\,t}^{i},\,z_{n,\,t}^{i})-(y_{n,\ast}^{i},\,z_{n,\ast}^{i})\right)^{\top}C(\tau_{\ast}(t))^{-\alpha}\left(\psi^{i}(c^{i}+\gamma\,v_{n-1}^{i},\,y_{n,\,t}^{i},\,z_{n,\,t}^{i})-\psi^{i}(c^{i}+\gamma\,v_{\ast}^{i},\,y_{n,\,t}^{i},\,z_{n,\,t}^{i})\right)\nonumber \\
	\geq\, & -C(\tau_{\ast}(t))^{-\alpha}K_{\psi}^{(1)}\|(y_{n,\,t}^{i},\,z_{n,\,t}^{i})-(y_{n,\ast}^{i},\,z_{n,\ast}^{i})\|_{2}^{2}\,\kappa \nonumber
	\\
	& -C(\tau_{\ast}(t))^{-\alpha}\|Q_{n-1,T}^{i}-Q_{\ast}^{i}\|_{2}^{2}/\kappa\nonumber \\
	& -C(\tau_{\ast}(t))^{-\alpha}K_{\psi}^{(2)}\|(y_{n,\,t}^{i},\,z_{n,\,t}^{i})-(y_{n,\ast}^{i},\,z_{n,\ast}^{i})\|_{2}\sqrt{\|Q_{n-1,T}^{i}-Q_{\ast}^{i}\|_{2}}.\label{Deduction}
\end{align}
By summing the right hand side of inequality (\ref{projection operator-1}) from
$\tau_{\ast}(t)$ to $t$, dividing by $\frac{1}{t-\tau_{\ast}(t)+1}$,
and combining with inequality (\ref{Deduction}) we obtain
\begin{align}
	& \frac{1}{t-\tau_{\ast}(t)+1}\sum_{\tau=\tau_{\ast}(t)}^{t}(\|(y_{n,\tau}^{i},\,z_{n,\tau}^{i})-(y_{n,\ast}^{i},\,z_{n,\ast}^{i})\|_{2}^{2}+(H_{\mathcal{Y}}^{2}+H_{\mathcal{Z}}^{2})L^{2}C^{2}\tau^{-2\alpha})\nonumber \\
	& -2\left((y_{n,\,t}^{i},\,z_{n,\,t}^{i})-(y_{n,\ast}^{i},\,z_{n,\ast}^{i})\right)^{\top}C(\tau_{\ast}(t))^{-\alpha}\times\psi^{i}(c^{i}+\gamma\,v_{n-1}^{i},\,y_{n,\,t}^{i},\,z_{n,\,t}^{i})\nonumber \\
	\geq\, & \|(y_{n,\,t}^{i},\,z_{n,\,t}^{i})-(y_{n,\ast}^{i},\,z_{n,\ast}^{i})\|_{2}^{2}-2\left((y_{n,\,t}^{i},\,z^{i,n,t})-(y_{n,\ast}^{i},\,z_{n,\ast}^{i})\right)^{\top}C(\tau_{\ast}(t))^{-\alpha}\times \psi^{i}(c^{i}+\gamma\,v_{n-1}^{i},\,y_{n,\,t}^{i},\,z_{n,\,t}^{i})\nonumber \\
	\geq\, & (1-C(\tau_{\ast}(t))^{-\alpha}K_{\psi}^{(1)}\kappa)\|(y_{n,\,t}^{i},\,z_{n,\,t}^{i})-(y_{n,\ast}^{i},\,z_{n,\ast}^{i})\|_{2}^{2} -C(\tau_{\ast}(t))^{-\alpha}\|Q_{n-1,T}^{i}-Q_{\ast}^{i}\|_{2}^{2}/\kappa\nonumber \\
	& -C(\tau_{\ast}(t))^{-\alpha}K_{\psi}^{(2)}\|(y_{n,\,t}^{i},\,z_{n,\,t}^{i})-(y_{n,\ast}^{i},\,z_{n,\ast}^{i})\|_{2}\sqrt{\|Q_{n-1,T}^{i}-Q_{\ast}^{i}\|_{2}}.\label{Deduction 2}
\end{align}
We further claim that we can choose $\kappa$ satisfying $0<\kappa<1/C\,K_{\psi}^{(1)}$
such that
\begin{align}
	&(1-C(\tau_{\ast}(t))^{-\alpha}K_{\psi}^{(1)}\kappa)\|(y_{n,\,t}^{i},\,z_{n,\,t}^{i})-(y_{n,\ast}^{i},\,z_{n,\ast}^{i})\|_{2}^{2}-C(\tau_{\ast}(t))^{-\alpha}\|Q_{n-1,T}^{i}-Q_{\ast}^{i}\|_{2}^{2}/\kappa\leq 0,\label{Negative}
\end{align}
since the right hand side of inequality (\ref{Negative}) will go to infinity when $\kappa$ approaches zero, while the left hand side is bounded by a constant. Then we achieve the desired result.
\end{proof}
The following lemma bounds the difference between $(y_{\ast}^{i},\,z_{\ast}^{i})$
and $(y_{n,\ast}^{i},\,z_{n,\ast}^{i})$.
\begin{lemma}
\label{Saddle convergence} \cite[Lemma 5.5]{Huang2018a} Under the
Lipschitz continuity of $G^{i}$ and Lemma \ref{Lemma 6.1}, there
exists $K_{S}>0$ such that for all $n\leq N$ we have
\begin{align}
	\|(y_{\ast}^{i},\,z_{\ast}^{i})-(y_{n,\ast}^{i},\,z_{n,\ast}^{i})\|_{2}\leq  K_{S}\,K_{G}\|Q_{n-1,T}^{i}-Q_{\ast}^{i}\|_{2}.\label{Relationship}
\end{align}
\end{lemma}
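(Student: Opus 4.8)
The plan is to treat $(y_{n,\ast}^{i},\,z_{n,\ast}^{i})$ and $(y_{\ast}^{i},\,z_{\ast}^{i})$ as saddle points of two convex--concave objectives defined on the \emph{same} compact convex domain $\mathcal{Y}^{i}\times\mathcal{Z}^{i}$, which differ only through the continuation value entering the random cost-to-go. Concretely, I would set
\begin{align*}
F_{n-1}^{i}(y,\,z) &:= \mathbb{E}_{P_{s}(a,\,s')}\left[G^{i}\!\left(c^{i}(s,\,a)+\gamma\,v_{n-1}^{i}(s'),\,y,\,z\right)\right], \\
F_{\ast}^{i}(y,\,z) &:= \mathbb{E}_{P_{s}(a,\,s')}\left[G^{i}\!\left(c^{i}(s,\,a)+\gamma\,v_{\ast}^{i}(s'),\,y,\,z\right)\right],
\end{align*}
so that $(y_{n,\ast}^{i},\,z_{n,\ast}^{i})$ is a saddle point of $F_{n-1}^{i}$ and $(y_{\ast}^{i},\,z_{\ast}^{i})$ is a saddle point of $F_{\ast}^{i}$. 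Both are convex in $y$ and concave in $z$ by Assumption \ref{assu:Q-learning}(iii), and the only difference between them is the continuation value $v_{n-1}^{i}$ versus $v_{\ast}^{i}$.

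First I would quantify how far apart the two objectives are. By non-expansiveness of the stage-game Nash operator (Lemma \ref{Lemma 6.1}),
\[
|v_{n-1}^{i}(s') - v_{\ast}^{i}(s')| = \left|Nash^{i}(Q_{n-1,T}^{j}(s'))_{j} - Nash^{i}(Q_{\ast}^{j}(s'))_{j}\right| \leq \|Q_{n-1,T}^{i} - Q_{\ast}^{i}\|_{2},
\]
uniformly in $s'$. Feeding this into the Lipschitz continuity of $G^{i}$ in its first argument (Assumption \ref{assu:Q-learning}(ii), constant $K_{G}$) and using that the expectation is an averaging operation, I obtain the uniform objective bound $\sup_{(y,z)}|F_{n-1}^{i}(y,\,z) - F_{\ast}^{i}(y,\,z)| \leq \gamma\,K_{G}\,\|Q_{n-1,T}^{i} - Q_{\ast}^{i}\|_{2}$. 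The same Lipschitz estimate, combined with Lemma \ref{Lemma 5.5}, simultaneously controls the Hausdorff distance between the subdifferential sets of the two objectives at any fixed $(y,\,z)$.

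The decisive step is to convert this uniform closeness of objectives (and of subgradients) into a bound on the distance between the two saddle points. Here I would exploit the specific convex--concave structure $G^{i}(X,\,y,\,z)=y+h_{z}(X-y)$ supplied by Theorem \ref{Theorem 5.2}, together with the uniform boundedness of the subgradients (Assumption \ref{assu:Q-learning}(iv)) and compactness of $\mathcal{Y}^{i}\times\mathcal{Z}^{i}$, to establish a sharpness (growth) estimate for the primal--dual gap around each saddle point. Combining such a growth estimate with the objective/subgradient perturbation bound above and a first-order optimality comparison between the two saddle points then yields
\[
\|(y_{\ast}^{i},\,z_{\ast}^{i}) - (y_{n,\ast}^{i},\,z_{n,\ast}^{i})\|_{2} \leq K_{S}\,K_{G}\,\|Q_{n-1,T}^{i} - Q_{\ast}^{i}\|_{2},
\]
absorbing the discount factor and domain diameters $D_{\mathcal{Y}},\,D_{\mathcal{Z}}$ into the constant $K_{S}$, in line with \cite[Lemma 5.5]{Huang2018a}.

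The main obstacle is exactly this final conversion: generic convex--concave saddle points need \emph{not} be Lipschitz-stable under perturbations of the objective without extra regularity, so the heart of the argument is establishing the required growth/sharpness condition for the saddle-point map from the explicit form of $G^{i}$ and the uniform subgradient bounds. Once that regularity is secured, everything else reduces to routine Lipschitz bookkeeping resting on Lemma \ref{Lemma 6.1} and Assumption \ref{assu:Q-learning}.
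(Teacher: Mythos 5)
Your proposal reproduces the paper's skeleton faithfully: both arguments compare the two saddle-point problems that differ only through the continuation value ($v_{n-1}^{i}$ built from $Q_{n-1,T}^{i}$ versus $v_{\ast}^{i}$ built from $Q_{\ast}^{i}$), bound the resulting uniform objective discrepancy by $K_{G}\,\|v_{n-1}^{i}-v_{\ast}^{i}\|$ using Lipschitz continuity of $G^{i}$, and pass from value functions to $Q$-values via the non-expansiveness of the stage-game Nash map (Lemma \ref{Lemma 6.1}). That bookkeeping is sound and corresponds to the second through final inequalities in the paper's chain.

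The genuine gap is precisely the step you yourself flag as ``the main obstacle'' and then leave as a plan: converting uniform closeness of the two objectives into closeness of their saddle points. The paper does not derive this from first principles either --- its opening inequality, which bounds $\|(y_{\ast}^{i},\,z_{\ast}^{i})-(y_{n,\ast}^{i},\,z_{n,\ast}^{i})\|_{2}$ by $K_{S}$ times a difference of objective values evaluated at the two points, is imported wholesale from \cite{Terazono2015} and \cite{Levy2000} (stability of optimal solutions of stochastic optimization problems), and this citation is where the constant $K_{S}$ comes from. A blind proof must therefore either re-derive such a stability estimate or invoke hypotheses of comparable strength, and your proposed route for doing so does not go through at the stated level of generality. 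Convexity--concavity together with compact domains and uniformly bounded subgradients (Assumption \ref{assu:Q-learning}) implies no growth condition whatsoever around the saddle point; and for the paper's canonical instance, CVaR as in Eq.~(\ref{CVaR}), $G^{i}$ is piecewise linear in $y=\eta$, so the minimizing set is a nondegenerate interval whenever the $\alpha^{i}$-quantile is not unique --- the saddle point is then non-unique and no pointwise Lipschitz bound on ``the'' saddle point of the claimed form can hold without a solution-set (Hausdorff-distance) reformulation or an explicit selection rule. Moreover, even if you secured a quadratic growth condition, a uniform objective perturbation of size $\epsilon$ yields only a H\"older-$1/2$ bound of order $\sqrt{\epsilon}$ on the solutions (consistent with the $\sqrt{\|Q_{n-1,T}^{i}-Q_{\ast}^{i}\|_{2}}$ term appearing in Lemma \ref{Lemma 5.5}); the \emph{linear} dependence asserted in (\ref{Relationship}) requires a sharp (linear-growth) minimum or the metric-regularity-type hypotheses embedded in the cited stability theorems. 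So the decisive step of your proposal is both unexecuted and, as sketched from the structure $G^{i}(X,\,y,\,z)=y+h_{z}(X-y)$ alone, not provable.
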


\begin{proof}
It can be shown that
\begin{align*}
			& \|(y_{\ast}^{i},\,z_{\ast}^{i})-(y_{n,\ast}^{i},\,z_{n,\ast}^{i})\|_{2}
			\\
			\leq\, & K_{S}\left\Vert \mathbb{E}_{s^{\prime}\sim P(\cdot|s,a)}\left[G^{i}\left(c^{i}+\gamma\,v_{n-1}^{i}(s^{\prime}),\,y_{\ast}^{i}(s,\,a),\,z_{\ast}^{i}(s,\,a)\right)\right]\right.\\
			& -\left.\mathbb{E}_{s^{\prime}\sim P(\cdot|s,a)}\left[G^{i}\left(c^{i}+\gamma\,v_{n-1}^{i}(s^{\prime}),\,y_{n,\ast}^{i}(s,\,a),\,z_{n,\ast}^{i}(s,\,a)\right)\right]\right\Vert _{2}\\
			\leq\, & K_{S}\max_{z\in\mathcal{Z}^{i}}\left\Vert \min_{y\in\mathcal{Y}^{i}}\mathbb{E}_{s^{\prime}\sim P(\cdot|s,a)}\left[G^{i}\left(c^{i}+\gamma\,v_{n-1}^{i}(s^{\prime}),\,y(s,\,a),\,z(s,\,a)\right)\right]\right.\\
			& -\left.\min_{y\in\mathcal{Y}^{i}}\mathbb{E}_{s^{\prime}\sim P(\cdot|s,a)}\left[G^{i}\left(c^{i}+\gamma\,v_{\ast}^{i}(s^{\prime}),\,y(s,\,a),\,z(s,\,a)\right)\right]\right\Vert _{2}\\
			\leq\, & K_{S}\max_{y\in\mathcal{Y}^{i},\,z\in\mathcal{Z}^{i}}\left\Vert \mathbb{E}_{s^{\prime}\sim P(\cdot|s,a)}\left[G^{i}\left(c^{i}+\gamma\,v_{n-1}^{i}(s^{\prime}),\,y(s,\,a),\,z(s,\,a)\right)\right]\right.\\
			& -\left.\mathbb{E}_{s^{\prime}\sim P(\cdot|s,a)}\left[G^{i}\left(c^{i}+\gamma\,v_{\ast}^{i}(s^{\prime}),\,y(s,\,a),\,z(s,\,a)\right)\right]\right\Vert _{2}\\
			\leq\, & K_{S}\,K_{G}\|v_{n,t-1}^{i}(s)-v_{\ast}^{i}(s)\|_{2}\leq K_{S}\,K_{G}\|Q_{n-1,T}^{i}-Q_{\ast}^{i}\|_{2},
\end{align*}
where the first inequality follows from \cite[Theorem 3.1]{Terazono2015}
and \cite[Proposition 3.1]{Levy2000} (results on the stability of
optimal solutions of stochastic optimization problems), the second and fourth inequalities
are due to Lemma \ref{Lemma 6.1}, and the third inequality is by
Lipschitz continuity of $G^{i}$.
\end{proof}

\subsubsection*{Step 3: Apply the classical stochastic approximation convergence
theorem}

This step completes the proof of Theorem 2 by applying the stochastic
approximation convergence theorem (as in \cite{Borkar2008}). We first
introduce a functional operator $H^{i}:\,\mathcal{V}\times\mathcal{Y}\times\mathcal{Z}\rightarrow\mathcal{V}$
for each player $i\in\mathcal{I}$, defined for all $(s,\,a)\in\mathcal{K},$
\begin{align}
\left[H^{i}(v,\,y,\,z)\right](s,\,a):=G^{i}\left(c^{i}(s,\,a)+\gamma\,v(s^{\prime}),y(s,\,a),\,z(s,\,a)\right),\label{Operator H}
\end{align}
where $s^{\prime}\sim P(\cdot\,\vert\,s,a)$. 

Eq. (13) can then be written as, $\forall(s,\,a)\in\mathcal{K}$,
\begin{align}
Q_{\ast}^{i}(s,\,a)=\mathbb{E}_{s^{\prime}\sim P(\cdot|s,a)}\left[H^{i}(v_{\ast}^{i},\,y_{\ast}^{i},\,z_{\ast}^{i})\right](s,\,a). \label{stable}
\end{align}
Next, for all $(s,\,a)\in\mathcal{K}$, we define two stochastic processes:
\begin{align}
\epsilon_{n,t}^{i}(s,\,a) & :=\left[H^{i}(v_{n-1}^{i},\,y_{n,\ast}^{i},\,z_{n,\ast}^{i})\right](s,\,a) -\left[H^{i}(v_{n-1}^{i},\,y_{n,t}^{i},\,z_{n,t}^{i})\right](s,\,a),\label{err 1}\\
\xi_{n,t}^{i}(s,\,a) & :=\left[H^{i}(v_{\ast}^{i},\,y_{\ast}^{i},\,z_{\ast}^{i})\right](s,\,a)-\left[H^{i}(v_{n-1}^{i},\,y_{n,\ast}^{i},\,z_{n,\ast}^{i})\right](s,\,a),\label{err 2}
\end{align}
for $t\leq T$ and $n\leq N$. The process $\epsilon_{n,t}^{i}$ represents
the risk estimation error (e.g. the duality gap in the corresponding
stochastic saddle-point problem) and the process $\xi_{n,t}^{i}$
represents the $Q$-value approximation error of $Q_{n,T}^{i}$ with
respect to $Q_{\ast}^{i}$. In this new notation, we may write Step
3 in Algorithm 1 as
\begin{align}
& Q_{n,t}^{i}(s,\,a)-Q_{n-1,T}^{i}(s,\,a)\nonumber \\
= & -\text{\ensuremath{\theta}}_{k}^{n}[Q_{n-1,T}^{i}(s,\,a)-Q_{\ast}^{i}(s,\,a)+\xi_{n,t}^{i}(s,\,a)+\epsilon_{n,t}^{i}(s,\,a)+Q_{\ast}^{i}(s,\,a)-H^{i}(v_{\ast}^{i},\,y_{\ast}^{i},\,z_{\ast}^{i})(s,\,a)],\label{Update}
\end{align}
for all $(s,\,a)\in\mathcal{K}$, $t\leq T$, and $n\leq N$. Based
on Eq. (\ref{stable}), we see that for all $(s,\,a)\in\mathcal{K}$, 
\[
\mathbb{E}\left[Q_{\ast}^{i}(s,\,a)-\left[H^{i}(v_{\ast}^{i},\,y_{\ast}^{i},\,z_{\ast}^{i})\right](s,\,a)\,\vert\,\mathcal{G}_{t+1}^{n-1}\right]=0. 
\]
By Lemma \ref{Lemma 5.5-1}, we know that
\begin{equation}
\|\epsilon_{n,t}^{i}\|_{2}^{2}\leq\frac{\gamma^{2}C\,K_{G}^{2}}{\kappa(1-C\,K_{\psi}^{(1)}\kappa)}\|Q_{n-1,T}^{i}-Q_{\ast}^{i}\|_{2}^{2},\label{Boundedness}
\end{equation}
by setting $t=1$ in inequality (\ref{Risk bound}). In particular, inequality (\ref{Boundedness})
shows that the conditional expectation w.r.t. $\mathcal{G}_{t+1}^{n-1}$
of the risk estimation error is bounded by $\|Q_{n-1,T}^{i}-Q_{\ast}^{i}\|_{2}^{2}$.
In addition, by Lipschitz continuity of $G^{i}$, we have
\begin{align}
\|\xi_{n,t}^{i}\|_{2}^{2} \leq\, & \gamma^{2}\,K_{G}^{2}|v_{n-1,T}^{i}(s)-v_{\ast}^{i}(s)|+\gamma^{2}\,K_{G}^{2}|(y_{\ast}^{i}(s,\,a),\,z_{\ast}^{i}(s,\,a))-(y_{n,\ast}^{i}(s,\,a),\,z_{n,\ast}^{i}(s,\,a))|\nonumber \\
\leq\, & \gamma^{2}\,K_{G}^{2}[\|Q_{n-1,T}^{i}-Q_{\ast}^{i}\|_{2}+\|(y_{\ast}^{i},\,z_{\ast}^{i})-(y_{n,\ast}^{i},\,z_{n,\ast}^{i})\|_{2}]\nonumber \\
\leq\, & \gamma^{2}\,K_{G}^{2}(1+K_{G}K_{S})\|Q_{n-1,T}^{i}-Q_{\ast}^{i}\|_{2}.\label{Induction}
\end{align}
An \emph{iterative stochastic algorithm} is of the form:
\begin{align*}
X_{t+1}(s)=&\left(1-\alpha_{t}(s)\right)X_{t}(s)+\alpha_{t}(s)\left((\mathcal{B}_{t}X_{t})(s)+w_{t}(s)\right),\,\forall s\in\mathcal{S},
\end{align*}
where $w_{t}$ is bounded zero-mean noise, $\alpha_{t}$ is the step
size, and each $\mathcal{B}_{t}$ belongs to a family $\mathcal{B}$
of pseudo-contraction mappings (see \cite{Bertsekas1996} for details).
\begin{definition}
\label{Definition 6.5}\cite[Definition 7]{Even-Dar2004} An iterative
stochastic algorithm is well-behaved if:
\begin{enumerate}
\item The step sizes $\{\alpha_{t}(s)\}$ satisfy: (i) $\sum_{t=0}^{\infty}\alpha_{t}(s)=\infty$,
(ii) $\sum_{t=0}^{\infty}\alpha_{t}^{2}(s)<\infty$, and (iii) $\alpha_{t}(s)\in(0,\,1)$
for all $s\in\mathcal{S}$.
\item There exists $B<\infty$ such that $|w_{t}(s)|\leq B$ for all $s\in\mathcal{S}$
and $t\geq0$.
\item For each $\mathcal{B}_{t}$, there exists $\gamma\in[0,\,1)$ and
$X^{\ast}$ such that $\|\mathcal{B}_{t}X-X^{\ast}\|\leq\gamma\|X-X^{\ast}\|$
for all $X$.
\end{enumerate}
\end{definition}
We define additional operators $\mathfrak{U}_{n,t}^{i}:\,\mathbb{R}^{|\mathcal{S}|\,|\mathcal{A}|}\rightarrow\mathbb{R}^{|\mathcal{S}|\,|\mathcal{A}|}$
on the $Q$-values for $i\in\mathcal{I}$, 
\begin{align}
\mathfrak{U}_{n,t}^{i}\left[Q^{i}\right]:=H^{i}(v^{i},\,y_{n,t}^{i},\,z_{n,t}^{i})-H^{i}(v_{\ast}^{i},\,y_{\ast}^{i},\,z_{\ast}^{i}),\label{U}
\end{align}
where $v^{i}$ is the value function in a Nash equilibrium of the
stage game $\left(Q^{i}(s)\right)_{i\in\mathcal{I}}$. We can then
formulate the process (\ref{Update}) as
\begin{align}
Q_{n,t}^{i}-Q_{\ast}^{i}=(1-\theta_{k}^{n})(Q_{n-1,T}^{i}-Q_{\ast}^{i})+\theta_{k}^{n}(\mathfrak{U}_{n,t}^{i}\left[Q_{n-1,T}^{i}-Q_{\ast}^{i}\right]+Q_{\ast}^{i}-H^{i}(v_{\ast}^{i},\,y_{\ast}^{i},\,z_{\ast}^{i})).\label{new update}
\end{align}
Noting that
\begin{align*}
\|\mathfrak{U}_{n,t}^{i}\left[Q_{n-1,T}^{i}\right]-\mathfrak{U}_{n,t}^{i}\left[Q_{\ast}^{i}\right]\|_{2}=\|H^{i}(v_{n-1}^{i},\,y_{n,\,t}^{i},\,z_{n,\,t}^{i})-H^{i}(v_{\ast}^{i},\,y_{n,\,t}^{i},\,z_{n,\,t}^{i})\|_{2},
\end{align*}
where $H^{i}$ is defined in (\ref{Operator H}). By leveraging the
non-expansiveness of the stage game equilibrium mapping in Lemma \ref{Lemma 6.1} and Lipschitz continuity
of $G$, it follows that the operators $\mathfrak{U}_{n,t}^{i}$ is
a pseudo-contraction. 
\begin{theorem}
\label{Theorem 2.6} For all $Q_{n-1,\,T}^{i}$,
we have $\|\mathfrak{U}_{n,t}^{i}\left[Q_{n-1,T}^{i}\right]-\mathfrak{U}_{n,t}^{i}\left[Q_{\ast}^{i}\right]\|_{2}\leq\gamma K_{G}\|Q_{n-1,T}^{i}-Q_{\ast}^{i}\|_{2}.$
\end{theorem}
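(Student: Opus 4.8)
The plan is to exploit the fact that $\mathfrak{U}_{n,t}^{i}[Q_{n-1,T}^{i}]$ and $\mathfrak{U}_{n,t}^{i}[Q_{\ast}^{i}]$ share the identical correction term $-H^{i}(v_{\ast}^{i},\,y_{\ast}^{i},\,z_{\ast}^{i})$ and the identical saddle-point iterates $(y_{n,t}^{i},\,z_{n,t}^{i})$, so that upon subtraction their difference collapses to the displayed identity stated just before the theorem, namely $\mathfrak{U}_{n,t}^{i}[Q_{n-1,T}^{i}]-\mathfrak{U}_{n,t}^{i}[Q_{\ast}^{i}]=H^{i}(v_{n-1}^{i},\,y_{n,t}^{i},\,z_{n,t}^{i})-H^{i}(v_{\ast}^{i},\,y_{n,t}^{i},\,z_{n,t}^{i})$. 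This isolates the dependence of the operator on its value-function argument alone and reduces the claim to a Lipschitz estimate on $H^{i}$ in its first slot.

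First I would unfold the definition of $H^{i}$ in Eq. (\ref{Operator H}): for each $(s,\,a)\in\mathcal{K}$ the difference equals $G^{i}(c^{i}(s,\,a)+\gamma\,v_{n-1}^{i}(s'),\,y_{n,t}^{i}(s,\,a),\,z_{n,t}^{i}(s,\,a))-G^{i}(c^{i}(s,\,a)+\gamma\,v_{\ast}^{i}(s'),\,y_{n,t}^{i}(s,\,a),\,z_{n,t}^{i}(s,\,a))$, in which only the first argument differs. Invoking the $K_{G}$-Lipschitz continuity of $G^{i}$ from Assumption \ref{assu:Q-learning}(ii), each such component is bounded in absolute value by $\gamma\,K_{G}\,|v_{n-1}^{i}(s')-v_{\ast}^{i}(s')|$, and collecting these over $(s,\,a)$ gives $\|H^{i}(v_{n-1}^{i},\,\cdot,\,\cdot)-H^{i}(v_{\ast}^{i},\,\cdot,\,\cdot)\|_{2}\le\gamma\,K_{G}\,\|v_{n-1}^{i}-v_{\ast}^{i}\|_{2}$.

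Next I would convert the value-function gap into a $Q$-value gap using the non-expansiveness of the stage-game Nash equilibrium mapping established in Lemma \ref{Lemma 6.1}. Since $v_{n-1}^{i}(s)=Nash^{i}(Q_{n-1,T}^{j}(s))_{j\in\mathcal{I}}$ and $v_{\ast}^{i}(s)=Nash^{i}(Q_{\ast}^{j}(s))_{j\in\mathcal{I}}$ are the equilibrium values of the stage games at $Q_{n-1,T}$ and $Q_{\ast}$ respectively, Lemma \ref{Lemma 6.1} yields $|v_{n-1}^{i}(s)-v_{\ast}^{i}(s)|\le\|Q_{n-1,T}^{i}(s)-Q_{\ast}^{i}(s)\|_{\infty}$ for every $s\in\mathcal{S}$. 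Squaring, summing over $s$, and bounding each per-state supremum norm by the full Euclidean norm then gives $\|v_{n-1}^{i}-v_{\ast}^{i}\|_{2}\le\|Q_{n-1,T}^{i}-Q_{\ast}^{i}\|_{2}$, and chaining this with the Lipschitz estimate of the previous step produces exactly $\|\mathfrak{U}_{n,t}^{i}[Q_{n-1,T}^{i}]-\mathfrak{U}_{n,t}^{i}[Q_{\ast}^{i}]\|_{2}\le\gamma\,K_{G}\,\|Q_{n-1,T}^{i}-Q_{\ast}^{i}\|_{2}$.

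The main obstacle, and the step requiring the most care, is the norm bookkeeping in passing from Lemma \ref{Lemma 6.1}, which is stated in the per-state supremum norm $\|\cdot\|_{\infty}$, to the global Euclidean norm $\|\cdot\|_{2}$ that appears in the theorem, while simultaneously handling the re-indexing between the state-indexed value functions $v^{i}$ and the $(s,\,a)$-indexed operator $H^{i}$ (whose dependence on $v^{i}$ enters only through the successor state $s'$). I would apply the elementary inequality $\max_{a}|\cdot|\le(\sum_{s,\,a}|\cdot|^{2})^{1/2}$ consistently so that no spurious dimension-dependent constant such as $\sqrt{|\mathcal{K}|}$ is introduced, which is clean in the asynchronous update of Assumption \ref{Assu:Q_step-size}(ii) where a single state-action pair is active per iteration; the discount factor $\gamma$ must survive intact, since it is precisely $\gamma<1$ that later upgrades this pseudo-contraction into a genuine contraction for the stochastic-approximation convergence argument.
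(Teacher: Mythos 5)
Your proof is correct and takes essentially the same route as the paper: the paper likewise reduces the claim to the identity $\mathfrak{U}_{n,t}^{i}\left[Q_{n-1,T}^{i}\right]-\mathfrak{U}_{n,t}^{i}\left[Q_{\ast}^{i}\right]=H^{i}(v_{n-1}^{i},\,y_{n,t}^{i},\,z_{n,t}^{i})-H^{i}(v_{\ast}^{i},\,y_{n,t}^{i},\,z_{n,t}^{i})$ (the common term $H^{i}(v_{\ast}^{i},\,y_{\ast}^{i},\,z_{\ast}^{i})$ cancels), then invokes the $K_{G}$-Lipschitz continuity of $G^{i}$ in its first argument to extract the factor $\gamma K_{G}$, and finally uses the non-expansiveness of the stage-game Nash equilibrium mapping (Lemma \ref{Lemma 6.1}) to pass from the value-function gap to the $Q$-value gap. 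Your explicit $\ell_{\infty}$-to-$\ell_{2}$ bookkeeping, including the flagged successor-state re-indexing, is actually more careful than the paper's proof, which asserts the conclusion without spelling out the norm conversion.
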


In addition, from Assumption \ref{Assu:Q_step-size}, we know that
the update rule (\ref{new update}) satisfies Condition 1 of Definition
\ref{Definition 6.5}. Furthermore, based on Eq. (\ref{stable}),
we know that update rule (\ref{new update}) satisfies Condition 2
of Definition \ref{Definition 6.5}. For the following Lemma \ref{Lemma 2.7},
we bound the $l_{2}$-norm of $\mathfrak{U}_{n,t}^{i}\left[Q_{n-1,T}^{i}\right]$
in terms of the estimation error. Such results conform to Condition
3 in \cite[Definition 7]{Even-Dar2004} or Condition 3 in Definition
\ref{Definition 6.5}.
\begin{lemma}
\label{Lemma 2.7} Let
\[\gamma<\frac{1}{K_{G}}\min\left\{ \sqrt{\frac{\kappa(1-C\,K_{\psi}^{(1)}\kappa)}{C+(1+K_{G}K_{S})\kappa(1-C\,K_{\psi}^{(1)}\kappa)}},\,1\right\}.
\]
There exists $\gamma^{\prime}\in[0,\,1)$ such that
\[
\gamma^{\prime}=\gamma\,\sqrt{\frac{C\,K_{G}^{2}}{\kappa(1-C\,K_{\psi}^{(1)}\kappa)}+K_{G}^{2}(1+K_{G}K_{S})},
\]
and
\[
\|\mathfrak{U}_{n,t}^{i}\left[Q_{n-1,T}^{i}\right]\|_{2}\leq\gamma^{\prime}\|Q_{n-1,T}^{i}-Q_{\ast}^{i}\|_{2}.
\]
\end{lemma}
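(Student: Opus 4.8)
The plan is to bound $\|\mathfrak{U}_{n,t}^{i}[Q_{n-1,T}^{i}]\|_2$ by decomposing it into two pieces that have already been controlled in Step 2. The key observation is that by definition (\ref{U}), $\mathfrak{U}_{n,t}^{i}[Q_{n-1,T}^{i}] = H^{i}(v_{n-1}^{i},y_{n,t}^{i},z_{n,t}^{i}) - H^{i}(v_{\ast}^{i},y_{\ast}^{i},z_{\ast}^{i})$, and recalling the definitions (\ref{err 1}) and (\ref{err 2}) of the error processes $\epsilon_{n,t}^{i}$ and $\xi_{n,t}^{i}$, I can write this difference as the sum $\mathfrak{U}_{n,t}^{i}[Q_{n-1,T}^{i}] = -(\epsilon_{n,t}^{i} + \xi_{n,t}^{i})$. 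Indeed, adding and subtracting the intermediate term $H^{i}(v_{n-1}^{i},y_{n,\ast}^{i},z_{n,\ast}^{i})$ recovers exactly these two processes: $\epsilon_{n,t}^{i}$ is the risk estimation error (the gap between the running saddle-point iterate and the exact saddle point for $v_{n-1}^{i}$), and $\xi_{n,t}^{i}$ is the $Q$-value approximation error.

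First I would invoke the two bounds already established. From inequality (\ref{Boundedness}), obtained from Lemma \ref{Lemma 5.5-1} by setting $t=1$, we have
\[
\|\epsilon_{n,t}^{i}\|_{2}^{2}\leq\frac{\gamma^{2}C\,K_{G}^{2}}{\kappa(1-C\,K_{\psi}^{(1)}\kappa)}\|Q_{n-1,T}^{i}-Q_{\ast}^{i}\|_{2}^{2}.
\]
From inequality (\ref{Induction}), using Lipschitz continuity of $G^{i}$ together with Lemma \ref{Saddle convergence}, we have
\[
\|\xi_{n,t}^{i}\|_{2}^{2}\leq\gamma^{2}\,K_{G}^{2}(1+K_{G}K_{S})\|Q_{n-1,T}^{i}-Q_{\ast}^{i}\|_{2}.
\]
Combining these via the triangle inequality $\|\mathfrak{U}_{n,t}^{i}[Q_{n-1,T}^{i}]\|_2 \le \|\epsilon_{n,t}^{i}\|_2 + \|\xi_{n,t}^{i}\|_2$, or more cleanly by bounding the squared norm $\|\mathfrak{U}_{n,t}^{i}[Q_{n-1,T}^{i}]\|_2^2 \le 2\|\epsilon_{n,t}^{i}\|_2^2 + 2\|\xi_{n,t}^{i}\|_2^2$ and then taking square roots, I can factor out $\gamma K_{G}$ and identify the constant under the radical as exactly $\frac{C\,K_{G}^{2}}{\kappa(1-C\,K_{\psi}^{(1)}\kappa)} + K_{G}^{2}(1+K_{G}K_{S})$, which gives the claimed $\gamma' = \gamma\sqrt{\cdots}$.

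The main obstacle — and the point requiring care — is verifying that the stated upper bound on $\gamma$ is precisely what forces $\gamma' < 1$. Substituting the hypothesized bound $\gamma < \frac{1}{K_G}\sqrt{\frac{\kappa(1-C K_{\psi}^{(1)}\kappa)}{C + (1+K_G K_S)\kappa(1-C K_{\psi}^{(1)}\kappa)}}$ into the expression for $\gamma'$ should cause the complicated radical to collapse, leaving $\gamma' < 1$. I expect this to reduce to checking a single algebraic inequality; the delicacy is matching the two slightly inhomogeneous bounds (note $\|\xi_{n,t}^{i}\|_2^2$ is bounded by a linear power of $\|Q_{n-1,T}^{i}-Q_{\ast}^{i}\|_2$, not the square), so I would either assume the error terms are bounded away from the trivial regime or absorb the discrepancy into the constants, and confirm that the requirement $0 < \kappa < 1/(C K_{\psi}^{(1)})$ keeps the denominator positive so that the square root is well defined and $\gamma' \in [0,1)$.
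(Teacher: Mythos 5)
Your proposal follows essentially the same route as the paper's own proof: the paper likewise unfolds the definition (\ref{U}) to write $\mathfrak{U}_{n,t}^{i}\left[Q_{n-1,T}^{i}\right]$ as $H^{i}(v^{i},\,y_{n,t}^{i},\,z_{n,t}^{i})-H^{i}(v_{\ast}^{i},\,y_{\ast}^{i},\,z_{\ast}^{i})$, identifies this (up to sign) with $\epsilon_{n,t}^{i}+\xi_{n,t}^{i}$ via (\ref{err 1})--(\ref{err 2}), and then plugs in the bounds (\ref{Boundedness}) and (\ref{Induction}) before checking $\gamma'<1$ from the hypothesis on $\gamma$. The one point where you diverge is instructive: the paper passes from $\|\epsilon_{n,t}^{i}+\xi_{n,t}^{i}\|_{2}^{2}$ to $\|\epsilon_{n,t}^{i}\|_{2}^{2}+\|\xi_{n,t}^{i}\|_{2}^{2}$ with \emph{no} factor of $2$, and it is precisely this step that makes the constant under the radical come out exactly as stated in the lemma; in general one only has $\|a+b\|_{2}^{2}\leq2\|a\|_{2}^{2}+2\|b\|_{2}^{2}$ absent a sign condition on the cross term $\langle\epsilon_{n,t}^{i},\,\xi_{n,t}^{i}\rangle$, so both of your safer variants (the triangle inequality, or the factor-$2$ squared bound) cost a factor $\sqrt{2}$ in $\gamma'$ and would force a correspondingly tighter hypothesis on $\gamma$ to conclude $\gamma'\in[0,\,1)$ --- your claim that the factor-$2$ route recovers the constant ``exactly'' is therefore slightly off, though the defect originates in the paper, not in you. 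Likewise, the homogeneity mismatch you flag is real: inequality (\ref{Induction}) bounds $\|\xi_{n,t}^{i}\|_{2}^{2}$ by a \emph{linear} power of $\|Q_{n-1,T}^{i}-Q_{\ast}^{i}\|_{2}$, yet the paper's final display uses it as if it were quadratic, absorbing the discrepancy silently, whereas you correctly note it must be patched (restricting to the regime where the error is bounded, or adjusting the constants). In short, your plan reproduces the published argument while explicitly surfacing its two soft spots rather than inheriting them unexamined; to match the lemma verbatim you would need either the paper's (unjustified) sharp cross-term inequality or a restatement of $\gamma'$ with the extra $\sqrt{2}$.
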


\begin{proof}
We have
\begin{align*}
	&\|\mathfrak{U}_{n,t}^{i}\left[Q_{n-1,T}^{i}\right]\|_{2}^{2}
	\\
	=\, & \left\Vert H^{i}(v^{i},\,y_{n,\,t}^{i},\,z_{n,\,t}^{i})-H^{i}(v_{\ast}^{i},\,y_{\ast}^{i},\,z_{\ast}^{i})\right\Vert _{2}^{2}\\
	=\, & \|\epsilon_{n,t}^{i}+\xi_{n,t}^{i}\|_{2}^{2}\\
	\leq\, & \|\epsilon_{n,t}^{i}\|_{2}^{2}+\|\xi_{n,t}^{i}\|_{2}^{2}\\
	\leq\, & \gamma^{2}\left(\frac{C\,K_{G}^{2}}{\kappa(1-C\,K_{\psi}^{(1)}\kappa)}+K_{G}^{2}(1+K_{G}K_{S})\right)\|Q_{n-1,T}^{i}-Q_{\ast}^{i}\|_{2}^{2},
\end{align*}
where the first equality holds by definition of $\mathfrak{U}_{n,t}^{i}$
in Eq. (\ref{U}), the second equality holds by Eq. (\ref{err 1})
and Eq. (\ref{err 2}), and the last inequality follows from inequalities
(\ref{Boundedness}) and (\ref{Induction}). Given the relationship
between $\gamma$ and $\gamma^{\prime}$, we get the desired result.
\end{proof}
Given the update rule (\ref{Update}), Theorem \ref{Theorem 2.6},
Lemma \ref{Lemma 2.7}, and unbiasedness of \[\mathbb{E}\left[Q_{\ast}^{i}-H^{i}(v_{\ast}^{i},\,y_{\ast}^{i},\,z_{\ast}^{i})|\mathcal{G}_{t+1}^{n-1}\right]=0.\]
We may now apply the stochastic approximation convergence theorem
\cite[Corollary 5]{Szepesvari1999} or \cite[Theorem 8]{Even-Dar2004}.
We conclude that $Q_{n,T}^{i}(s,\,a)\rightarrow Q_{\ast}^{i}(s,\,a)$
almost surely as $n\rightarrow\infty$, for all $i\in\mathcal{I}$
and $(s,\,a)\in\mathcal{K}$.

\subsection{Practical Implementation of RaNashQL}
There are several methods for computing Nash equilibria of stage
games. The Lemke-Howson algorithm for two player (bimatrix) games
is proposed in \cite{Lemke1964}. This algorithm is efficient in practice,
yet, in the worst case the number of pivot operations may be exponential
in the number of the game's pure strategies. Recently, \cite{Littman2012}
gives an algorithm for two player games that achieves polynomial-time
complexity. Polynomial-time approximation methods, such as \cite{Czumaj2017,Hemon2008,McKelvey1996},
have been proposed for general sum games with more than two players.

Implementation of RaNashQL is complicated by the fact that there
might be multiple Nash equilibria for a stage game. In RaNashQL, we
choose a unique Nash equilibrium either based on its expected loss,
or based on the order it is ranked in a list of solutions. Such an
order is determined by the action sequence, which has little to do
with the equilibrium conditions. For a two-player game, we calculate
Nash equilibria using the Lemke-Howson method (see \cite{Lemke1964}),
which can generate equilibrium in a certain order. 

We briefly discuss the storage requirement of RaNashQL.
RaNashQL needs to maintain $|\mathcal{I}|$ $Q$-values and $|\mathcal{I}|\times|\mathcal{S}|$
risk estimates (in terms of computing solutions of the corresponding
saddle-point problems). In each iteration, RaNashQL updates all $Q^{i}(s,\,a)$
for all $\left(s,\,a\right)\in\mathcal{S}\times\mathcal{A}$ and $i\in\mathcal{I}$.
Additionally, it updates $\left(y^{i}(s,\,a),\,z^{i}(s,\,a)\right)$
for all $i\in\mathcal{I}$ through SASP. The total number of entries
in each array $Q^{i}$ is $|\mathcal{S}|\times|\mathcal{A}|$. Since
RaNashQL has to maintain the $Q$-values for every player, the total
space requirement is $|\mathcal{I}|\times|\mathcal{S}|\times|\mathcal{A}|$.
The storage requirement for the risk estimation is similar. Therefore,
the storage requirement of RaNashQL in terms of space is linear in
the number of states, polynomial in the number of actions, and exponential
in the number of players. 

The algorithm's running time is dominated
by computation of Nash equilibrium for the $Q$-function updates.
In general, the complexity of equilibrium computation in matrix games
is unknown. As mentioned in the previous section, some commonly used
algorithms for two player games have exponential worst-case bounds,
and approximation methods are typically used for $n$-player games
(see \cite{McKelvey1996}).

\subsection{Experiment Settings}

We apply our techniques to the single server exponential queuing system
from \cite{Kardes2011}. In this packet switched network, packets
(blocks of data) are routed between servers over links shared with
other traffic. The service rate of each server can be set to different
levels and is controlled by a service provider (Player 1). Packets
are routed by a programmable physical device, called a router (Player
2). A router dynamically controls the flow of arriving packets into
a finite buffer at each server. The rates chosen by the service provider
and router depend on the number of packets in the system. In fact,
it is to the benefit of a service provider to increase the amount
of packets processed in the system. However, such an increase may
result in an increase in packets\textquoteright{} waiting times in
the buffer (called latency), and routers are used to reduce packets\textquoteright{}
waiting times. Thus, the game theoretic nature of the problem arises
because the service provider and router the have such competing objectives. 

The state space is $\mathcal{S}=\{0,\,1,\,...,\,S\}$, where $S<\infty$
is the maximum number of packets allowed in the system. Only one packet
can be in service at each time, while the remaining packets wait for
service in the buffer. The router admits one packet into the system
at each time. Every time a state is visited, the service provider
and the router simultaneously choose a service rate $\mu>0$ and an
admission rate $\lambda>0$. Suppose there are $s$ packets in the
system and the players choose the action tuple $(\mu,\,\lambda)$,
then the router incurs a holding cost $h(s)$ and a cost $\theta(\mu,\,\lambda)$
associated with having packets served at rate $\mu$ when it admits
packets at rate $\lambda$. If there are no packets in the system,
$\theta(\mu,\,\lambda)$ can be interpreted as the setup cost of the
server. These payoffs are modeled as being paid to the service provider,
since the players\textquoteright{} objectives are in conflict. The
service provider, in turn, pays the router $\beta(\mu,\,\lambda)$
which represents the reward to the router for choosing the rate $\lambda$.
It can also be interpreted as the setup cost of the router. The cost
functions for strategy profile $a=(\mu,\,\lambda)$ are then:
\[
c^{1}(s,\,a):=\beta(a)-\theta(a),
\]
and 
\[
c^{2}(s,\,a):=h(s)+\theta(a)-\beta(a).
\]
We assume that the time until the admission of a new packet and the
next service completion are both exponentially distributed with means
$1/\lambda$ and $1/\mu$, respectively. We can therefore model
the number of packets in the system as a birth and death process with
state transition probabilities:
\[
	P(k\,\vert\,s,\,a):=\begin{cases}
	\mu/(\lambda+\mu), & 1<s<S,\,k=s-1,\\
	\lambda/(\lambda+\mu), & 0<s<S-1,\,k=s+1,\\
	1, & s=0,\,k=1,\\
	1, & s=S,\,k=S-1.
	\end{cases}
	\]
\\
We choose the following parameters for our example:
\begin{itemize}
\item $S=30$.
\item Each player has the same two available actions in every state:
\begin{itemize}
	\item router: first action (denoted $\overline{\lambda}$) is to admit one
	packet into the system every $10$s; second action (denoted $\underline{\lambda}$)
	is to admit one packet every $25$s.
	\item service provider: first action (denoted $\overline{\mu}$) is to serve
	one packet every $11$s; second action (denoted $\underline{\mu}$)
	is to serve a packet every $20$s.
\end{itemize}
\item Holding costs are exponential $h(s)=a\,b^{\alpha s}$ for $s\geq1$
with $a=1.2$ and $b=e$, and $\alpha=0.2$ and $h\left(0\right)=0$.
We set costs: $\theta(\overline{\mu},\,\overline{\lambda})=\theta(\overline{\mu},\,\underline{\lambda})=110$,
$\theta(\underline{\mu},\,\overline{\lambda})=\theta(\underline{\mu},\,\underline{\lambda})=90$,
$\beta(\overline{\mu},\,\overline{\lambda})=60$, $\beta(\overline{\mu},\,\underline{\lambda})=30$,
$\beta(\underline{\mu},\,\overline{\lambda})=20$, and $\beta(\underline{\mu},\,\underline{\lambda})=70$.
\end{itemize}
In this setting, the router pays the service provider more when the
service rate is higher. Also, the router receives higher reward when
both players choose higher rates or lower rates. The router receives
lower reward when the admission and service rates do not match.

We conduct three experiments, where all risk-aware players\textquoteright{}
use CVaR. The CVaR for player $i$ is
\begin{align*}
&\textrm{CVaR}{}_{\alpha^{i}}(X)
:=\min_{\eta\in\mathbb{R}}\left\{ \eta+\frac{1}{1-\alpha^{i}}\mathbb{E}\left[\max\left\{ X-\eta,\,0\right\} \right]\right\} ,
\end{align*}
where $\alpha^{i}\in[0,\,1)$ is the risk tolerance for player $i$. 

When implementing RaNashQL, we use the Lemke-Howson method to compute
the Nash equilibria of stage games, and we update the $Q$-values based on the first Nash equilibrium
generated from the method. We run our experiments in Matlab R2015a on a computer with
an Intel Core i7 2.30GHz processor, 8GM RAM, running the 64-bit Windows
8 operating system. 

\subsection{Multilinear Systems \label{sec:Multilinear-System}}

In this section, we only focus on stochasticity from state transitions
and derive a multilinear system formulation for risk-aware Markov perfect equilibria. This is
to facilitate comparison with the robust Markov perfect equilibria in\cite{Kardes2011}. All $\rho^{i}$
are taken to be CVaR. Corresponding to CVaR, we define
\begin{align}
\mathcal{M}_{s}^{i}(P_{s})\equiv\Big\{&\mu\text{ : }-e^{\top}\mu+\frac{P_{s}}{1-\alpha^{i}}\geq0, \nonumber
\\
& e^{\top}\mu=1,\,\mu\geq0\Big\} .\label{General set}
\end{align}
The set (\ref{General set}) conforms to Assumption 1(ii). Suppose Assumption 1 holds, then $x^{*}\in\mathcal{X}$
is a risk-aware Markov perfect equilibrium if and only if $(x,\,v)\in\mathcal{X}\times\mathcal{V}$
is a solution of:
\begin{equation}
x_{s}^{i}\in\text{argmin}_{u_{s}^{i}\in\mathcal{P}(\mathcal{A}^{i})}\max_{\mu\in\mathcal{M}_{s}^{i}(P_{s})}\langle\mu,\,C_{s}^{i}(v^{i})\rangle,\label{PR}
\end{equation}
for all $s\in\mathcal{S},\,i\in\mathcal{I}$. Since we only consider the risk from the stochastic state transitions,
Problem (\ref{PR}) is equivalent to the system, for all $s\in\mathcal{S},\,i\in\mathcal{I}$, 
\begin{align}
x_{s}^{i}\in\text{argmin}_{u_{s}^{i}\in\mathcal{P}(\mathcal{A}^{i})}\max_{\mu\in\mathcal{M}_{s}^{i}(P_{s})} \sum_{a\in\mathcal{A}}\left[u_{s}^{i}(a^{i})\left(\Pi_{j\ne i}x_{s}^{j}(a^{j})\right)\right]
\left[c^{i}\left(s,\,a\right)+\gamma\sum_{k\in\mathcal{S}}P(k\,|\,s,\,a)v^{i}\left(k\right)\right].\label{PR2}
\end{align}
Formulation (\ref{PR2}) can be further rewritten as
\begin{align}
x_{s}^{i}\in&\text{argmin}_{u_{s}^{i}\in\mathcal{P}\left(\mathcal{A}^{i}\right)}\Bigg\{ q_{s}^{i}\text{ : } \nonumber
\\
& q_{s}^{i}\geq\max_{\mu\in\mathcal{M}_{s}^{i}(P_{s})}\sum_{a\in\mathcal{A}}\left[u_{s}^{i}(a^{i})\left(\Pi_{j\ne i}x_{s}^{j}(a^{j})\right)\right] \nonumber
\\
& \left[c^{i}\left(s,\,a\right)+\gamma\,\sum_{k\in\mathcal{S}}P(k\,|\,s,\,a)v^{i}\left(k\right)\right]\Bigg\} .\label{PR3}
\end{align}
We introduce the following notation for our multilinear system formulation:
\begin{enumerate}
\item Let $E_{s}^{i}(x_{s}^{-i},\,C^{i})\in\mathbb{R}^{|\mathcal{A}^{i}|{}^{|\mathcal{I}|-1}\times|\mathcal{A}^{i}|}$
denote the matrix whose rows are given by the vector
\begin{equation}
\left[\prod_{j\ne i}x_{s}^{j}(a^{j})c^{i}(s,\,(a^{-i},\,a^{i}))\right]_{a^{i}\in\mathcal{A}^{i}}.\label{E}
\end{equation}
\item Define the vector $z_{s}^{i}\in\mathbb{R}^{|\mathcal{S}|\,|\mathcal{A}|}$
as
\begin{equation}
		z_{s}^{i}:=\left[\prod_{j\ne i}u_{s}^{i}(a^{i})\left(\Pi_{j\ne i}x_{s}^{j}(a^{j})\right)v^{i}(k)\right]_{a^{j}\in\mathcal{A}^{j},\,a^{i}\in\mathcal{A}^{i},\,k\in\mathcal{S}}.\label{Y}
\end{equation}
\item Let $Y_{s}^{i}(x_{s}^{-i},\,v^{i})\in\mathbb{R}^{|\mathcal{S}|\,|\mathcal{A}|\times|\mathcal{A}^{i}|}$
be the matrix such that
\begin{equation}
Y_{s}^{i}(x_{s}^{-i},\,v^{i})u_{s}^{i}=z_{s}^{i}.\label{eq:Y}
\end{equation}
\item Let $t_{s}^{i}:=[t^{i}(k\,|\,s,\,a)]_{a\in\mathcal{A},\,k\in\mathcal{S}}$
be probability distributions that satisfy
\[
\sum_{k\in\mathcal{S}}\left[t^{i}(k\,|\,s,\,a)\right]=1,\,\forall s\in\mathcal{S},\,a\in\mathcal{A}.
\]
\item Finally, let
\[
T^{i}(x):=\left[\sum_{a\in\mathcal{A}}\prod_{\begin{array}{c}
i\in\mathcal{I}\end{array}}x_{s}^{i}(a^{i})t^{i}(k\,|\,s,\,a)\right]_{s,\,k\in\mathcal{S}},
\]
and denote the $s^{th}$ row of $T^{i}(x)$ by $[t_{s}^{i}(x)]^{\top}$.
\end{enumerate}
The next theorem uses the strategy of \cite{kardecs2011discounted}
to give a multilinear system formulation for the equilibrium conditions
(\ref{PR}) (which correspond to a risk-aware Markov perfect equilibrium).
\begin{theorem}
\label{Multilinear -1} A stationary strategy $x$ is a CVaR risk-aware
Markov perfect equilibrium point with value $\left\{ v^{i}\right\} _{i\in\mathcal{I}}$
if and only if for all $i\in\mathcal{I}$ and $s\in\mathcal{S}$,
there exist $m_{s}^{i},\,n_{s}^{i}\in\mathbb{R}^{|\mathcal{A}|},\,t_{s}^{i}\in\mathbb{R}^{|\mathcal{S}|\,|\mathcal{A}|}$
such that for any $a\in\mathcal{A}$, and for $h\in\mathcal{A}$,
$(v^{i},\,x_{s},\,m_{s}^{i},\,n_{s}^{i},\,t_{s}^{i})$ satisfies
\begin{align*}
v^{i}(s)=\, & e^{\top}E_{s}^{i}(x_{s}^{-i},\,C^{i})x_{s}^{i}+\gamma\left[t_{s}^{i}(x)\right]^{\top}v^{i}\\
v^{i}(s)\leq\, & e_{h}^{\top}\left[E_{s}^{i}(x_{s}^{-i},\,C^{i})\right]^{\top}e
\\
&+\gamma e_{h}^{\top}Y_{s}^{i}(x_{s}^{-i},\,v^{i})t_{s}^{i},\\
v^{i}(s)\geq\, & [-\frac{P_{s}}{1-\alpha^{i}}]^{\top}n_{s}^{i}
\\
& +e^{\top}m_{s}^{i}+e^{\top}E_{s}^{i}(x_{s}^{-i},\,C^{i})u_{s}^{i},\\
\gamma Y_{s}^{i}(x_{s}^{-i},\,v^{i})u_{s}^{i}\leq\, & -e^{\top}n_{s}^{i},\\
-e^{\top}t_{s}^{i}\leq\, & -\frac{P_{s}}{1-\alpha^{i}},\\
e^{\top}t_{s}^{i}=\, & 1,\\
e^{\top}x_{s}^{i}=\, & 1,\\
x_{s}^{i}\geq\, & 0,\\
n_{s}^{i}\leq\, & 0,
\end{align*}
where $e_{h}$ is the $h^{th}$ unit column vector of dimension $|\mathcal{A}|$,
$e^{\top}$ is a row vector of all ones of appropriate dimension,
$E_{s}^{i}(x_{s}^{-i},\,C^{i})$ is obtained from formulation (\ref{E})
and $Y_{s}^{i}(x_{s}^{-i},\,v^{i})$ is the matrix given by Eq.
(\ref{eq:Y}).
\end{theorem}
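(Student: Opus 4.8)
The plan is to prove the equivalence by viewing each player's best-response problem as a nested pair of linear programs and invoking LP strong duality. For fixed $x_s^{-i}$ and $v^i$, the equilibrium condition (\ref{PR}) requires $x_s^i$ to minimize over the simplex $\mathcal{P}(\mathcal{A}^i)$ the inner worst-case expected cost-to-go $\max_{\mu\in\mathcal{M}_s^i(P_s)}\langle\mu,\,C_s^i(v^i)\rangle$. Since $\mathcal{M}_s^i(P_s)$ is the polyhedron (\ref{General set}) with constraints $\mu\le P_s/(1-\alpha^i)$, $e^\top\mu=1$, and $\mu\ge0$, this inner maximization is a linear program in $\mu$. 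First I would write down its LP dual, assigning the multiplier $n_s^i\le0$ to the inequality $-e^\top\mu+P_s/(1-\alpha^i)\ge0$ and $m_s^i$ to the normalization $e^\top\mu=1$; strong duality then replaces the inner maximum by a minimization over $(m_s^i,\,n_s^i)$ and, via complementary slackness, identifies the primal optimizer with the worst-case reweighted continuation vector $t_s^i$, which by construction satisfies $e^\top t_s^i=1$ and $-e^\top t_s^i\le -P_s/(1-\alpha^i)$.

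Next I would encode the outer best-response minimization using the epigraph reformulation (\ref{PR3}): $x_s^i$ is a best response exactly when the optimal value $v^i(s)$ is attained at $x_s^i$ and remains a valid lower bound against every unilateral deviation. This yields the value equation $v^i(s)=e^\top E_s^i(x_s^{-i},\,C^i)x_s^i+\gamma[t_s^i(x)]^\top v^i$, recording that at equilibrium the value equals the immediate expected cost plus the discounted continuation value under the worst-case transitions, together with the family of inequalities $v^i(s)\le e_h^\top[E_s^i]^\top e+\gamma\,e_h^\top Y_s^i t_s^i$ over pure actions $h$, which expresses optimality of $x_s^i$ against pure-strategy deviations. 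The dual-feasibility and weak-duality relations $v^i(s)\ge[-P_s/(1-\alpha^i)]^\top n_s^i+e^\top m_s^i+e^\top E_s^i u_s^i$ and $\gamma\,Y_s^i u_s^i\le -e^\top n_s^i$ then certify that $t_s^i$ and $(m_s^i,\,n_s^i)$ form a matched primal-dual pair for the inner CVaR program, while $e^\top x_s^i=1$, $x_s^i\ge0$, and $n_s^i\le0$ record primal and dual feasibility.

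For the forward direction I would assemble these pieces: given an equilibrium $x$, the value function $v$ is well-defined by the contraction argument of Proposition \ref{Theorem 3.2-1}, the inner LP's strong duality supplies $(m_s^i,\,n_s^i,\,t_s^i)$ attaining the CVaR value, and best-response optimality supplies the deviation inequalities, so the full system holds. For the converse, given any $(x,\,v,\,m,\,n,\,t)$ satisfying the system, I would read the constraints backwards: the value equation together with the deviation inequalities and the weak-duality relations force $x_s^i$ to solve the best-response problem, since primal feasibility, dual feasibility, and matching objective values imply optimality by LP duality. Hence $x$ satisfies (\ref{PR}) for all $s$ and $i$, and is therefore a CVaR risk-aware Markov perfect equilibrium with value $\{v^i\}_{i\in\mathcal{I}}$.

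The main obstacle I anticipate is the bookkeeping that keeps the inner-LP duality consistent with the multilinear couplings. The matrices $E_s^i$, $Y_s^i$, and $T^i(x)$ all carry products of the opponents' mixed strategies $\Pi_{j\ne i}x_s^j(a^j)$ and, in $Y_s^i$ and $T^i$, products with the value entries $v^i(k)$, so the resulting system is genuinely multilinear rather than linear; one cannot appeal to convexity of the combined system and must instead verify the equivalence coordinate-by-coordinate through the LP optimality conditions, with complementary slackness linking $t_s^i$ to $n_s^i$. The most delicate point is ensuring that the sign conventions are compatible and that $t_s^i$ is simultaneously the CVaR worst-case distribution for the inner program and the continuation-weighting row $[t_s^i(x)]^\top$ of $T^i(x)$ appearing in the value equation; these two roles coincide precisely at $u_s^i=x_s^i$, and making that identification rigorous is where the argument requires the most care.
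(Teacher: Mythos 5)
Your proposal matches the paper's proof: the paper likewise dualizes the inner CVaR maximization over $\mu$ with multipliers $(m_{s}^{i},\,n_{s}^{i})$, merges the result with the outer best-response minimization into the single epigraph LP (\ref{p})--(\ref{p4}), and then dualizes that LP to produce $(v^{i}(s),\,t_{s}^{i})$, with strong duality (justified by the nonempty bounded feasible regions) giving both directions of the equivalence. Your identification of $t_{s}^{i}$ as simultaneously the inner worst-case distribution and the continuation weighting in $T^{i}(x)$, handled coordinate-by-coordinate for fixed $(x_{s}^{-i},\,v^{i})$, is exactly how the paper's multilinear system arises, so the approaches are essentially identical.
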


\begin{proof}
(Proof sketch) (Step 1) First we reformulate the inner maximization
(primal problem) in Problem (\ref{PR2}) as
\[
\max_{\mu\geq0}\,\left\{ \gamma\,\left\langle \mu,\,v^{i}\right\rangle \right\} +e^{\top}E_{s}^{i}(x_{s}^{-i},\,C^{i})u_{s}^{i},
\]
then take the dual of the first maximization term 
\[
	\max_{\mu\geq0}\left\{ \gamma\,\left\langle \mu,\,v^{i}\right\rangle \text{ : }-e^{\top}\mu+\frac{P_{s}}{1-\alpha^{i}}\geq0,\,e^{\top}\mu=1,\,\mu\geq0\right\} .
	\]
The dual problem is
\begin{align*}
\min_{n_{s}^{i}\geq0,\,m_{s}^{i}}\Bigg\{ &-\left(n_{s}^{i}\right)^{\top}\left(\frac{P_{s}}{1-\alpha^{i}}\right)-m_{s}^{i}\text{ : }
\\
&\gamma Y_{s}^{i}(x_{s}^{-i},\,v^{i})u_{s}^{i}+\left(n_{s}^{i}\right)^{\top}e+m_{s}^{i}e\leq0\Bigg\} ,
\end{align*}
which can be rewritten as
\begin{align*}
\min_{n_{s}^{i}\leq0,\,m_{s}^{i}}\Bigg\{ &\left(n_{s}^{i}\right)^{\top}\left(\frac{P_{s}}{1-\alpha^{i}}\right)+m_{s}^{i}\text{ : }
\\
& \gamma Y_{s}^{i}(x_{s}^{-i},\,v^{i})u_{s}^{i}-\left(n_{s}^{i}\right)^{\top}e-m_{s}^{i}e\leq0\Bigg\} .
\end{align*}
The primal problem has a non-empty bounded feasible region (by assumption)
so it attains its optimal value. Strong duality then holds, and so the dual problem
also attains its optimal value and the optimal values are equal. 

(Step 2) We combine the two minimization objectives to rewrite Problem
(\ref{PR}), use formulation (\ref{PR3}), and then derive a single
linear programming problem as
\begin{align}
\min_{u_{s}^{i},q_{s}^{i},m_{s}^{i},n_{s}^{i}}\, & q_{s}^{i}\label{p}\\
\textrm{s.t.}\, & q_{s}^{i}\geq\left(n_{s}^{i}\right)^{\top}\left(\frac{P_{s}}{1-\alpha^{i}}\right)+m_{s}^{i} \nonumber
\\
& +\gamma Y_{s}^{i}(x_{s}^{-i},\,v^{i})u_{s}^{i},\label{p1}\\
\, & \left(n_{s}^{i}\right)^{\top}e-m_{s}^{i}e\geq\gamma Y_{s}^{i}(x_{s}^{-i},\,v^{i})u_{s}^{i},\label{p2}\\
\, & e^{T}u_{s}^{i}=1,\label{p3}\\
\, & u_{s}^{i}\geq0,\,n_{s}^{i}\leq0.\label{p4}
\end{align}

(Step 3) For the final step, we take the dual of Problem (\ref{p})-(\ref{p4}).
And the resulting dual problem is
\begin{align*}
\max_{v^{i}(s),\,t_{s}^{i}}\, & v^{i}(s)\\
\textrm{s.t.}\, & e^{\top}\,t_{s}^{i}\geq\frac{P_{s}}{1-\alpha^{i}},\\
\, & e^{\top}t_{s}^{i}=1,\\
\, & v^{i}(s)\leq\,e_{h}^{\top}\left[E_{s}^{i}(x_{s}^{-i},\,C^{i})\right]^{\top}1+\gamma e_{h}^{\top}Y_{s}^{i}(x_{s}^{-i},\,v^{i})t_{s}^{i},\quad h=1,\,...,\,A.
\end{align*}
The desired multilinear system follows by strong duality since the
primal feasible region is non-empty and bounded.
\end{proof}
\begin{remark}
Nonlinear optimization methods have been used to solve multilinear
systems that arise from equilibrium computation of one-shot games with up to four players and
four actions per player in less than five minutes (see \cite{Aghassi2006,Datta2003}).
Homotopy methods have been used to solve multilinear systems that
arise from complete information stochastic games for two players,
two actions per player, and five states in less than one minute, see
\cite{Herings2004}. We adopt the methodology proposed in \cite[Section 5.2.2]{Aghassi2006}.
We first cast the multilinear constraints into an appropriate penalty
function, and then solve the resulting unconstrained minimization
problem with an interior point algorithm. This procedure will converge
a local minimum, but not necessarily to a global minimum.
\end{remark}

\subsection{Additional Experiments}
In the section, we provide supplementary materials for Experiment I and II. 
\\
\\
\textbf{Experiment I: }We compare RaNashQL for risk-aware Markov game
with Nash $Q$-learning in \cite{hu2003nash} for risk-neutral Markov
game, in terms of their convergence rates. Given any precision $\epsilon>0$,
we record the iteration count $n$ until the convergence criterion
$\|Q_{n,\,T}^{i}-Q_{\ast}^{i}\|_{2}\leq\epsilon$ is satisfied.
Here we choose $T=10$ and we choose $N=1\times10^{5}$ for RaNashQL
and $N=1\times10^{6}$ for Nash $Q$-learning, such that both methods
have the same total number of iterations. When $\epsilon$ is extremely
small e.g., $\epsilon=0.001$, the total number of iterations for
RaNashQL and Nash $Q$-learning for the two players are respectively:
$983443$ (Nash $Q$-learning, Service provider), $936761$ (Nash
$Q$-learning, Router), $999991$ (RaNashQL, Service provider and
Router), which are relatively equal. Moreover, Figure 1 shows that
the total number of iterations for Nash $Q$-learning decrease dramatically
as the increase of precision $\epsilon$, which reveals that RaNashQL
is more computationally expensive than Nash $Q$-learning in terms
of achieving the same convergence criterion.

\begin{figure*}[h]
\begin{centering}
\includegraphics[scale=0.6]{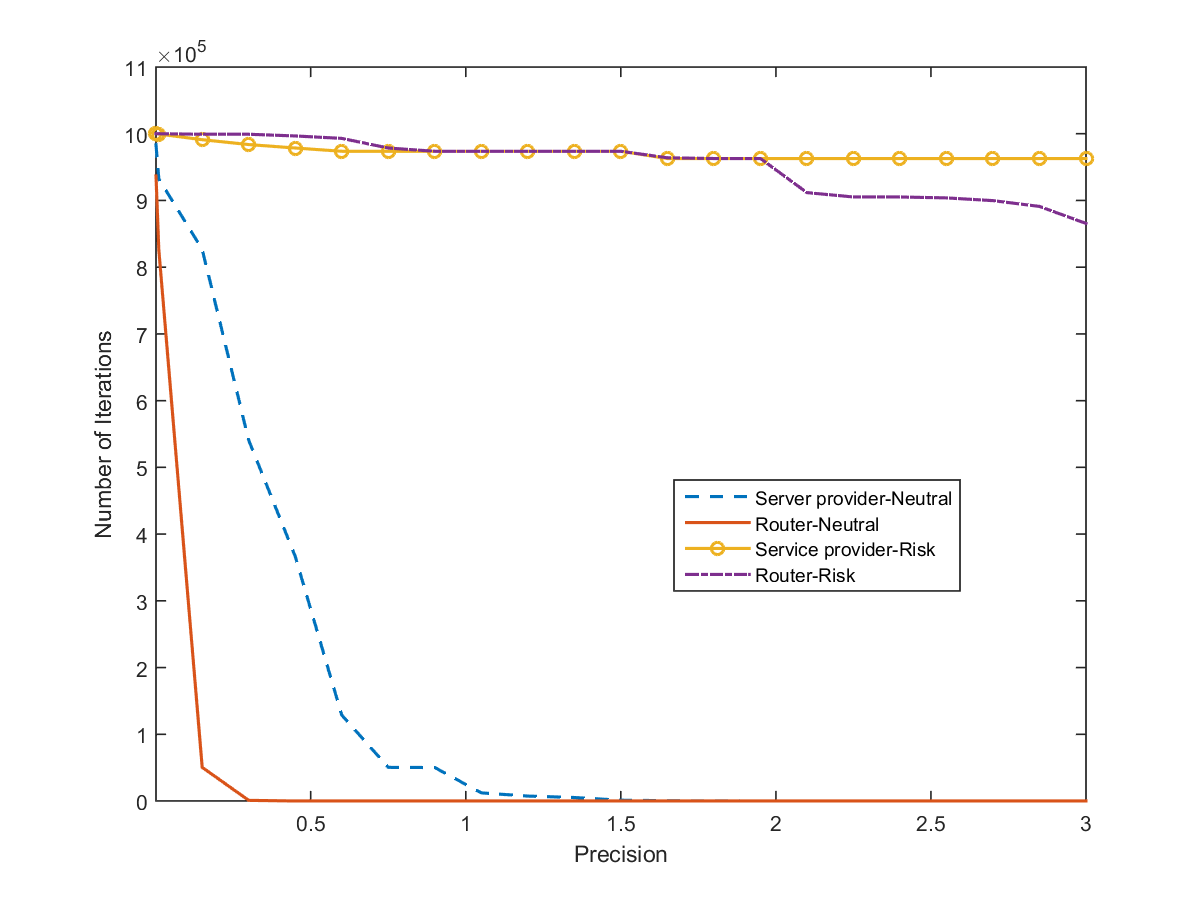}
\par\end{centering}
\centering{}\caption{Comparison between NashQL and RaNashQL}
\end{figure*}

Figure 2 presents the Markov perfect equilibrium for the risk-neutral
and risk-aware cases. It shows the equilibrium shifting when considering
the risk-awareness of players. It also shows that the both risk-neutral
and risk-aware Markov perfect equilibrium are sensitive to the perturbations
in the service rates, and risk-aware strategies for both players highly
fluctuate with the change of state (number of packet in the queuing
system). We also study how the risk tolerance level $\alpha^{i}$
(See Table 2) affects the risk-aware Markov perfect equilibrium, which
also shows the risk-aware Markov perfect equilibrium fluctuates with
the change of the risk tolerance level of CVaR. 

\begin{table}[H]
\begin{centering}
\begin{tabular}{ccc}
	\hline 
	& Service Provider ($\alpha^{1}$) & Router ($\alpha^{2}$)\tabularnewline
	\hline 
	\hline 
	Scenario 1 & 0.1 & 0.1\tabularnewline
	\hline 
	Scenario 2 & 0.2 & 0.2\tabularnewline
	\hline 
	Scenario 3 & 0.1 & 0.2\tabularnewline
	\hline 
	Scenario 4 & 0.2 & 0.1\tabularnewline
	\hline 
\end{tabular}
\par\end{centering}
\caption{Risk Tolerance Level $\alpha^{i}$}
\end{table}

\begin{figure*}
\begin{centering}
\includegraphics[scale=0.4]{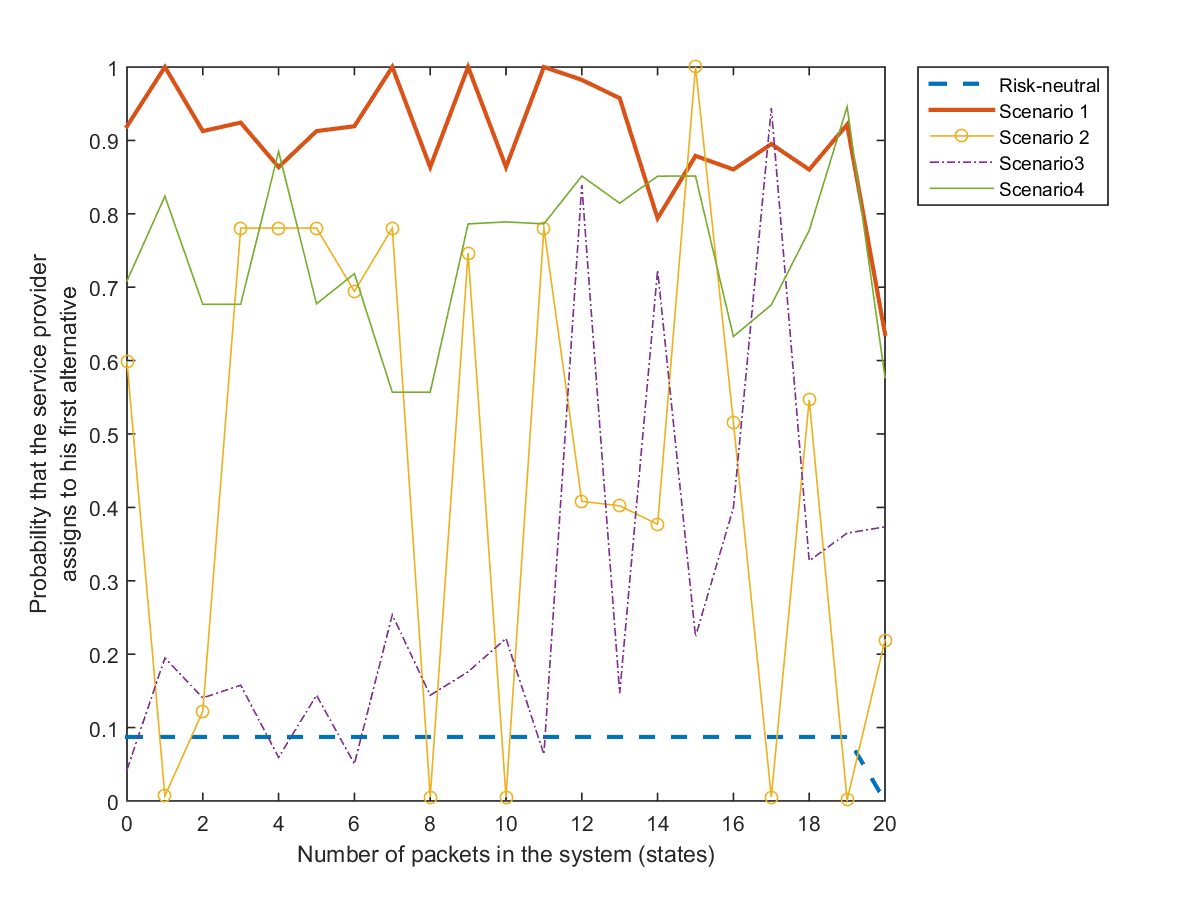}\includegraphics[scale=0.4]{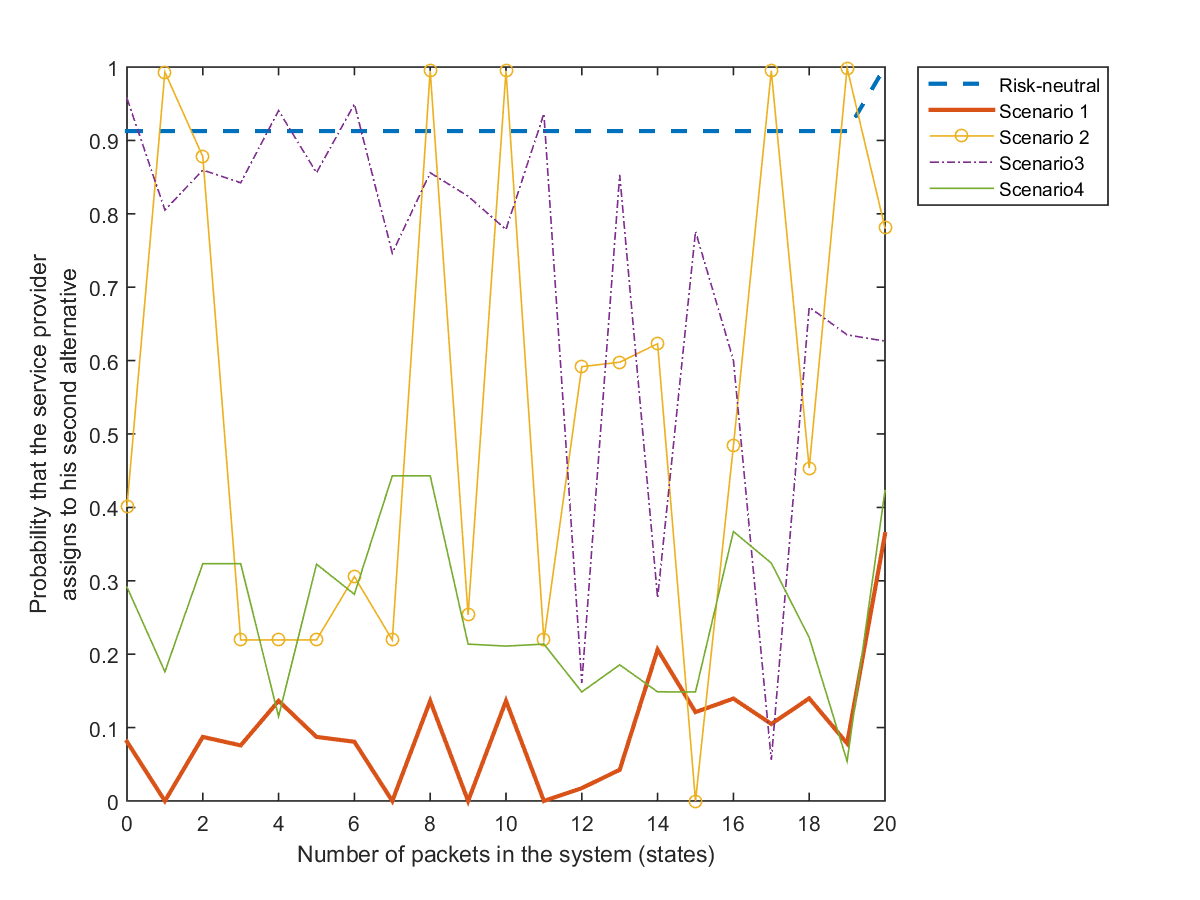}
\par\end{centering}
\begin{centering}
\includegraphics[scale=0.4]{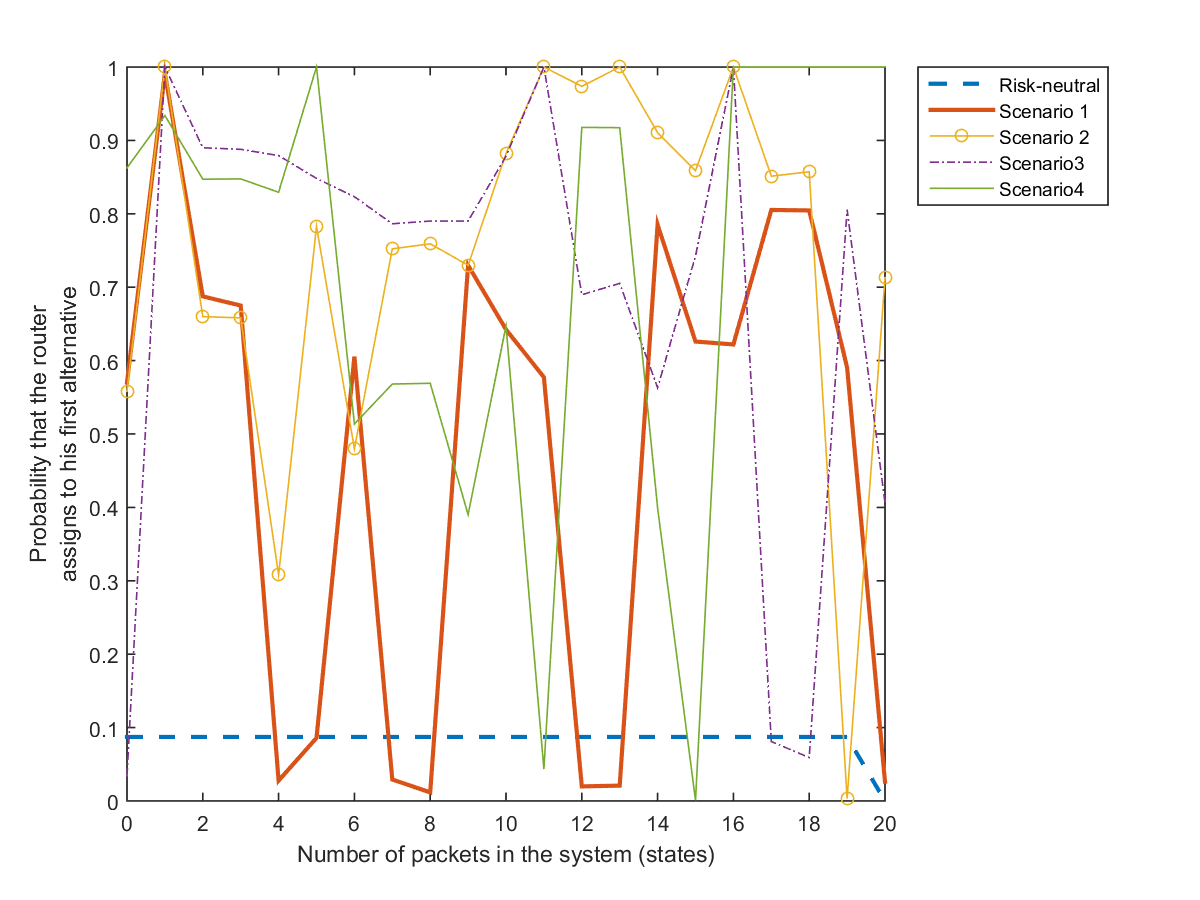}\includegraphics[scale=0.4]{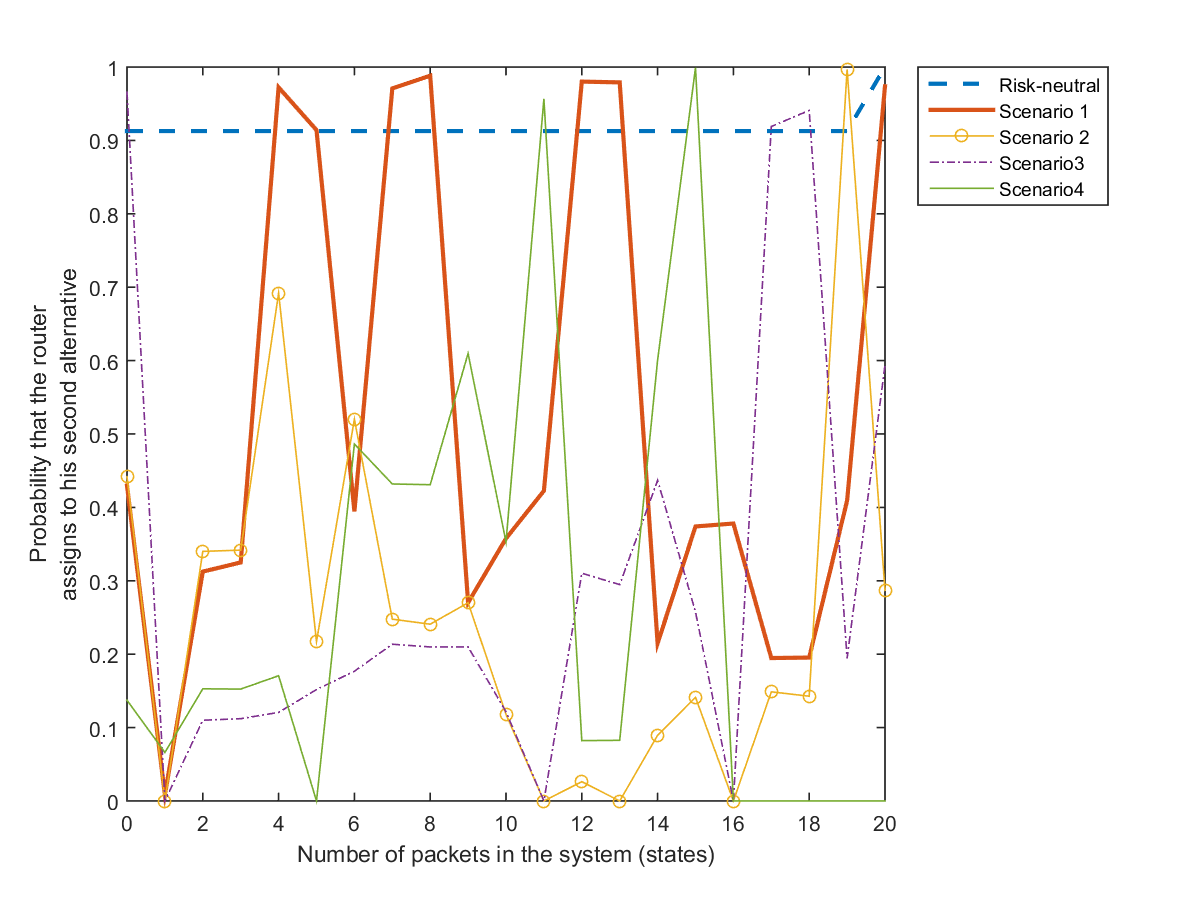}
\par\end{centering}
\caption{Comparison of Risk-Neutral and Risk-aware Markov Perfect Equilibrium}
\end{figure*}
Next, we evaluate the discounted cost under risk-neutral and risk-aware
Markov perfect equilibrium in simulation ($1000$ complete runs of the algorithm to compute the entire risk-aware Markov perfect equilibria). The risk
tolerance levels are selected as $\alpha^{1}=\alpha^{2}=0.1$, for
the risk-aware (CVaR) method in Table 3 here. Table 3 shows that considering
risk awareness will significantly increase the variance of the discounted
cost, which is contrary to expectation. The possible reason is the higher
fluctuation of risk-aware strategies with the change of state (number
of packet in the queuing system) than risk-neutral strategies.

\begin{table*}
\begin{centering}
\begin{tabular}{cccccc}
	\hline 
	Player & Method & Mean  & Variance & $5$\%-CVaR & $10$\%-CVaR\tabularnewline
	\hline 
	\hline 
	\multirow{2}{*}{Service Provider} & Risk-neutral & $-22.22$ & $1.4736e-06$ & $-22.22$ & $-22.22$\tabularnewline
	\cline{2-6} \cline{3-6} \cline{4-6} \cline{5-6} \cline{6-6} 
	& Risk-aware (CVaR) & $-77.78$ & $407.84$ & $-69.34$ & $-68.26$\tabularnewline
	\hline 
	\multirow{2}{*}{Router} & Risk-neutral & $37.48$ & $7.32$ & $37.94$ & $38.18$\tabularnewline
	\cline{2-6} \cline{3-6} \cline{4-6} \cline{5-6} \cline{6-6} 
	& Risk-aware (CVaR) & $83.68$ & $491.20$ & $86.03$ & $87.54$\tabularnewline
	\hline 
\end{tabular}
\par\end{centering}
\caption{Simulation for Risk-neutral Strategies and Risk-aware Strategies ($\alpha^{1}=\alpha^{2}=0.1$)}
\end{table*}
\textbf{Experiment II:} In this experiment, we consider a special
case where the risk only comes from the stochasticity from state transitions
(this setting is basically a risk-aware interpretation of \cite{kardecs2011discounted}
where the ambiguity is over the transition kernel). In this special
case, we can compute risk-aware Markov equilibrium using a multilinear
system as detailed in Section \ref{sec:Multilinear-System}. We evaluate
performance in terms of the relative error
\[
\frac{\sqrt{\sum_{s\in\mathcal{S}}\left(Nash^{i}(Q_{n,\,T}^{j}(s))_{j\in\mathcal{I}}-v_{\ast}^{i}(s)\right)^{2}}}{\sqrt{\sum_{s\in\mathcal{S}}v_{\ast}^{i}(s)^{2}}},\,n\leq N.
\]
In this experiment, we take the risk measure as $10\%$-CVaR. The
multilinear system is solved by an interior point algorithm within
$5\times10^{7}$ maximum function evaluation and $1\times10^{5}$
maximum iterations, and it converges to a local optimal solution in
$10471.975$ seconds. For RaNashQL, we choose $T=10$ and $N=2\times10^{6}$,
and the total implementation time for RaNashQL is $10245.314$ seconds.
The following Figure 3 validates the almost sure convergence of RaNashQL
to the service provider's strategy. For the router, the relative error
is large (around $190$\%). One possible reason is that RaNashQL converges
to different equilibria compared to the one obtained by the multilinear system solver. We
see that RaNashQL possesses superior computational performance
than interior point algorithm for this task, since
the relative error of service provider is within $25\%$ in $1\times10^{6}$
iterations, and the implementation time will be $5122.657$ seconds. 

\begin{figure*}
\begin{centering}
\includegraphics[scale=0.4]{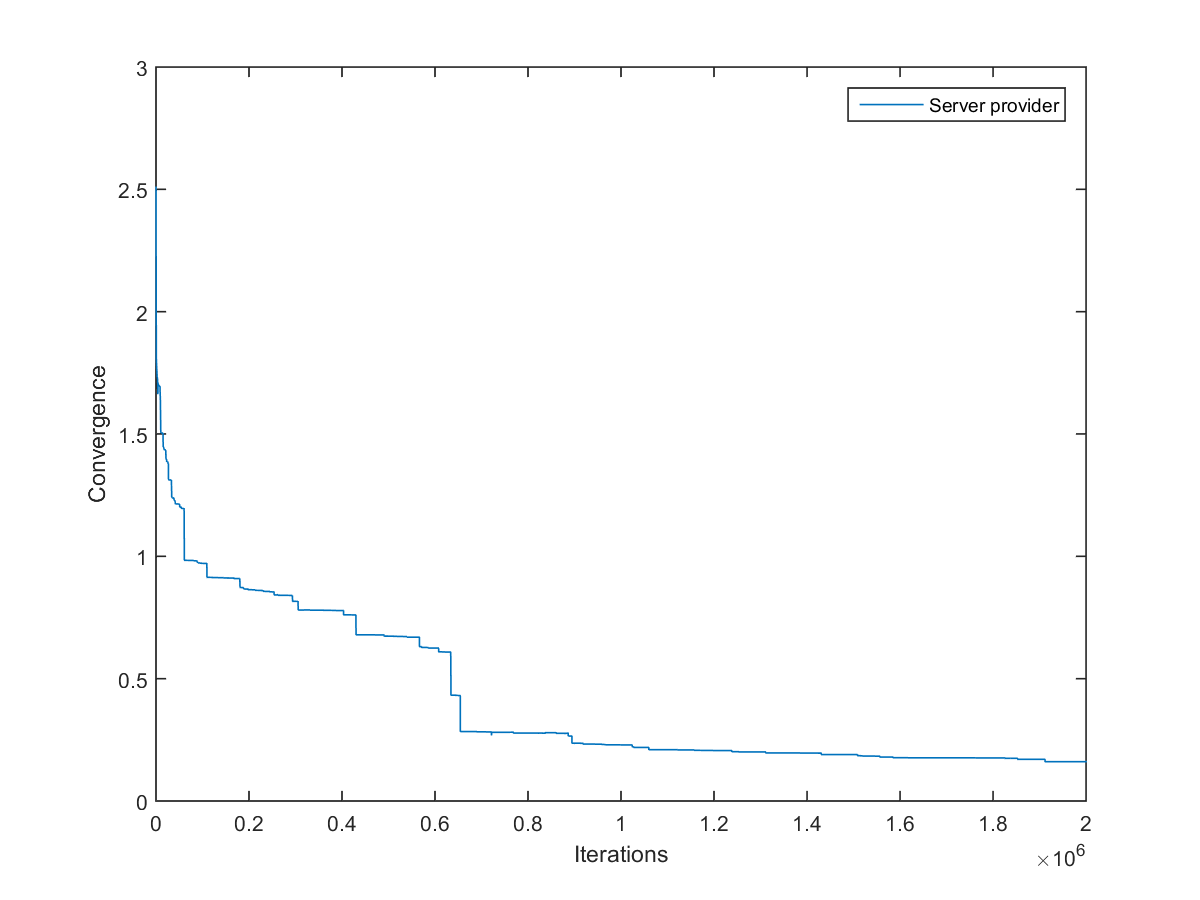}\includegraphics[scale=0.4]{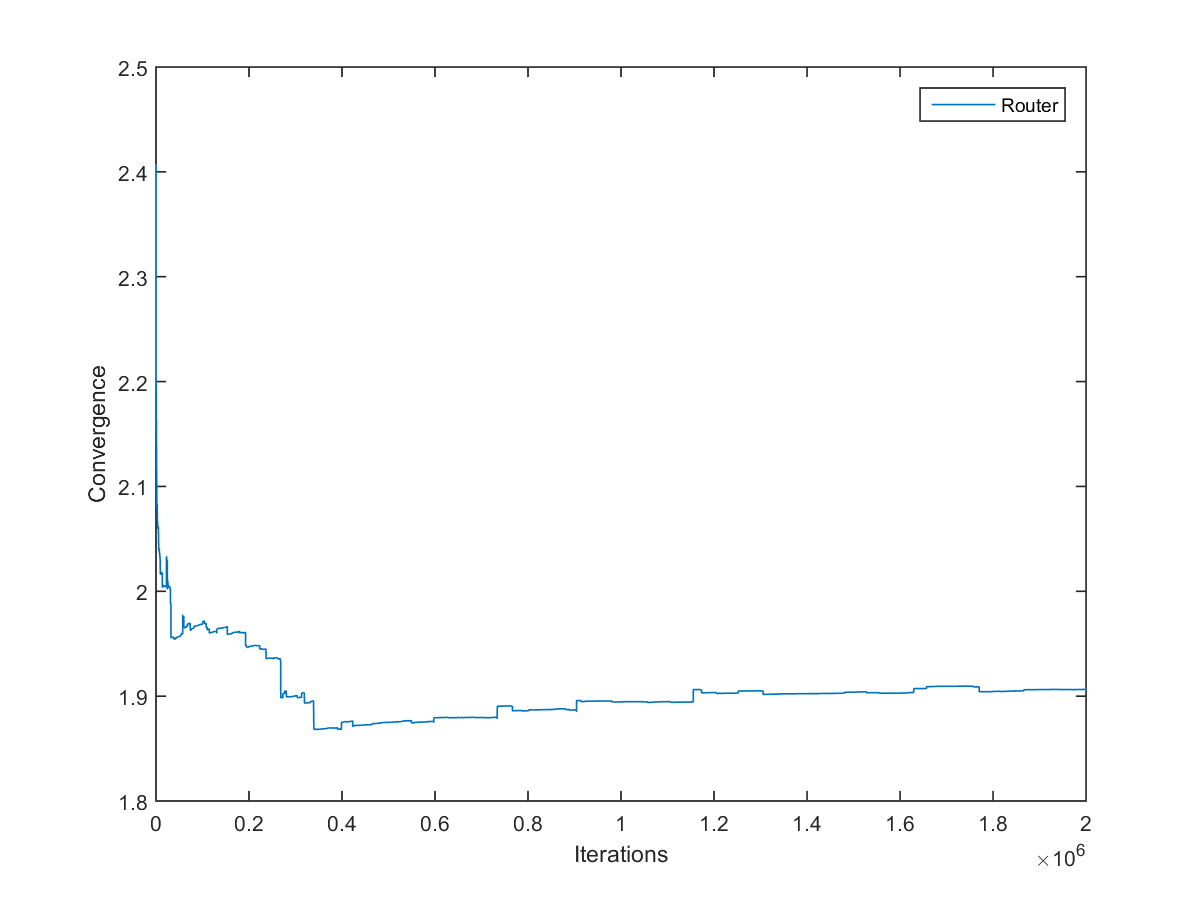}
\par\end{centering}
\caption{Almost Sure Convergence of RaNashQL}
\end{figure*}

\end{document}